\renewcommand*{\@fnsymbol}[1]{\ensuremath{\ifcase#1\or \star\or \dagger\or \ddagger\or
       \mathsection\or \mathparagraph\or \|\or **\or \dagger\dagger
       \or \ddagger\ddagger \else\@ctrerr\fi}}
\DeclareSymbolFont{cmmathcal}{OMS}{cmsy}{m}{n}
\DeclareSymbolFontAlphabet{\mathcal}{cmmathcal}
\newcommand{\state}{\ensuremath{s} }
\newcommand{\states}{\ensuremath{S} }
\newcommand{\statesOne}{\ensuremath{S_{1}} }
\newcommand{\statesTwo}{\ensuremath{S_{2}} }
\newcommand{\initState}{\ensuremath{s_{init}} }
\newcommand{\edges}{\ensuremath{E} }
\newcommand{\dimension}{\ensuremath{k} }
\newcommand{\weight}{\ensuremath{w} }
\newcommand{\largestW}{\ensuremath{W} }
\newcommand{\largestWenc}{\ensuremath{V} }
\newcommand{\game}{\ensuremath{G} }
\newcommand{\gameFull}{\ensuremath{\game = \left( \statesOne, \statesTwo, \initState, \edges, \dimension, \weight\right)} }
\newcommand{\parity}{\ensuremath{p} }
\newcommand{\parityFull}{\ensuremath{p\colon \states \rightarrow \nat} }
\newcommand{\gameParFull}{\ensuremath{G_{p} = \left( \statesOne, \statesTwo, \initState, \edges, \dimension, \weight, \parity\right)} }
\newcommand{\gamePar}{\ensuremath{G_{p}} }
\newcommand{\integ}{\ensuremath{\mathbb{Z}} }
\newcommand{\nat}{\ensuremath{\mathbb{N}} }
\newcommand{\rat}{\ensuremath{\mathbb{Q}} }
\newcommand{\reals}{\ensuremath{\mathbb{R}} }
\newcommand{\player}{\ensuremath{\mathcal{P}} }
\newcommand{\playerOne}{\ensuremath{\mathcal{P}_{1}} }
\newcommand{\playerTwo}{\ensuremath{\mathcal{P}_{2}} }
\newcommand{\play}{\ensuremath{\pi} }
\newcommand{\plays}{\ensuremath{\textsf{Plays}(\game)} }
\newcommand{\prefixes}{\ensuremath{\textsf{Prefs}(\game)} }
\newcommand{\prefixesArg}[1]{\ensuremath{\textsf{Prefs}_{#1}(\game)} }
\newcommand{\playsPar}{\ensuremath{\textsf{Plays}(\gamePar)} }
\newcommand{\prefixesArgPar}[1]{\ensuremath{\textsf{Prefs}_{#1}(\gamePar)} }
\newcommand{\prefix}{\ensuremath{\rho} }
\newcommand{\finplay}{\ensuremath{\eta} }
\newcommand{\last}{\ensuremath{\textsf{Last}} }
\newcommand{\first}{\ensuremath{\textsf{First}} }
\newcommand{\el}{\ensuremath{\textsf{EL}} }
\newcommand{\elFull}{\ensuremath{\textsf{EL}(\prefix) = \sum_{i = 0}^{i = n - 1} w(s_{i}, s_{i+1})} }
\newcommand{\mpay}{\ensuremath{\textsf{MP}} }
\newcommand{\mpayFull}{\ensuremath{\textsf{MP}(\play) = \liminf_{n \rightarrow \infty} \frac{1}{n} \el (\play(n))}}
\newcommand{\minpar}{\ensuremath{\textsf{Par}} }
\newcommand{\infPlay}{\ensuremath{\textsf{Inf}} } 
\newcommand{\minparFull}{\ensuremath{\textsf{Par}(\play) = \min \left\lbrace \parity(s) \;\vert\; s \in \infPlay(\play) \right\rbrace } }
\newcommand{\dist}{\ensuremath{\mathcal{D}} }
\newcommand{\strat}{\ensuremath{\lambda} }
\newcommand{\strats}{\ensuremath{\Lambda} }
\newcommand{\stratsPureMemoryless}{\ensuremath{\Lambda^{PM}} }
\newcommand{\stratsPureFinite}{\ensuremath{\Lambda^{PF}} }
\newcommand{\stratsRandomizedMemoryless}{\ensuremath{\Lambda^{RM}} }
\newcommand{\outcomePar}{\ensuremath{\textsf{Outcome}_{\gamePar}} }
\newcommand{\objective}{\ensuremath{\phi} }
\newcommand{\supp}{\ensuremath{\textsf{Supp}} }
\newcommand{\objEL}{\ensuremath{\textsf{PosEnergy}_{\gamePar}(v_{0})} }
\newcommand{\objMP}{\ensuremath{\textsf{MeanPayoff}_{\gamePar}(v)} }
\newcommand{\objMPnoPar}{\ensuremath{\textsf{MeanPayoff}_{\game}(v)} }
\newcommand{\objPar}{\ensuremath{\textsf{Parity}_{\gamePar}} }
\newcommand{\objELFull}{\ensuremath{\objEL = \left\lbrace \play \in \playsPar \;\vert\; \forall\, n \geq 0 : v_{0} + \el(\play(n)) \in \nat^{\dimension} \right\rbrace } }
\newcommand{\objMPFull}{\ensuremath{\objMP = \left\lbrace \play \in \playsPar \;\vert\; \mpay(\play) \geq v\right\rbrace} }
\newcommand{\objParFull}{\ensuremath{\objPar = \left\lbrace \play \in \playsPar \;\vert\; \minpar(\play) \text{ mod } 2 = 0\right\rbrace} }
\newcommand{\event}{\ensuremath{\mathcal{A}} }
\newcommand{\proba}{\ensuremath{\mathbb{P}} }
\newcommand{\expect}{\ensuremath{\mathbb{E}} }
\newcommand{\tree}{\ensuremath{T} }
\newcommand{\DAG}{\ensuremath{D} }
\newcommand{\treeStates}{\ensuremath{Q} }
\newcommand{\node}{\ensuremath{\varsigma} }
\newcommand{\otherNode}{\ensuremath{\vartheta} }
\newcommand{\nodeLess}{\ensuremath{\preceq} }
\newcommand{\nodeEqual}{\ensuremath{\simeq} }
\newcommand{\dagStates}{\ensuremath{Q} }
\newcommand{\treeEdges}{\ensuremath{R} }
\newcommand{\dagEdges}{\ensuremath{R} }
\newcommand{\treeFull}{\ensuremath{\tree = \left( \treeStates, \treeEdges\right)} }
\newcommand{\translation}{\ensuremath{\textsf{Tr}} }
\newcommand{\degree}{\ensuremath{\textsf{degree}} }
\newcommand{\lab}{\ensuremath{\Theta} }
\newcommand{\enAncestors}{\ensuremath{\textsf{EnAnc}} }
\newcommand{\ancestors}{\ensuremath{\textsf{Anc}} }
\newcommand{\oldestEnAnc}{\ensuremath{\textsf{oea}} }
\newcommand{\ancPath}{\ensuremath{\rightsquigarrow} }
\newcommand{\mergeOp}{\ensuremath{\textsf{merge}} }
\newcommand{\initCredit}{\ensuremath{v_{0}} }
\newcommand{\pebblePath}{\ensuremath{\zeta} }
\newcommand{\pebblePathA}{\ensuremath{\alpha} }
\newcommand{\pebblePathB}{\ensuremath{\beta} }
\newcommand{\pebblePathE}{\ensuremath{\xi} }
\newcommand{\pebblePathNew}{\ensuremath{\eta} }
\newcommand{\pebblePathNewA}{\ensuremath{\kappa} }
\newcommand{\pebbleUp}{\ensuremath{\circlearrowleft} }
\newcommand{\dagRoot}{\ensuremath{r} }
\newcommand{\nodeNC}{\ensuremath{\chi} }
\newcommand{\nodeMerge}{\ensuremath{\node_{\textsf{m}}} }
\newcommand{\gadgets}{\ensuremath{K} }
\newcommand{\depth}{\ensuremath{l} }
\newcommand{\width}{\ensuremath{L} }
\newcommand{\winningNodes}{\ensuremath{\textsc{Win}} }
\newcommand{\occ}{\ensuremath{\textsf{occ}} }
\newcommand{\occCirc}{\ensuremath{\occ_{\circ}} }
\newcommand{\freq}{\ensuremath{\textsf{freq}_{\infty}} }
\newcommand{\ergodic}{\ensuremath{\mathcal{M}_{e}} }
\newcommand{\chain}{\ensuremath{\mathcal{M}_{c}} }
\newcommand{\winEP}{\ensuremath{\textsf{Win}} }
\newcommand{\pr}{\ensuremath{\gamma} }
\newcommand{\ec}{\ensuremath{\mathcal{C}} }
\newcommand{\mpvalue}{\ensuremath{\mathsf{val}} }
\newcommand{\maxE}{{\mathbb{C}}}
\newcommand{\Cpre}{{\sf Cpre}}
\newcommand{\minElems}{{\sf Min}_{\preceq}}
\newcommand{\Attr}{{\sf Attr}}
\newcommand{\epSCT}{epSCT}
\newcommand{\branching}{d}
\newcommand{\priorities}{m}
\newcommand{\dagSeq}{\ensuremath{(D_{i})_{0 \leq i \leq n}} }
\newcommand{\algoName}{{\sf CpreFP}}
\title{Strategy Synthesis for Multi-Dimensional Quantitative Objectives}
\author{Krishnendu Chatterjee\inst{1}$^{,}$\thanks{Author supported by Austrian Science Fund (FWF) Grant No P 23499-N23, FWF NFN Grant No S11407 (RiSE), ERC Start Grant (279307: Graph Games), Microsoft faculty fellowship.} \and Mickael Randour\inst{2}$^{,}$\thanks{Author supported by F.R.S.-FNRS. fellowship.} \and Jean-Fran\c{c}ois Raskin\inst{3}$^{,}$\thanks{Author supported by ERC Starting Grant (279499: inVEST).}}
\institute{
IST Austria (Institute of Science and Technology Austria) \\
\and Computer Science Department, Universit\'e de Mons (UMONS), Belgium \\
\and D\'epartement d'Informatique, Universit\'e Libre de Bruxelles (U.L.B.), Belgium
}
\begin{document}

\maketitle


\begin{abstract}
Multi-dimensional mean-payoff and energy games provide the mathematical 
foundation for the quantitative study of reactive systems, and play a central role in the emerging quantitative theory of verification and synthesis. In this work, we study the strategy synthesis problem for games with such multi-dimensional objectives along with a parity condition, a canonical way to express $\omega$-regular conditions. While in general, the winning strategies in such games may require infinite memory, for synthesis the most relevant problem is the construction of a finite-memory winning strategy (if one exists). Our main contributions are as follows. First, we show a tight exponential bound (matching upper and lower bounds) on the memory required for finite-memory winning strategies in both multi-dimensional mean-payoff and energy games along with parity objectives. This significantly improves the triple exponential upper bound for multi energy games (without parity) that could be derived from results in literature for games on VASS (vector addition systems with states). Second, we present an optimal symbolic and incremental algorithm to compute a finite-memory winning strategy (if one exists) in such games. Finally, we give a complete characterization of when finite memory of strategies can be traded off for randomness. In particular, we show that for one-dimension mean-payoff parity games, randomized memoryless strategies are as powerful as their pure finite-memory counterparts.
\end{abstract}

\newcommand{\Set}[1]{\{ #1 \}}

\section{Introduction}\label{introduction}

Two-player games on graphs provide the mathematical 
foundation to study many important problems in computer science.
Game-theoretic formulations have especially proved useful for synthesis \cite{Church62,RamadgeWonham87,pnueli_POPL89}, verification~\cite{AHK02}, refinement~\cite{FairSimulation}, and compatibility 
checking \cite{InterfaceTheories} of reactive systems, as well as 
in analysis of emptiness of automata~\cite{Thomas97}.    

Games played on graphs are repeated games that proceed for an infinite 
number of rounds.
The \textit{state} space of the graph is partitioned into player 1 states
and player 2 states (player 2 is adversary to player 1). 
The game starts at an initial state, and if the current state is a 
player 1 (resp. player 2) state, then player 1 (resp. player 2) chooses an outgoing \textit{edge}. This choice is made according to a \textit{strategy} of the player: given the sequence of visited states, a \textit{pure} (resp. \textit{randomized}) strategy chooses an outgoing edge (resp. probability distribution over outgoing edges). This process of choosing edges is repeated forever, and gives rise to an outcome of the game, called a {\em play}, that consists of the 
infinite sequence of states that are visited. When randomized strategies are used, there is in general not a unique outcome, but a set of possible outcomes, as the choice of edges is stochastic rather than deterministic.

Traditionally, games on graphs have been studied with Boolean 
objectives such as reachability, liveness, $\omega$-regular conditions 
formalized as the canonical parity objectives, strong fairness 
objectives, etc~\cite{GH82,EJ88,EJ91,Zie98,Thomas97,WilkeBook}. 
While games with \emph{quantitative} objectives have been studied 
in the game theory literature~\cite{EM79,ZP95,Mar98}, their application
in synthesis and other problems in verification is quite recent.
The two classical quantitative objectives that are most relevant in 
verification and synthesis are the \emph{mean-payoff} and \emph{energy} 
objectives. 
In games on graphs with quantitative objectives, the game graph is equipped 
with a weight function that assigns integer-valued weights to every edge.
For mean-payoff objectives, the goal of player~1 is to ensure that the
long-run average of the weights is above a threshold.
For energy objectives, the goal of player~1 is to ensure that the sum of
the weights stays above~0 at all times.
In applications of verification and synthesis, the quantitative objectives
that typically arise are (i)~multi-dimensional quantitative objectives 
(i.e., conjunction of several quantitative objectives), e.g., to express 
properties like the average response time between a grant and a request is 
below a given threshold $\nu_1$, and the average number of unnecessary grants 
is below threshold $\nu_2$; and
(ii)~conjunction of quantitative objectives with a Boolean objective, such as a 
mean-payoff parity objective that can express properties like the average 
response time is below a threshold along with satisfying a liveness property.
In summary, the quantitative objectives can express properties related to 
resource requirements, performance, and robustness; 
multiple objectives can express the different, potentially dependent or 
conflicting objectives; and the Boolean objective specifies functional 
properties such as liveness or fairness.
The game theoretic framework of multi-dimensional quantitative games and games 
with conjunction of quantitative and Boolean objectives has recently been shown to 
have many applications in verification and synthesis, such as 
synthesizing systems with quality guarantee~\cite{bloem_CAV09},
synthesizing robust systems~\cite{BGHJ09},
performance aware synthesis of concurrent data structure~\cite{CCHRS11},
analyzing permissivity in games and synthesis~\cite{bouyer_ATVA11},
simulation between quantitative automata~\cite{CDH10}, 
generalizing Boolean simulation to quantitative simulation 
distance~\cite{CHR12}, etc.
Moreover, multi-dimensional energy games are equivalent to a decidable class of games on 
VASS (vector addition systems with states). This model is equivalent to games over multi-counter systems and Petri nets~\cite{brazdil_ICALP10}.

In literature, there are many recent works on the theoretical analysis 
of multi-dim\-en\-sional quantitative games, such as,
mean-payoff parity games~\cite{chatterjee_LICS05,bouyer_ATVA11}, energy-parity 
games~\cite{chatterjee_ICALP10}, multi-dimensional energy games~\cite{chatterjee_FSTTCS10}, and 
multi-dimensional mean-payoff games~\cite{chatterjee_FSTTCS10,VR11}.
Most of these works focus on establishing the computational complexity of the 
problem of deciding if player~1 \textit{has} a \emph{winning} strategy. 
From the perspective of synthesis and other related problems in verification,
the most important problem is to obtain a witness \emph{finite-memory} winning 
strategy (if one exists).
The winning strategy in the game corresponds to the desired controller for 
(or implementation of) the system in synthesis, and for implementability a finite-memory 
strategy is essential. 
In this work we consider the problem of finite-memory strategy synthesis in multi-dimensional quantitative games in conjunction with parity 
objectives, and the problem of 
existence of memory-efficient randomized strategies for such games. 
These are some of the core and foundational problems in the emerging theory of 
quantitative verification and synthesis.

\smallskip\noindent\textbf{Our contributions.} 
In this work, we give an extended presentation of the results of \cite{chatterjee_CONCUR2012}, the first study of multi-dimensional energy and 
mean-payoff objectives in conjunction with parity objectives. Conjunction of parity objectives with multi-dimensional quantitative objectives had never been considered before \cite{chatterjee_CONCUR2012}. Our presentation is based on the journal publication~\cite{DBLP:journals/acta/ChatterjeeRR14}. Since we consider the synthesis of finite-memory strategies, it follows from the results 
of~\cite{chatterjee_FSTTCS10} that both the problems (multi-dimensional energy with parity and 
multi-dimensional mean-payoff  with parity) are equivalent.
Our main results for finite-memory strategy synthesis for multi-dimensional energy parity 
games are as follows.
{\bf ($i$) Optimal memory bounds}. 
We first show that memory of exponential size is sufficient in multi-dimensional 
energy parity games.
Our result is a significant improvement over the result that can be obtained 
naively from the results known in literature that yields a triple exponential 
bound, even in the case of multi-dimensional energy games without parity. 
Second, we show a matching lower bound by presenting a family of game graphs 
where exponential memory is necessary in multi-dimensional energy games (without parity), 
even when all the transition weights belong to $\Set{-1,0,+1}$. 
Thus we establish {\em optimal memory bounds} for the finite-memory strategy synthesis problem.
{\bf ($ii$) Symbolic and incremental algorithm.}
We present a {\em symbolic} algorithm (in the sense of \cite{DR10}, i.e., using a compact antichain representation of sets by their minimal elements) to compute a finite-memory winning strategy, if one exists, for multi-dimensional energy parity games. Our algorithm is parameterized by the range of energy levels to consider during its execution. So, we can use it in an {\em incremental approach}: first, we search for finite-memory winning strategies with a small range, and increment the range only when necessary. We also establish a bound on the maximal range to consider which ensures completeness of the incremental approach.
In the worst case the algorithm requires exponential time.
Since exponential size memory is required  (and also the decision problem is
coNP-complete~\cite{chatterjee_FSTTCS10}), the worst case exponential bound can be considered as {\em optimal}.
Moreover, as our algorithm is symbolic and incremental, in most relevant problems
in practice, it is expected to be efficient.
{\bf($iii$) Randomized strategies.} 
We also consider when the (pure) finite-memory strategies can be traded off for conceptually much simpler randomized strategies.
We show that for energy objectives randomization is not helpful (as energy objectives
are similar in spirit with safety objectives), even with only one player, neither it is for two-player multi-dimensional mean-payoff objectives. However, randomized memoryless strategies suffice for one-player multi-dimensional mean-payoff parity games. For the important special case of mean-payoff parity objectives (conjunction of a single 
mean-payoff and parity objectives), we show that in games, finite-memory strategies can be traded
off for randomized memoryless strategies.

\smallskip\noindent\textbf{Related works.}
This paper extends the results presented in its preceding conference version~\cite{chatterjee_CONCUR2012} and gives a full presentation of the technical details published in~\cite{DBLP:journals/acta/ChatterjeeRR14}. Games with a single mean-payoff objective have been studied in~\cite{EM79,ZP95},
and games with a single energy objective in~\cite{CdAHS03}; their equivalence was 
established in~\cite{bouyer_FORMATS2008}.
One-dimensional mean-payoff parity games problem has been studied in~\cite{chatterjee_LICS05}: 
an exponential algorithm was given to decide if there exists a winning 
strategy (which in general was shown to require infinite memory); and 
an improved algorithm was presented in~\cite{bouyer_ATVA11}.
One-dimensional energy parity games problem has been studied in~\cite{chatterjee_ICALP10}:
it was shown that deciding the existence of a winning strategy 
is in NP $\cap$ coNP, and an exponential algorithm was given.
It was also shown in~\cite{chatterjee_ICALP10} that, for one-dimensional energy parity objectives,
finite-memory strategies with exponential memory are sufficient, and the
decision problem for mean-payoff parity objective can be reduced to 
energy parity objective. Alternative objectives based on the mean-payoff but with improved tractability in the one-dimensional setting were considered in~\cite{chatterjee_ATVA2013}. Extension of the worst-case threshold problem - the classical decision problem on mean-payoff games - with guarantees on the expected performance faced to a stochastic adversary was studied in~\cite{DBLP:conf/stacs/BruyereFRR14}. 

Games on VASS with several different winning objectives have been studied in~\cite{brazdil_ICALP10}, and
from the results of~\cite{brazdil_ICALP10} it follows that in multi-dimensional energy 
games, winning strategies with finite memory are sufficient (and a triple 
exponential bound on memory can be derived from the results).
The complexity of multi-dimensional energy and mean-payoff games was studied 
in~\cite{chatterjee_FSTTCS10,VR11}.
It was shown in~\cite{chatterjee_FSTTCS10} that in general, winning strategies in multi-dimensional mean-payoff games 
require infinite memory, whereas for multi-dimensional energy games, finite-memory
strategies are sufficient. 
Moreover, for finite-memory strategies, the multi-dimensional mean-payoff and energy 
games coincide, and optimal computational complexity for deciding the existence of a 
winning strategy was established as coNP-complete~\cite{chatterjee_FSTTCS10,VR11}.
Multi-dimensional mean-payoff games with infinite-memory strategies were studied in~\cite{VR11},
and optimal computational complexity results were established.
Various decision problems over multi-dimensional energy games were studied in \cite{fahrenberg_ICTAC11}.

\section{Preliminaries}
\label{preliminaries}

We consider two-player game structures and denote the two \textit{players} by $\playerOne$ and $\playerTwo$.

\smallskip\noindent\textbf{Multi-weighted two-player game structures.} A \textit{multi-weighted two-player game structure} is a tuple \gameFull where (i) \statesOne and \statesTwo resp. denote the finite sets of \textit{states} belonging to \playerOne and $\playerTwo$, with $\statesOne \cap \statesTwo = \emptyset$; (ii) $\initState \in \states = \statesOne \cup \statesTwo$ is the initial state; (iii) $\edges \subseteq \states \times \states$ is the set of \textit{edges} such that for all $s \in \states$, there exists $s' \in \states$ such that $(s, s') \in \edges$; (iv) $k \in \nat$ is the \textit{dimension} of the weight vectors; and (v) $\weight \colon \edges \rightarrow \integ^{\dimension}$ is the multi-weight labeling function. The game structure $\game$ is \textit{one-player} if $\statesTwo = \emptyset$. A \textit{play} in \game is an infinite sequence of states $\play = s_{0}s_{1}s_{2}\ldots{}$ such that $s_{0} = \initState$ and for all $i \geq 0$, we have $(s_{i}, s_{i+1}) \in \edges$. The \textit{prefix} up to the $n$-th state of play $\play = s_{0}s_{1}\ldots{}s_{n}\ldots{}$ is the finite sequence $\play(n) = s_{0}s_{1}\ldots{}s_{n}$. Let $\first(\play(n))$ and $\last(\play(n))$ resp. denote $s_{0}$ and $s_{n}$, the first and last states of $\play(n)$. A prefix $\play(n)$ belongs to $\player_{i}$, $i \in \lbrace 1, 2\rbrace$, if $\last(\play(n)) \in \states_{i}$. The set of plays of \game is denoted by \plays and the corresponding set of prefixes is denoted by $\prefixes$. The set of prefixes that belong to $\player_{i}$ is denoted by $\prefixesArg{i}$. The \textit{energy level vector} of a sequence of states $\prefix = s_{0}s_{1}\ldots{}s_{n}$ such that for all $i \geq 0$, we have $(s_{i}, s_{i+1}) \in \edges$, is \elFull and the \textit{mean-payoff vector} of a play $\play = s_{0}s_{1}\ldots{}$ is \mpayFull.

\smallskip\noindent\textbf{Parity.} A game structure $\game$ is extended with a priority function $\parityFull$ to the structure $\gameParFull$. Given a play $\play = s_{0}s_{1}s_{2}\ldots{}$, we define $\infPlay(\play) = \left\lbrace s \in S \;\vert\; \forall\, m \geq 0, \exists\, n > m \text{ such that } s_{n} = s\right\rbrace$, the set of states that appear infinitely often along $\play$. The \textit{parity} of a play \play is defined as $\minparFull$. In the following definitions, we denote any game by $\gamePar$ with no loss of generality.

\smallskip\noindent\textbf{Strategies.} Given a finite set $A$, a \textit{probability distribution} on $A$ is a function $p \colon A \rightarrow [0, 1]$ such that $\sum_{a\in A} p(a) = 1$. We denote the set of probability distributions on $A$ by $\dist (A)$. A \textit{pure strategy} for $\player_{i}$, $i \in \lbrace 1, 2\rbrace$, in \gamePar is a function $\strat_{i} \colon \prefixesArgPar{i} \rightarrow \states$ such that for all $\prefix \in \prefixesArgPar{i}$, we have $(\last(\prefix), \strat_{i}(\prefix)) \in \edges$. A \textit{(behavioral) randomized strategy} is a function $\strat_{i} \colon \prefixesArgPar{i} \rightarrow \dist(\states)$ such that for all $\prefix \in \prefixesArgPar{i}$, we have $\left\lbrace (\last(\prefix), s) \;\vert\; s \in \states, \strat_{i}(\prefix)(s) > 0\right\rbrace  \subseteq \edges$. A pure strategy $\strat_{i}$ for $\player_{i}$ has \textit{finite memory} if it can be encoded by a deterministic Moore machine $(M,m_0,\alpha_u,\alpha_n)$ where $M$ is a finite set of states (the memory of the strategy), $m_0 \in M$ is the initial memory state, $\alpha_u \colon M \times S \to M$ is an update function, and $\alpha_n \colon M \times S_{i} \to S$ is the next-action function. If the game is in $s \in S_{i}$ and $m \in M$ is the current memory value, then the strategy chooses $s' = \alpha_n(m,s)$ as the next state of the game. When the game leaves a state $s \in S$, the memory is updated to $\alpha_u(m,s)$. Formally, $\left\langle M, m_0, \alpha_u, \alpha_n\right\rangle $ defines the strategy $\strat_{i}$ such that $\strat_{i}(\rho\cdot s) = \alpha_n(\hat{\alpha}_u(m_0, \rho), s)$ for all $\rho \in S^*$ and $s \in S_{i}$, where $\hat{\alpha}_u$ extends $\alpha_u$ to sequences of states as expected. A pure strategy is \emph{memoryless} if $\vert M\vert = 1$, i.e., it does not depend on history but only on the current state of the game. Similar definitions hold for finite-memory randomized strategies, such that the next-action function $\alpha_n$ is randomized, while the update function $\alpha_u$ remains deterministic. We resp. denote by $\strats_{i}, \stratsPureFinite_{i}, \stratsPureMemoryless_{i}, \stratsRandomizedMemoryless_{i}$ the sets of general (i.e., possibly randomized and infinite-memory), pure finite-memory, pure memoryless and randomized memoryless strategies for player $\player_{i}$.

Given a prefix $\prefix \in \prefixesArgPar{i}$ belonging to player $\player_{i}$, and a strategy $\strat_{i} \in \strats_{i}$ of this player, we define the \textit{support} of the probability distribution defined by $\strat_{i}$ as $\supp_{\strat_{i}}(\prefix) = \left\lbrace s \in \states \;\vert\; \strat_{i}(\prefix)(s) > 0\right\rbrace$, with $\strat_{i}(\prefix)(s) = 1$ if $\strat_{i}$ is pure and $\strat_{i}(\prefix) = s$. A play \play is said to be \textit{consistent} with a strategy $\strat_{i}$ of $\player_{i}$ if for all $n \geq 0$ such that $\last(\play(n)) \in \states_{i}$, we have $\last(\play(n+1)) \in \supp_{\strat_{i}}(\play(n))$. Given two strategies, $\strat_{1}$ for $\playerOne$ and $\strat_{2}$ for $\playerTwo$, we define $\outcomePar(\strat_{1}, \strat_{2}) = \left\lbrace \play \in \playsPar \;\vert\; \play \text{ is consistent with } \strat_{1} \text{ and } \strat_{2}\right\rbrace$, the set of possible \textit{outcomes} of the game. Note that if both strategies $\strat_{1}$ and $\strat_{2}$ are pure, we obtain a unique play $\play = s_{0}s_{1}s_{2}\ldots{}$ such that for all $j \geq 0$, $i \in \lbrace 1, 2\rbrace$, if $s_{j} \in \states_{i}$, then we have $s_{j+1} = \strat_{i}(s_{j})$.

Given the initial state $\initState$ and strategies for both players $\strat_{1} \in \strats_{1}$, $\strat_{2} \in \strats_{2}$, we obtain a Markov chain. Thus, every \textit{event} $\event \subseteq \playsPar$, a measurable set of plays, has a uniquely defined probability \cite{vardi_FOCS85} (Carath\'eodory's extension theorem induces a unique probability measure on the Borel $\sigma$-algebra over $\playsPar$). We denote by $\proba_{s_{init}}^{\strat_{1}, \strat_{2}}(\event)$ the probability that a play belongs to $\event$ when the game starts in $s_{init}$ and is played consistently with $\strat_{1}$ and $\strat_{2}$. Let $f : \playsPar \rightarrow \reals$ be a measurable function, we denote $\expect_{s_{init}}^{\strat_{1}, \strat_{2}}(f)$ the expected value of function $f$ over a play when the game starts in $s_{init}$ and is played consistently with $\strat_{1}$ and $\strat_{2}$. We use the same notions for prefixes by naturally extending them to their infinite counterparts.

\smallskip\noindent\textbf{Objectives.} An \textit{objective} for \playerOne in \gamePar is a set of plays $\objective \subseteq \playsPar$. We consider several kinds of objectives:
\begin{itemize}
\item \textit{Multi Energy objectives}. Given an initial natural energy vector $v_{0} \in \nat^{\dimension}$, the objective \objELFull requires that the energy level in all dimensions stays positive at all times.
\item \textit{Multi Mean-payoff objectives}. Given a rational threshold vector $v \in \rat^{\dimension}$, the objective \objMPFull requires that for all dimension $j$, the mean-payoff on this dimension is at least $v(j)$.
\item \textit{Parity objectives}. Objective \objParFull requires that the minimum priority visited infinitely often be even. When the set of priorities is restricted to $\lbrace 0, 1\rbrace$, we have a \textit{Büchi objective}. Note that every multi-weighted game structure $\game$ without parity can trivially be extended to $\gamePar$ with $\parity : \states \rightarrow \left\lbrace 0\right\rbrace$.
\item \textit{Combined objectives}. Parity objectives can naturally be combined with multi mean-payoff and multi energy objectives, resp. yielding $\objMP \cap \objPar$ and $\objEL \cap \objPar$.
\end{itemize}
\smallskip\noindent\textbf{Sure, satisfaction and expectation semantics.} A strategy $\strat_{1}$ for \playerOne is \textit{surely winning} for an objective \objective in \gamePar if for all plays $\play \in \playsPar$ that are consistent with $\strat_{1}$, we have $\play \in \objective$. When at least one of the players plays a randomized strategy, the notion of sure winning in general is too restrictive and inadequate, as the set of consistent plays that do not belong to $\objective$ may have zero probability measure. Therefore, it is useful to use \textit{satisfaction} or \textit{expectation} criteria. Let $\strat_{1} \in\strats_{1}$ be the strategy of $\playerOne$.
\begin{itemize}
\item Given a threshold $\alpha \in \left[ 0, 1\right]$ and a measurable objective $\objective \subseteq \playsPar$, $\alpha$-\textit{satisfaction} asks that for all $\strat_{2} \in \strats_{2}$, we have $\proba_{s_{init}}^{\strat_{1}, \strat_{2}}(\objective) \geq \alpha$. If $\strat_{1}$ satisfies $\objective$ with probability $\alpha = 1$, we say that $\strat_{1}$ is \textit{almost-surely winning} for $\objective$ in $\gamePar$.

\item Given a threshold $\beta \in \rat^{\dimension}$, a function $f : \playsPar \rightarrow \rat$, $\beta$-\textit{expectation} asks that for all $\strat_{2} \in \strats_{2}$, we have $\expect_{s_{init}}^{\strat_{1}, \strat_{2}}(f) \geq \beta$.
\end{itemize}
Note that energy objectives are naturally more enclined towards satisfaction semantics, as they model safety properties.

\smallskip\noindent\textbf{Strategy synthesis problem.} For multi energy parity games, the problem is to synthesize a finite initial credit $v_{0} \in \nat^{\dimension}$ and a pure \textit{finite-memory} strategy $\strat^{pf}_{1} \in \stratsPureFinite_{1}$ that is surely winning for $\playerOne$ in $\gamePar$ for the objective $\objEL \cap \objPar$, \textit{if one exists}. So, the initial credit is not fixed, but is part of the strategy to synthesize.
For multi mean-payoff games, given a threshold $v \in \rat^{\dimension}$, the problem is to synthesize a pure \textit{finite-memory} strategy $\strat^{pf}_{1} \in \stratsPureFinite_{1}$ that is surely winning for $\playerOne$ in $\gamePar$ for the objective $\objMP \cap \objPar$, \textit{if one exists}.
Note that multi energy and multi mean-payoff games are equivalent for finite-memory strategies, while in general, infinite memory may be necessary for the latter \cite{chatterjee_FSTTCS10}. 

\smallskip\noindent\textbf{Trading finite memory for randomness.}
We study when finite memory can be traded for randomization. The question is: given a strategy $\strat_{1}^{pf} \in \stratsPureFinite_{1}$ which ensures surely winning of some objective $\objective$, does there exist a strategy $\strat_{1}^{rm} \in \stratsRandomizedMemoryless_{1}$ which ensures almost-surely winning for the same objective $\objective$? For mean-payoff objectives, one can also ask for a weaker equivalence, that is: can randomized memoryless strategies achieve the same expectation as pure finite-memory ones?

\section{Optimal memory bounds}
\label{parityChanges}

In this section, we establish optimal memory bounds for pure finite-memory winning strategies on multi-dimensional energy parity games (MEPGs). Also, as a corollary, we obtain results for pure finite-memory winning strategies on multi-dimensional mean-payoff parity games (MMPPGs). We show that single exponential memory is both sufficient and necessary for winning strategies. Additionally, we show how the parity condition in a MEPG can be removed by adding additional energy dimensions.

\paragraph{{\bf Multi energy parity games.}} A sample game is depicted on Fig. \ref{fig:gameAndTree}. The key point in the upper bound proof on memory is to understand that for $\playerOne$ to win a multi energy parity game, he must be able to force cycles whose energy level is positive in all dimensions and whose minimal parity is even. As stated in the next lemma, finite-memory strategies are sufficient for multi energy parity games for both players.

\vspace{-2mm}
\begin{figure}[htb]
  \centering   
  \begin{minipage}[r]{.35\linewidth}
   \hspace*{6mm}\scalebox{1}{\begin{tikzpicture}[->,>=stealth',shorten >=1pt,auto,node
    distance=2.5cm,bend angle=45, scale=0.5, font=\scriptsize]
    \tikzstyle{p1}=[draw,circle,text centered,minimum size=7mm, text width = 5mm]
    \tikzstyle{p2}=[draw,rectangle,text centered,minimum size=7mm, text width = 5mm]
    \node[p2]  (0)  at (0, 0) {$s_{0}$\\$2$};
    \node[p2]  (1) at (-2, -3) {$s_{1}$\\$3$};
    \node[p2]  (2) at (2, -3)  {$s_{2}$\\$1$};
    \node[p1]  (3) at (0, -6)  {$s_{3}$\\$2$};
    \node[p1]  (4)  at (-2, -9) {$s_{4}$\\$3$};
    \node[p1]  (5)  at (2, -9) {$s_{5}$\\$0$};
    \coordinate[shift={(0mm,5mm)}] (init) at (0.north);
    \path
    (init) edge (0);
	\draw[->,>=latex] (0) to node[left] {$(-1,1)$} (1);
	\draw[->,>=latex] (0) to node[right] {$(0,2)$} (2);
	\draw[->,>=latex] (1) to node[left] {$(0,1)$} (3);
	\draw[->,>=latex] (2) to node[right] {$(0,0)$} (3);
	\draw[->,>=latex] (3) to node[left] {$(1,-1)$} (4);
	\draw[->,>=latex] (3) to node[right, xshift=-0.5mm] {$(-2,1)$} (5);
	\draw[->,>=latex] (4) to[out=135,in=180] node[above, xshift=-5mm] {$(0,-1)$} (0);
	\draw[->,>=latex] (5) to[out=30,in=0] node[above] {$\quad\quad\;(2,0)$} (3);
      \end{tikzpicture}}
   \end{minipage}
   \hspace*{10mm}\begin{minipage}[l]{.47\linewidth}
  \scalebox{1}{\begin{tikzpicture}[->,>=stealth',shorten >=1pt,auto,node
    distance=2.5cm,bend angle=45, scale=0.4, font=\scriptsize]
    \tikzstyle{p1}=[draw,ellipse,text centered,minimum width=24mm]
    \tikzstyle{p2}=[draw,rectangle,text centered,minimum width=18mm]
    \node[p2]  (0)  at (0, 0) {$\langle s_{0}, (0, 0)\rangle$};
    \node[p2]  (1) at (-3.5, -3) {$\langle s_{1}, (-1, 1)\rangle$};
    \node[p2]  (2) at (3.5, -3)  {$\langle s_{2}, (0, 2)\rangle$};
    \node[p1]  (3) at (-3.5, -6)  {$\langle s_{3}, (-1, 2)\rangle$};
    \node[p1]  (4)  at (3.5, -6) {$\langle s_{3}, (0, 2)\rangle$};
    \node[p1]  (5)  at (-3.5, -9) {$\langle s_{4}, (0, 1)\rangle$};
    \node[p1]  (6)  at (3.5, -9) {$\langle s_{5}, (-2, 3)\rangle$};
    \node[p2]  (7)  at (-3.5, -12) {$\langle s_{0}, (0, 0)\rangle$};
    \node[p1]  (8)  at (3.5, -12) {$\langle s_{3}, (0, 3)\rangle$};
    \coordinate[shift={(0mm,5mm)}] (init) at (0.north);
    \path
    (init) edge (0)
    (0) edge (1)
    (0) edge (2);
	\draw[->,>=latex] (1) to[out=270,in=90] (3);
	\draw[->,>=latex] (2) to[out=270,in=90] (4);
	\draw[->,>=latex] (3) to[out=270,in=90] (5);
	\draw[->,>=latex] (4) to[out=270,in=90] (6);
	\draw[->,>=latex] (5) to[out=270,in=90] (7);
	\draw[->,>=latex] (6) to[out=270,in=90] (8);
	\draw[->,dashed,>=latex] (7) to[out=180,in=180] (0);
	\draw[->,dashed,>=latex] (8) to[out=5,in=355] (4);
      \end{tikzpicture}}
   \end{minipage}
      \caption{Two-dimensional energy parity game and even-parity self-covering tree representing an arbitrary finite-memory winning strategy. Circle states belong to $\playerOne$, square states to $\playerTwo$.}
\label{fig:gameAndTree}
  \end{figure}
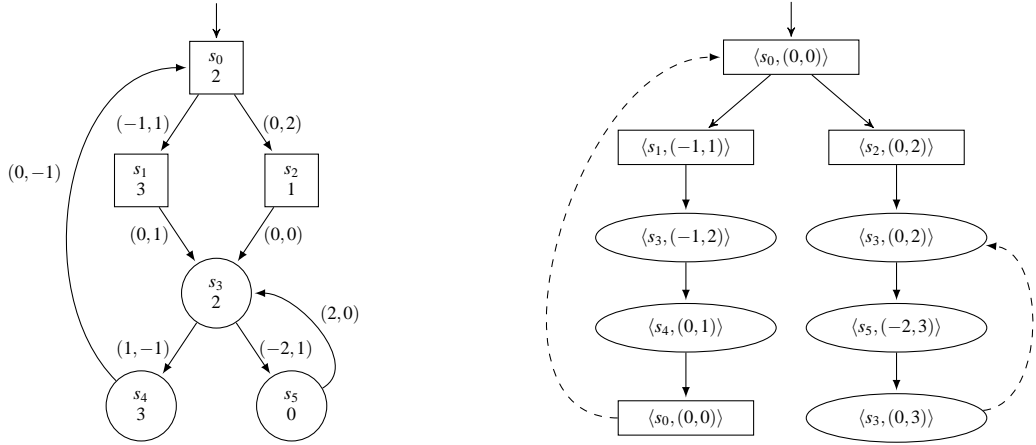

\begin{lemma}[{Extension of \cite[Lemma 2 and 3]{chatterjee_FSTTCS10}}]
\label{purefinitememory}
If $\playerOne$ has a winning strategy in a multi energy parity game, then he has a pure finite-memory one. If $\playerTwo$ has a winning strategy in a multi energy parity game, then he has a pure memoryless one.
\end{lemma}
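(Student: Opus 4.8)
Since this lemma strengthens the parity-free statements of \cite{chatterjee_FSTTCS10} (Lemmata 2 and 3), the plan is to isolate precisely what the parity condition contributes and to reuse the pure energy machinery for everything else. The guiding principle is the one stressed just before the lemma: $\playerOne$ wins if and only if he can \emph{force good cycles}, that is, cycles that are non-negative in every energy dimension and whose minimal priority is even. Repeating such cycles keeps every component of $\initCredit + \el$ in $\nat^{\dimension}$ and makes the minimal recurring priority even, so it secures $\objEL$ and $\objPar$ at once.

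\textbf{Player 1 (finite memory).} I would make ``forcing good cycles'' precise through an \emph{even-parity self-covering tree} (\epSCT). Unfold an arbitrary surely winning strategy of $\playerOne$ into its tree of consistent outcomes, labelling each node by its configuration (the current state together with the energy vector $\initCredit + \el$). Because $\playerOne$ wins, every branch stays in $\nat^{\dimension}$ and its minimal infinitely-recurring priority is some even $2j$. Hence along each branch there is a descendant that \emph{covers} an ancestor --- same state and componentwise larger energy, which must occur since $\nat^{\dimension}$ is a well-quasi-order (Dickson's lemma) --- and one can additionally require the intermediate segment to visit priority $2j$ and no smaller priority. Truncating each branch at its first such covering descendant yields a finitely branching tree (the branching is bounded by $\playerTwo$'s out-degree, as $\playerOne$'s moves are fixed) with no infinite branch, hence a finite tree by König's lemma. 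Read as a Moore machine, in which every covering leaf loops back to the memory state of the ancestor it covers, this tree defines a pure finite-memory strategy all of whose outcomes are infinite concatenations of good cycles, and is therefore surely winning. Finiteness already gives the claim; the exponential size bound stated later follows by capping the energy values that need to be tracked.

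\textbf{Player 2 (memoryless).} For the pure energy part, \cite{chatterjee_FSTTCS10} (Lemma 3) already supplies a memoryless spoiling strategy, and parity is positionally determined for the spoiler; the work is to fuse the two into a single positional strategy. I would do so via the reduction announced in this section: eliminate the parity condition by adjoining a bounded number of energy dimensions that track the priorities, so that keeping these new dimensions non-negative becomes equivalent to $\objPar$. Crucially, this transformation leaves the game graph untouched and only enlarges the weight vectors; therefore a memoryless strategy in the resulting multi energy game --- which exists by \cite{chatterjee_FSTTCS10} --- is literally a memoryless strategy in the original multi energy parity game, and soundness of the reduction makes it winning for $\playerTwo$ there as well.

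\textbf{Main obstacle.} The delicate step is the $\playerOne$ direction: showing that a \emph{finite} \epSCT\ genuinely exists, i.e. that good cycles can always be closed. One must check that restricting attention to covering segments whose minimal priority equals the even value $2j$ seen infinitely often does not destroy the covering property, since energy might a priori be consumed on the parity detours. This is exactly where the safety-like energy requirement meets the liveness-like parity requirement, and it is resolved by the pumping argument above combined with the upward closure (in the initial credit) of $\playerOne$'s winning configurations. For $\playerTwo$, the analogous subtlety is to verify that the parity-encoding dimensions can be realised without refining the state space, so that the reduction cannot smuggle extra memory into $\playerTwo$'s strategy.
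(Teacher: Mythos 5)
Your argument for $\playerOne$ is essentially the paper's: the paper disposes of this half with the ``standard well-quasi-ordering argument'' extending Lemma~2 of \cite{chatterjee_FSTTCS10}, i.e., exactly the Dickson-plus-K\"onig unfolding you describe (formalised later in the paper as the even-parity self-covering tree), and your refinement of the covering condition to segments whose minimal priority is the even value recurring on that branch is the right one. That half is fine.

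The $\playerTwo$ half is where you depart from the paper and where there is a genuine gap. The paper proves memoryless determinacy for $\playerTwo$ directly by the classical \emph{edge-induction} argument, splicing together the edge-induction proofs of \cite[Lemma 3]{chatterjee_FSTTCS10} (multi energy) and \cite[Lemma 3]{chatterjee_ICALP10} (energy parity). You instead route through the parity-to-energy reduction, and this does not go through as stated. First, that reduction (Lemma~\ref{lemma_parityToEnergy}) is established in the paper \emph{after} and \emph{using} the present lemma: the reward $l$ it attaches to even priorities is the epSCT depth bound of Lemma~\ref{lemma_depth_parity}, whose proof rests on Lemma~\ref{lemma_pumpCycle}, which assumes the pure finite-memory result for $\playerOne$; so you may invoke it at best after your part for $\playerOne$, and only in the direction that does not use the depth bound. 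Second, and more seriously, the step ``soundness of the reduction makes it winning for $\playerTwo$ there as well'' does not follow from the statement of the reduction. The reduction asserts ``$\playerOne$ wins $\gamePar$ iff $\playerOne$ wins $\game$''; combined with determinacy this yields that $\playerTwo$ wins the parity-free game $\game$ and hence has a memoryless winning strategy $\strat_{2}$ there. But to conclude that this same $\strat_{2}$ wins $\gamePar$ you must show that any $\strat_{1}$ beating $\strat_{2}$ in $\gamePar$ could be converted into one beating $\strat_{2}$ in $\game$ --- i.e., you need the forward direction of the reduction \emph{locally, against the fixed} $\strat_{2}$. A strategy winning the parity objective need not keep the parity-tracking dimensions bounded (the gaps between visits to the minimal even priority can grow without bound), so this conversion requires first arguing that in the one-player game obtained by fixing $\strat_{2}$, $\playerOne$ can win with a regular play whose good cycle has length at most $l$ --- which is precisely the bounded-depth machinery you were trying to avoid, and not the ``state-space refinement'' issue you flag as the obstacle. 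Either fill this in explicitly, or adopt the paper's edge-induction route, which sidesteps the reduction entirely.
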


\begin{proof}
The first part of the result follows using the standard well-quasi ordering argument (straightforward extension of \cite[Lemma 2]{chatterjee_FSTTCS10}). The second part follows by the classical edge induction argument: Lemma 3 of \cite{chatterjee_FSTTCS10} and Lemma 3 of \cite{chatterjee_ICALP10} show the result using edge induction for multi energy and energy parity games, respectively. Repeating the arguments of Lemma 3 of \cite{chatterjee_ICALP10}, and replacing the part on single
energy objectives by the argument of Lemma 3 of \cite{chatterjee_FSTTCS10} for multi energy objectives, we obtain the desired result.\qed
\end{proof}

By Lemma~\ref{purefinitememory}, we know that w.l.o.g.~both players can be restricted to play pure finite-memory strategies. The property on the cycles can then be formalized as follows.
\begin{lemma}
\label{lemma_pumpCycle}
Let $\gameParFull$ be a multi energy parity game. Let $\strat_{1}^{pf} \in \stratsPureFinite_{1}$ be a winning strategy of $\playerOne$ for initial credit $v_{0} \in \nat^{k}$. Then, for all $\strat_{2}^{pm} \in \stratsPureMemoryless_{2}$, the outcome is a \textit{regular} play $\play = \prefix \cdot (\finplay_{\infty})^{\omega}$, with $\prefix \in \prefixes, \finplay_{\infty} \in \states^{+}$, such that $\el(\finplay_{\infty}) \geq 0$ and $\minpar(\play) = \min \left\lbrace \parity(s) \;\vert\; s \in \finplay_{\infty} \right\rbrace$ is even.
\end{lemma}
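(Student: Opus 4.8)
The plan is to exploit the finiteness of the memory of $\strat_{1}^{pf}$ together with the memorylessness of $\strat_{2}^{pm}$ in order to reduce the (a priori complicated) outcome to a simple lasso-shaped play, and then to read off the two desired properties directly from the fact that this unique outcome satisfies the combined energy-parity objective. First I would fix the Moore machine $(M, m_{0}, \alpha_{u}, \alpha_{n})$ encoding $\strat_{1}^{pf}$ and consider the finite product arena on the vertex set $\states \times M$: from a vertex $(s,m)$ the next vertex is determined by $\alpha_{n}$ if $s \in \statesOne$ and by $\strat_{2}^{pm}(s)$ if $s \in \statesTwo$, while the memory component is updated through $\alpha_{u}$. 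Since both strategies are pure, every vertex of this product has exactly one successor, so starting from $(\initState, m_{0})$ there is a unique infinite path, and because the product is finite this path is eventually periodic. Projecting onto the $\states$-component yields the unique outcome, which is therefore a regular play $\play = \prefix \cdot (\finplay_{\infty})^{\omega}$, where $\finplay_{\infty} \in \states^{+}$ is the $\states$-projection of the cycle entered in the product. This establishes the regular shape claimed in the lemma; it then remains to verify the two properties of $\finplay_{\infty}$, for which I use that $\strat_{1}^{pf}$ is surely winning and hence the unique outcome satisfies $\play \in \objEL \cap \objPar$.

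For the energy condition I would argue by contradiction: suppose $\el(\finplay_{\infty})$ is strictly negative in some dimension $j$. Along the periodic part of $\play$, each full traversal of $\finplay_{\infty}$ decreases the $j$-th coordinate of the running energy level by $|\el(\finplay_{\infty})(j)| > 0$, so after sufficiently many traversals the quantity $v_{0}(j) + \el(\play(n))(j)$ drops below $0$ for some prefix length $n$. This contradicts membership of $\play$ in $\objEL$, which requires $\initCredit + \el(\play(n)) \in \nat^{\dimension}$ for all $n \geq 0$. Hence $\el(\finplay_{\infty}) \geq 0$ in every dimension.

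For the parity condition I would observe that in the lasso $\play = \prefix \cdot (\finplay_{\infty})^{\omega}$ the set of states visited infinitely often is exactly the set of states occurring in $\finplay_{\infty}$, i.e. $\infPlay(\play) = \{ s \in \states \mid s \text{ occurs in } \finplay_{\infty}\}$. By definition of the parity of a play this gives $\minpar(\play) = \min\{\parity(s) \mid s \in \finplay_{\infty}\}$, and since $\play \in \objPar$ this minimum is even, as required.

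The only point that needs a little care, and is the main (though mild) obstacle, is the bookkeeping in the passage between the product arena and the game $\game$ itself: the cycle in $\states \times M$ may revisit the same game state with different memory values, so one must check that its $\states$-projection $\finplay_{\infty}$ is a genuine closed walk in $\game$ whose edges carry exactly the weights summed along the product cycle. This ensures simultaneously that $\el(\finplay_{\infty})$ equals the per-cycle energy gain used in the second paragraph, and that $\infPlay(\play)$ coincides with the set of states appearing in $\finplay_{\infty}$ used in the third.
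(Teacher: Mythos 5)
Your proposal is correct and follows essentially the same route as the paper's proof: the outcome is a regular lasso because both strategies are pure and finite-memory (you merely make explicit, via the product with the Moore machine, what the paper asserts in one sentence), the energy condition follows by the same unbounded-decrease contradiction, and the parity condition follows because the states visited infinitely often are exactly those of the repeated cycle. No substantive difference from the paper's argument.
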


\begin{proof}
Recall that both players play with pure finite memory strategies. Therefore, a finite number of decisions are made and the outcome is a regular play $\play = \prefix \cdot (\finplay_{\infty})^{\omega}$. Note that $\el(\prefix)$ does not have to be positive, as $\playerOne$ may have $v_{0} > \el(\prefix)$. Similarly, priorities of states visited in $\prefix$ have no impact on winning as they are only visited a finite number of times. First, suppose $\el(\finplay_{\infty}) < 0$ on some dimension $1 \leq j \leq \dimension$. Then, after $m > 0$ cycles, for some $n > 0$, the energy level will be $\el(\play(n)) = \el(\prefix \cdot (\finplay_{\infty})^{m}) = \el(\prefix) + m \cdot \el(\finplay_{\infty})$. Since $v_{0}$ is finite and $m \rightarrow \infty$, there exist some $m, n > 0$, such that $v_{0} + \el(\play(n)) < 0$ on dimension $j$ and $\strat_{1}$ is not winning. Second, suppose $\min \left\lbrace \parity(s) \;\vert\; s \in \finplay_{\infty} \right\rbrace$ is odd. Since the set of states visited infinitely often is exactly the set of states in $\finplay_{\infty}$, this implies that $\minpar(\play)$ is odd, and thus $\strat_{1}$ is not winning.\qed
\end{proof}

A {\em self-covering path} in a game, straightforwardly extending the notion introduced by Rackoff \cite{rackoff_TCS78} for \textit{Vector Addition Systems} (VAS), is a sequence of states $\state_{0}\state_{1}\state_{2}\ldots{}\state_{m}$ such that there exist two positions $i$ and $j$ that verify $0 \leq i < j \leq m$, $s_{i} = s_{j}$ and $\el(\state_{0}\ldots{}\state_{i}) \leq \el(\state_{0}\ldots{}\state_{i}\ldots{}\state_{j})$. In other words, such a path describes a finite prefix followed by a cycle which has a non-negative effect on the energy level. Ensuring such cycles is crucial to win the energy objective. With the notion of regular play of Lemma \ref{lemma_pumpCycle}, we generalize the notion of self-covering path to include the parity condition.
We show here that, if such a path exists, then the lengths of its cycle and the prefix needed to reach it can be bounded. Bounds on the strategy follow. In \cite{rackoff_TCS78}, Rackoff showed how to bound the length of self-covering paths in VAS. This work was extended to Vector Addition Systems with States (VASS) by Rosier and Yen \cite{rosier_JCSS86}. Recently, Br{\'a}zdil \textit{et al.} introduced reachability games on VASS and the notion of \textit{self-covering trees} \cite{brazdil_ICALP10}. Their Zero-safety problem with $\omega$ initial marking is equivalent to multi energy games with weights in $\lbrace -1, 0, 1\rbrace$, and without the parity condition.  They showed that if winning strategies exist for $\playerOne$, then some of them can be represented as \textit{self-covering trees} of bounded depth. Trees have to be considered instead of paths, as in a game setting all the possible choices of the adversary ($\playerTwo$) must be considered. Here, we extend the notion of self-covering trees to \textit{even-parity self-covering trees}, in order to handle parity objectives.

\paragraph{{\bf Even-parity self-covering tree.}}
An \textit{even-parity self-covering tree} (\epSCT) for $s \in \states$ is a finite tree $\treeFull$, where $\treeStates$ is the set of nodes, $\lab \colon \treeStates \rightarrow S \times \integ^{\dimension}$ is a labeling function and $\treeEdges \subset \treeStates \times \treeStates$ is the set of edges, such that
\begin{itemize}
\item[$\bullet$] The root of $\tree$ is labeled $\langle s, (0, \ldots{}, 0)\rangle$.
\item[$\bullet$] If $\varsigma \in \treeStates$ is not a leaf, then let $\lab(\varsigma) = \langle t, u \rangle$, $t \in \states$, $u \in \integ^{\dimension}$, such that
\begin{itemize}
\item[-] if $t \in \states_{1}$, then $\varsigma$ has a unique child $\otherNode$ such that $\lab(\otherNode) = \langle t', u'\rangle$, $(t, t') \in E$ and $u' = u + \weight(t, t')$;
\item[-] if $t \in \states_{2}$, then there is a bijection between children of $\varsigma$ and edges of the game leaving $t$, such that for each successor $t' \in \states$ of $t$ in the game, there is one child $\otherNode$ of $\varsigma$ such that $\lab(\otherNode) = \langle t', u'\rangle$, $u' = u + \weight(t, t')$.
\end{itemize}
\item[$\bullet$] If $\varsigma$ is a leaf, then let $\lab(\varsigma) = \langle t, u \rangle$ such that there is some ancestor $\otherNode$ of $\varsigma$ in $\tree$ such that $\lab(\otherNode) = \langle t, u' \rangle$, with $u' \leq u$, and the downward path from $\otherNode$ to $\varsigma$, denoted by $\otherNode \ancPath \node$, has minimal priority even. We say that $\otherNode$ is an \textit{even-descendance energy ancestor} of $\varsigma$.
\end{itemize}

Intuitively, each path from root to leaf is a self-covering path of even parity in the game graph so that plays unfolding according to such a tree correspond to winning plays of Lemma \ref{lemma_pumpCycle}. Thus, the epSCT fixes how $\playerOne$ should react to actions of $\playerTwo$ in order to win the MEPG (Fig. \ref{fig:gameAndTree}). Note that as the tree is finite, one can take the largest negative number that appears on a node in each dimension to compute an initial credit for which there is a winning strategy (i.e., the one described by the tree). In particular, let $\largestW$ denote the maximal absolute weight appearing on an edge in $\gamePar$. Then, for an epSCT $\tree$ of depth $\depth$, it is straightforward to see that the maximal initial credit required is at most $\depth \cdot \largestW$ as the maximal decrease at each level of the tree is bounded by $\largestW$. We suppose $\largestW > 0$ as otherwise, any strategy of $\playerOne$ is winning for the energy objective, for any initial credit vector $v_{0} \in \nat^{k}$.

Let us explicitely state how $\playerOne$ can deploy a strategy $\strat_{1}^{\tree} \in \stratsPureFinite_{1}$ based on an epSCT $\treeFull$. We refer to such a strategy as an \textit{epSCT strategy}. It consists in following a path in the tree $\tree$, moving a pebble from node to node and playing in the game depending on edges taken by this pebble. Each time a node $\varsigma$ such that $\lab(\node) = \langle t, u \rangle$ is encountered, we do the following.
\begin{itemize}
\item[$\bullet$] If $\varsigma$ is a leaf, the pebble directly goes up to its oldest even-descendance energy ancestor $\otherNode$. By oldest we mean the first encountered when going down in the tree from the root. Note that this choice is arbitrary, in an effort to ease following proof formulations, as any one would suit.
\vspace{5mm}
\item[$\bullet$] Otherwise, if $\varsigma$ is not a leaf,
\begin{itemize}
\item[-] if $t \in \states_{2}$ and $\playerTwo$ plays state $t' \in S$, the pebble is moved along the edge going to the only child $\otherNode$ of $\varsigma$ such that $\lab(\otherNode) = \langle t', u' \rangle$, $u' = u + \weight(t, t')$;
\item[-] if $t \in \states_{1}$, the pebble moves to $\otherNode$, $\lab(\otherNode) = \langle t', u' \rangle$, the only child of $\varsigma$, and $\playerOne$ strategy is to choose the state $t'$ in the game.
\end{itemize}
\end{itemize}
If such an epSCT $\tree$ of depth $\depth$ exists for a game $\gamePar$, then $\playerOne$ can play the strategy $\strat_{1}^{\tree} \in \stratsPureFinite_{1}$ to win the game with initial credit bounded by $\depth \cdot W$.

\paragraph{{\bf Bounding the depth of epSCTs.}}
Consider a multi energy game \textit{without} parity. Then, the priority condition on downward paths from ancestor to leaf is not needed and self-covering trees (i.e., epSCTs without the condition on priorities) suffice to describe winning strategies. One can bound the size of SCTs using results on the size of solutions for linear diophantine equations (i.e., with integer variables) \cite{borosh_AMS76}. In particular, recent work on reachability games over VASS with weights $\left\lbrace -1, 0, 1\right\rbrace$, Lemma 7 of \cite{brazdil_ICALP10}, states that if $\playerOne$ has a winning strategy on a VASS, then he can exhibit one that can be described as an SCT whose \textit{depth} is at most $l = 2^{(d-1) \cdot \vert\states\vert} \cdot (\vert\states\vert + 1)^{c \cdot \dimension^{2}}$, where $c$ is a constant independent of the considered VASS and $d$ its branching degree (i.e., the highest number of outgoing edges on any state). Naive use of this bound for multi energy games with arbitrary integer weights would induce a \textit{triple} exponential bound for memory. Indeed, recall that $\largestW$ denotes the maximal absolute weight that appears in a game $\gameParFull$. A straightforward translation of a game with arbitrary weights into an equivalent game that uses only weights in $\{-1,0,1\}$ induces a blow-up by $\largestW$ in the size of the state space, and thus an exponential blow-up by $\largestW$ in the depth of the tree, which becomes doubly exponential as we have 
\begin{equation*}
l = 2^{(\branching-1) \cdot \largestW \cdot \vert\states\vert} \cdot (\largestW \cdot \vert\states\vert + 1)^{c \cdot \dimension^{2}} = 2^{(\branching-1) \cdot 2^{\largestWenc}\cdot \vert\states\vert} \cdot (\largestW \cdot \vert\states\vert + 1)^{c \cdot \dimension^{2}},
\end{equation*}
where $\largestWenc$ denotes the number of bits used by the encoding of $\largestW$. Moreover, the width of the tree increases as $\branching^{l}$, i.e., it increases exponentially with the depth. So straight application of previous results provides an overall tree of triple exponential size. In this paper we improve this bound and prove a single exponential upper bound, even for multi energy \textit{parity} games. We proceed in two steps, first studying the depth of the epSCT, and then showing how to compress the tree into a \textit{directed acyclic graph} (DAG) of \textit{single} exponential size.

\begin{lemma}
\label{lemma_depth_parity}
Let $\gameParFull$ be a multi energy parity game such that $\largestW$ is the maximal absolute weight appearing on an edge and $\branching$ the branching degree of $\gamePar$. Suppose there exists a finite-memory winning strategy for $\playerOne$. Then there is an even-parity self-covering tree for $\initState$ of depth at most $\depth = 2^{(\branching-1) \cdot \vert\states\vert} \cdot \left( \largestW \cdot \vert\states\vert + 1\right) ^{c \cdot \dimension^{2}}$, where $c$ is a constant independent of $\gamePar$.
\end{lemma}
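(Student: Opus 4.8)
The plan is to show that the depth bound $\depth = 2^{(\branching-1)\cdot\vert\states\vert}\cdot(\largestW\cdot\vert\states\vert+1)^{c\cdot\dimension^{2}}$ for the epSCT can be obtained \emph{directly} at the level of arbitrary integer weights, without going through the lossy $\{-1,0,1\}$ translation that produced the doubly exponential blow-up. The starting point is Lemma~7 of \cite{brazdil_ICALP10}, which gives a depth bound of the form $2^{(\branching-1)\cdot\vert\states\vert}\cdot(\vert\states\vert+1)^{c\cdot\dimension^{2}}$ for self-covering trees on VASS with unit weights. Inspecting how that bound is derived, one sees it rests on two ingredients: (i) a well-quasi-ordering / Dickson-style argument bounding how long a branch can descend before some ancestor energy-dominates the current node, and (ii) a quantitative bound, via results on solutions of linear Diophantine systems \cite{borosh_AMS76}, on the magnitudes of the energy coordinates that need to be tracked. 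The key observation is that the factor $\largestW$ enters \emph{only} through the per-step energy change, i.e.\ it scales the \emph{weights} on the coordinates but does \emph{not} increase the number of distinct states, the branching degree, or the combinatorial structure of the tree. Consequently, reworking the counting argument so that $\largestW$ is carried as a numerical parameter rather than being absorbed into $\vert\states\vert$ replaces the $(\vert\states\vert+1)$ base by $(\largestW\cdot\vert\states\vert+1)$ while leaving the exponent $2^{(\branching-1)\cdot\vert\states\vert}$ untouched.

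Concretely, first I would invoke Lemma~\ref{purefinitememory} and Lemma~\ref{lemma_pumpCycle} to reduce the existence of a finite-memory winning strategy to the existence, against every pure memoryless $\playerTwo$-strategy, of regular outcomes whose repeated cycle is energy-nonnegative and of even minimal priority; unfolding $\playerOne$'s finite-memory strategy into a tree and pruning each branch at the first even-descendance energy ancestor yields an epSCT (finiteness of the pruned tree follows from the well-quasi-ordering of $\nat^{\dimension}$). Second, I would redo the depth estimate of \cite{brazdil_ICALP10} keeping $\largestW$ explicit: along any root-to-leaf branch, group consecutive levels according to the priority structure so that the parity condition on the ancestor-to-leaf path can be enforced, and within each group bound the number of levels before an energy-dominating ancestor of the \emph{same} game state and \emph{same} relevant priority must appear. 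The Diophantine bounds control the coordinate values that matter: because a single step changes each coordinate by at most $\largestW$, the thresholds past which dominance is forced scale linearly in $\largestW$, giving the base $(\largestW\cdot\vert\states\vert+1)$.

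The parity condition is handled by the same device used to pass from SCTs to epSCTs: a leaf is legitimate only when its energy ancestor is reached along a path of even minimal priority. To accommodate this I would enrich the notion of ``repeated configuration'' to include the minimal priority seen since a candidate ancestor, which multiplies the number of equivalence classes by a factor depending only on the number of priorities $\priorities$; since $\priorities\le\vert\states\vert$, this is absorbed into the stated bound (possibly into the constant $c$) and does not affect the asymptotic form. Thus parity costs at most a polynomial factor and the depth remains of the claimed shape.

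The main obstacle I anticipate is the second step: faithfully re-deriving the \cite{brazdil_ICALP10} bound with $\largestW$ treated as a free parameter, rather than reusing it as a black box on the translated unit-weight game. One must verify that every place where the original proof uses $\vert\states\vert$ as a stand-in for the range of trackable energy values is correctly replaced by $\largestW\cdot\vert\states\vert$, and that the exponential factor $2^{(\branching-1)\cdot\vert\states\vert}$ genuinely depends only on the state space and branching and not on the weights. Getting this accounting right is precisely what separates the single-exponential bound from the naive triple-exponential one, and it is where the argument must be carried out most carefully; the reduction to epSCTs and the parity bookkeeping are comparatively routine.
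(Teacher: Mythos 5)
Your high-level reduction (unfold the finite-memory strategy, prune each branch at an even-descendance energy ancestor, then bound the depth) matches the paper's, and your guiding intuition --- that $\largestW$ should only enter the polynomial base and not the exponent --- is the right one. But the decisive step is exactly the one you defer: you propose to re-derive the bound of \cite{brazdil_ICALP10} from scratch with $\largestW$ carried as a free parameter, and you acknowledge that you have not done the accounting. The paper avoids this entirely by a decomposition you never mention. It splits the argument into (a) a \emph{one-player} path-length bound and (b) a \emph{two-player} induction on the choice degree $r=\vert\{(s,t)\in\edges\mid s\in\statesTwo\}\vert-\vert\statesTwo\vert\le(\branching-1)\cdot\vert\states\vert$ of $\playerTwo$, which doubles the depth at each induction step via a strategy-composition lemma (if $\playerOne$ wins $\gamePar^{\tau}$ with credit $v_{\tau}$ and wins $\gamePar^{R}$ with credit $v_{R}$, then he wins $\gamePar$ with credit $v_{\tau}+v_{R}$ by switching between the two strategies, and the merged outcome preserves both energy and parity). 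This induction is where the factor $2^{(\branching-1)\cdot\vert\states\vert}$ actually comes from, and it is manifestly weight-independent. For (a), the paper \emph{does} use the unit-weight translation you reject: it is harmless there because the state blow-up to $\vert\states\vert^{2}\cdot\largestW$ only enters the base of the polynomial bound $(\cdot)^{c\cdot\dimension^{2}}$, yielding $(\largestW\cdot\vert\states\vert+1)^{c'\cdot\dimension^{2}}$, and because the translation (dummy states of in- and out-degree one, owned by $\playerOne$) leaves $\playerTwo$'s choice degree unchanged, so step (b) is unaffected. Without some such mechanism, your claim that the exponent is ``untouched'' by $\largestW$ remains an assertion rather than a proof.

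A secondary divergence concerns parity. You enlarge configurations by the minimal priority seen since a candidate ancestor, which multiplies the number of classes by the number of priorities. The paper instead shifts the repeated segment so that it begins at a state carrying the minimal (even) priority of the segment; after the shift, every eliminated cycle contains only states of priority at least that value, so cycle elimination is parity-safe and no product with priorities is needed. Your device can work for the per-branch counting, but you must confine it to the one-player analysis: if the priority-enriched state space were also fed into whatever produces the exponential factor, the exponent would degrade to $2^{(\branching-1)\cdot\vert\states\vert\cdot\priorities}$, which is not the stated bound. As written, the proposal leaves both the source of the exponential factor and the safe placement of the parity bookkeeping unresolved.
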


Lemma \ref{lemma_depth_parity} eliminates the exponential blow-up in depth induced by a naive coding of arbitrary weights into $\{-1,0,1\}$ weights, and implies an overall doubly exponential upper bound. Our proof is a generalization of \cite[Lemma 7]{brazdil_ICALP10}, using a more refined analysis to handle both \textit{parity} and \textit{arbitrary integer weights}. The idea is the following. First, consider the one-player case. The epSCT is reduced to a path. By Lemma \ref{lemma_pumpCycle}, it is composed of a finite prefix, followed by an infinitely repeated sequence of positive energy level and even minimal priority. The point is to bound the length of such a sequence by eliminating cycles that are not needed for energy or parity. Second, to extend the result to two-player games, we use an induction on the number of choices available for $\playerTwo$ in a given state. Intuitively, we show that if $\playerOne$ can win with an epSCT $\tree_{A}$ when $\playerTwo$ plays edges from a set $A$ in a state $s$, and if he can also win with an epSCT $\tree_{B}$ when $\playerTwo$ plays edges from a set $B$, then he can win when $\playerTwo$ chooses edges from both $A$ and $B$, with an epSCT whose depth is bounded by the sum of depths of $\tree_{A}$ and $\tree_{B}$.

\begin{proof}
The proof is made in two steps. First, we consider the one-player case, where $\states_{2} = \emptyset$. Second, we use an induction scheme over the choice degree of $\playerTwo$ to extend our results to the two-player case.

We start with $\states_{2} = \emptyset$, the one-player game. By Lemma \ref{lemma_pumpCycle}, a winning play is of the form $\play = \prefix \cdot (\finplay_{\infty})^{\omega}$ such that $\el(\finplay_{\infty}) \geq 0$ and $\minpar(\play) = \min \left\lbrace \parity(s) \;\vert\; s \in \finplay_{\infty} \right\rbrace$ is even. Notice that such a play corresponds to the epSCT defined above, as it reduces to an even-parity self-covering path $\langle s_{init}, (0, \ldots, 0) \rangle \ancPath \langle s, u\rangle \ancPath \langle s, u'\rangle$ with $u' \geq u$. Therefore its existence is guaranteed and it remains to bound its length. Given such a path, the idea is to eliminate unnecessary cycles, in order to reduce its length while maintaining the needed properties (i.e., positive energy and even minimal priority). First, notice that cycles in the sub-path $\langle s_{init}, (0, \ldots, 0) \rangle \ancPath \langle s, u\rangle$ can be trivially erased as they are only visited a finite number of times and thus (a) the initial credit can compensate for the loss of their potential positive energy effect, and (b) they do not contribute in the parity. Now consider the sub-path $\langle s, u\rangle \ancPath \langle s, u'\rangle$. Since it induces a winning play, its minimal priority is even. Let $p_{m}$ be this priority. We may suppose w.l.o.g. that $\parity(s) = p_{m}$, otherwise it suffices to shift this sub-path to $\langle s', v\rangle \ancPath \langle s', v'\rangle$ for some state $s'$ such that $\parity(s') = p_{m}$ and $v' \geq v$, and add the sub-path $\langle s, u\rangle \ancPath \langle s', v\rangle$ to the finite prefix. Now we may eliminate each cycle of $\langle s, u\rangle \ancPath \langle s, u'\rangle$ safely in regards to the parity objective as they only contain states with greater or equal priority. Thus, we only need to take care of the energy, and fall under the scope of \cite[Lemma 15]{brazdil_ICALP10} for the special case of weights in $\left\lbrace -1, 0, 1\right\rbrace$, where an upper bound $h\left(\vert \states\vert, \dimension\right) = \left(\vert \states\vert + 1\right)^{c \cdot \dimension^{2}}$ on the length of such a path is shown.

We claim that for a one-player game $\game$, with weights in $\left\lbrace -\largestW, -\largestW +1, \ldots{}, \largestW -1, \largestW\right\rbrace$, an upper bound $h\left(\largestW, \vert \states\vert, \dimension\right) = \left(\largestW \cdot \vert\states\vert + 1 \right)^{c \cdot \dimension^{2}}$ is obtained. Indeed, one can translate $\gameParFull$ into an equivalent game $\game'_{\parity'} = \left( \statesOne', \statesTwo, \initState, \edges', \dimension, \weight', \parity'\right)$ such that each edge of $\gamePar$ is split into at most $\largestW$ edges in $\game'_{\parity'}$, with at most $(\largestW -1)$ dummy states in between, so that each edge of $\game'_{\parity'}$ only uses weights in $\left\lbrace -1, 0, 1\right\rbrace$. Let $\states_{d}$ denote the set of these added dummy states. We define this translation $\translation \colon \gamePar \mapsto \game'_{\parity'}$ with $\translation(\statesOne) = \statesOne \cup \states_{d}$, $\translation(\statesTwo) = \statesTwo$, $\translation(\initState) = \initState$, $\translation(\edges) = \bigcup_{(s, t) \in \edges} \translation((s,t))$, $\translation(\dimension)  = \dimension$, $\translation(\weight) = \weight' \colon \edges' \rightarrow \left\lbrace -1, 0, 1\right\rbrace^{\dimension}$, $\translation(\parity) = \parity' \colon \states' \rightarrow \nat$ such that for all $(s, t) \in \edges$ such that $m = \max \left\lbrace \weight(s, t) (j) \;\vert\; 1 \leq j \leq \dimension\right\rbrace - 1$, we have that $\translation\left( (s, t)\right) = \left\lbrace (s, s_{d}^{1}), (s_{d}^{1}, s_{d}^{2}), \ldots{}, (s_{d}^{m-1}, s_{d}^{m}), (s_{d}^{m}, t)\right\rbrace$ such that 
\begin{equation*}
\Big(\forall\; j > 0,\; s_{d}^{j} \in \states_{d} \;\wedge\; p'(s_{d}^{j}) = p(s)\Big) \;\wedge\; \sum_{(q, r) \in \translation\left( (s, t)\right)} \weight' (q, r) = \weight (s, t).
\end{equation*}
To be formally correct, we have to add that for all $s_{d} \in S_{d}$, we have $\degree_{in}(s_{d}) = \degree_{out}(s_{d}) = 1$, and for all $s \not\in \states_{d}$, we have $p'(s) = p(s)$. This translation does not hinder the outcome of the game as each edge in $\gamePar$ has a unique corresponding path in $\game'_{\parity'}$ that preserves the weights and the visited priorities, and that offers no added choice to $\playerOne$. Since $\gamePar$ possesses $\vert\edges\vert \leq \vert\states\vert^{2}$ edges, and for each edge of $\gamePar$, we add at most $(\largestW -1)$ dummy states in $\game'_{\parity'}$, we have $\vert\states'\vert \leq \vert\states\vert + \vert\states\vert^{2}\cdot (\largestW -1) \leq \vert\states\vert^{2}\cdot \largestW$. Therefore, by applying \cite[Lemma 15]{brazdil_ICALP10} on $\game'_{\parity'}$, we obtain the following upper bound:
\begin{equation*}
h\left(W, \vert \states\vert, \dimension\right) = h\left(\vert \states'\vert, \dimension\right) \leq \left(\vert \states\vert^{2}\cdot \largestW + 1\right)^{c \cdot \dimension^{2}} \leq \left(\largestW \cdot \vert \states\vert + 1\right)^{c' \cdot \dimension^{2}}
\end{equation*}
for some constant $c'$ that is independent of $\gamePar$.

Now, consider $\states_{2} \neq \emptyset$. (I) We extend \cite[Lemma 16]{brazdil_ICALP10} for parity. This will help us to establish an induction scheme over the choice degree of $\playerTwo$. Suppose $s \in \statesTwo$ has more than one outgoing edge. Let $\tau = (s,t) \in \edges$ be one of them and $R \subset \edges$ denote the nonempty set of other outgoing edges. Let $\gamePar^{\tau}$ (resp. $\gamePar^{R}$) be the game induced when removing $R$ (resp. $\tau$) from $\gamePar$. Suppose that (a) $s$ is winning for $\playerOne$ in $\gamePar^{R}$ for initial credit $v_{R} \in \nat^{\dimension}$, and (b) there exists some state $s' \in \states$ such that $s'$ is winning for $\playerOne$ in $\gamePar^{\tau}$ for initial credit $v_{\tau} \in \nat^{\dimension}$. We claim that $s'$ is winning in $\gamePar$ for initial credit $v_{0} = v_{\tau} + v_{R}$. Indeed, let $\strat_{1}^{\tau}$ and $\strat_{1}^{R}$ resp. denote winning strategies for $\playerOne$ in $\gamePar^{\tau}$ and $\gamePar^{R}$. Let $\playerOne$ use the following strategy. Player $\playerOne$ plays $\strat_{1}^{\tau}$ as long as $\playerTwo$ does not play any edge of $R$. If such an edge is played, then $\playerOne$ switches to strategy $\strat_{1}^{R}$ and plays it until edge $\tau$ is played again by $\playerTwo$, in which case $\playerOne$ switches back to $\strat_{1}^{\tau}$, and so on. In this way, the outcome of the game is guaranteed to be a play $\play = s' \ldots{} s \ldots{} s \ldots{} s \ldots{} $ resulting from a merge between a play consistent with $\strat_{1}^{\tau}$ over $\gamePar^{\tau}$ (whose energy level is bounded by $-v_{\tau}$ at all times), and a play consistent with $\strat_{1}^{R}$ over $\gamePar^{R}$ (whose energy level is bounded by $-v_{R}$ at all times). Therefore, the combined overall energy level of any prefix $\prefix$ of this play is bounded by $(-v_{\tau} - v_{R})$ as positive cycles in $\gamePar^{\tau}$ and $\gamePar^{R}$ do remain positive in $\gamePar$. Furthermore, the parity condition is preserved in $\gamePar$. Indeed, suppose it is not. Thus, there exists a state visited infinitely often in the outcome such that its priority is minimal and odd. However, as the outcome results from merging plays resp. consistent with $\strat_{1}^{\tau}$ and $\strat_{1}^{R}$, this implies that one of those strategies yields an odd minimal priority, which contradicts the fact that they are winning. This proves the claim.

(II) We apply the induction scheme of \cite[Lemma 18]{brazdil_ICALP10} on $r = \vert\lbrace(s,t) \in \edges \;\vert\; s \in \states_{2}\rbrace\vert - \vert\states_{2}\vert \leq (d-1) \cdot \vert\states\vert$, the choice degree of $\playerTwo$. Notice that our translation $\translation \colon \gamePar \mapsto \game'_{\parity'}$ maintains this choice degree unchanged. The claim is that for a winning state $s'$, there is an epSCT of depth bounded by $2^{r}\cdot h(\largestW, \vert\states\vert, \dimension)$. We have proved that for the base case $r = 0$, similar to $\states_{2} = \emptyset$, this claim is true. So assume it holds for $r$, it remains to prove that it is preserved for $r + 1$. Let $s \in \states_{2}$ be such that $\playerTwo$ has at least two outgoing edges. As before, we define $\gamePar^{\tau}$ and $\gamePar^{R}$. Clearly, the choice degree of $\playerTwo$ is at most $r$ in both games. Let $s'$ be a winning state in $\gamePar$. As $\playerTwo$ has less choices in both $\gamePar^{\tau}$ and $\gamePar^{R}$, clearly $s'$ is still winning in those games. If an epSCT in either of them (which are guaranteed to exist and have depth bounded by $2^{r} \cdot h(\largestW, \vert\states\vert, \dimension)$ by hypothesis) do not contain the state $s$, then the claim is verified. Now suppose we have two epSCTs for games $\gamePar^{\tau}$ and $\gamePar^{R}$ such that they both contain state $s$. Notice that $s$ is winning in those two games and as such, is the root of two respective epSCTs of depth less than $2^{r} \cdot h(\largestW, \vert\states\vert, \dimension)$. Applying (I) on states $s'$ and $s$, we get an epSCT for $s'$ in $\gamePar$ of depth $2 \cdot 2^{r} \cdot h(\largestW, \vert\states\vert, \dimension)$, which concludes the proof.\qed
\end{proof}

\paragraph{{\bf From multi energy parity games to multi energy games.}}
Let $\gamePar$ be a MEPG and assume that $\playerOne$ has a winning strategy in that game. By Lemma \ref{lemma_depth_parity}, there exists an epSCT whose depth is bounded by $\depth$. 
As a direct consequence of that bounded depth, we have that $\playerOne$, by playing the strategy prescribed by the epSCT, enforces a stronger objective than the parity objective. Namely, this strategy ensures to ``never visit more than $\depth$ states of odd priorities before seeing a smaller even priority'' (which is a safety objective). Then, the parity condition can be transformed into additional energy dimensions. 

While our transformation shares ideas with the classical transformation of parity objectives into safety objectives, first proposed in \cite{BernetJW02} (see also \cite[Lemma 6.4]{DR11}), it is technically different because energy levels cannot be reset (as it would be required by those classical constructions). The reduction is as follows. For each odd priority, we add one dimension. The energy level in this dimension is decreased by $1$ each time this odd priority is visited, and it is increased by $\depth$ each time a smaller even priority is visited. If $\playerOne$ is able to maintain the energy level positive for all dimensions (for a given initial energy level), then he is clearly winning the original parity objective; on the other hand, an epSCT strategy that wins the original objective also wins the new game.

\begin{lemma}
\label{lemma_parityToEnergy}
Let $\gameParFull$ be a multi energy parity game with priorities in $\{ 0,1,\dots,2 \cdot \priorities\}$, such that $\largestW$ is the maximal absolute weight appearing on an edge. Then we can construct a multi energy game $\game$ with the same set of states, $(\dimension + \priorities)$ dimensions and a maximal absolute weight bounded by $l$, as defined by Lemma \ref{lemma_depth_parity}, such that $\playerOne$ has a winning strategy in $\game$ iff he has one in $\gamePar$.
\end{lemma}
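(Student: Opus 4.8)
The plan is to build $\game$ from $\gamePar$ by adding exactly $\priorities$ fresh dimensions, one per odd priority in $\{1,3,\dots,2\priorities-1\}$, while keeping the same state space, edge set, and branching degree. Fix an odd priority $2i-1$ and let it govern the new dimension indexed $\dimension+i$. The weight function $\weight'$ agrees with $\weight$ on the original $\dimension$ dimensions; on dimension $\dimension+i$, I would assign to each edge $(s,t)$ a contribution that depends only on the priority $\parity(s)$ of the source (or, symmetrically, on the priority of the visited state, whichever is cleaner for the induction): the dimension decreases by $1$ whenever a state of priority $2i-1$ is visited, increases by $\depth$ whenever a state of strictly smaller even priority (i.e.\ some $2j$ with $2j < 2i-1$) is visited, and stays unchanged otherwise. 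Here $\depth$ is the epSCT depth bound from Lemma~\ref{lemma_depth_parity}. The maximal absolute weight on each new dimension is then $\depth$, and since $\depth \geq \largestW$ the overall maximal absolute weight of $\game$ is bounded by $\depth$ as claimed; the number of dimensions is $\dimension+\priorities$.

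Next I would prove the two directions of the equivalence separately. For the ``only if'' direction (a winning strategy in $\gamePar$ yields one in $\game$), I would take the epSCT strategy $\strat_1^{\tree}$ given by Lemma~\ref{lemma_depth_parity}, which wins the original objective with an epSCT of depth at most $\depth$. The crucial observation, already flagged in the paragraph preceding the lemma, is that this strategy enforces the \emph{safety} refinement ``between any two consecutive visits to a state whose priority is the even minimum of the relevant cycle, fewer than $\depth$ states of each strictly larger odd priority are seen.'' I would argue that along every consistent play, between two successive decreases of dimension $\dimension+i$ by the odd priority $2i-1$ there is an intervening gain of $\depth$ from a smaller even priority before $\depth$ consecutive odd-$2i-1$ visits accumulate, so the energy on dimension $\dimension+i$ never drops below its starting value by more than a bounded amount. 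Hence a finite initial credit on the new dimensions suffices, and combined with the original winning credit $\initState$ is winning for $\objEL$ in $\game$.

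For the ``if'' direction (a winning strategy in $\game$ yields one in $\gamePar$), I would show that keeping all $\dimension+\priorities$ energy levels nonnegative forces the parity condition on the original game. The key point is that if some odd priority $2i-1$ were the minimal priority seen infinitely often, then from some point on no strictly smaller even priority is ever visited, so dimension $\dimension+i$ only ever decreases (by $1$ each time $2i-1$ recurs, which happens infinitely often) and receives no compensating $+\depth$ increments; thus it tends to $-\infty$, contradicting that the $\game$-strategy keeps energy positive. Since the original $\dimension$ dimensions are preserved verbatim, the same strategy also satisfies $\objEL$ in $\gamePar$, and parity holds, so it wins $\objEL \cap \objPar$. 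I expect the main obstacle to be the quantitative bookkeeping in the ``only if'' direction: one must verify precisely that the depth bound $\depth$ on the epSCT is exactly what guarantees that the $+\depth$ reward for a smaller even priority always arrives before $\depth$ odd visits exhaust it, and that this holds simultaneously across all $\priorities$ new dimensions under the \emph{same} epSCT strategy. Care is also needed because energies cannot be reset (unlike the classical parity-to-safety reduction of \cite{BernetJW02}), so the argument must be a global invariant over the whole infinite play rather than a per-phase reset.
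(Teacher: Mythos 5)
Your proposal matches the paper's proof essentially step for step: the same construction (one fresh dimension per odd priority, weight $-1$ on visits of that priority, $+\depth$ on visits of a strictly smaller even priority, $0$ otherwise, with $\depth$ the epSCT depth bound of Lemma~\ref{lemma_depth_parity}), the same ``only if'' argument via the epSCT strategy guaranteeing that every window of $\depth$ states has even minimal priority so the net effect on each new dimension is at least $-(\depth-1)+\depth>0$, and the same ``if'' argument that an odd minimal priority visited infinitely often would drive its dimension to $-\infty$. The only cosmetic difference is that the paper fixes the weights as a function of the priority of the \emph{target} state of each edge, a choice you correctly flag as immaterial.
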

\begin{proof}
Let $\gameParFull$ be a MEPG with priorities in $\{ 0,1,\dots,2\cdot\priorities\}$. Let $\game = \left( \statesOne, \statesTwo, \edges, (\dimension + \priorities), \weight'\right)$ be the multi energy game\index{multi energy game} (MEG)\index{MEG|see{multi energy game}} obtained by applying the following transformation: $\forall\; (s, t) \in \edges$, $\forall\; 1 \leq j \leq \dimension$, $\weight'((s,t))(j) = \weight((s,t))(j)$, and (a) if $p(t)$ is even, $\forall\; \dimension < j \leq \dimension + \frac{p(t)}{2}$, $\weight'((s,t))(j) = 0$ and $ \forall\; \dimension  + \frac{p(t)}{2} < j \leq \dimension + \priorities$, $\weight'((s,t))(j) = l$, or (b) if $p(t)$ is odd, $\forall\; \dimension < j \leq \dimension + \priorities$, $j \neq \dimension + \left\lceil\frac{p(t)}{2}\right\rceil$, $\weight'((s,t))(j) = 0$ and $\weight'((s,t))(\dimension + \left\lceil\frac{p(t)}{2}\right\rceil) = -1$. We have to prove both ways of the equivalence.

First, suppose $\strat_{1} \in \stratsPureFinite_{1}$ is a winning strategy for $\playerOne$ in the MEPG $\gamePar$. By Lemma \ref{lemma_depth_parity}, there is an epSCT of depth at most $l$ for $s_{init}$. Thus, we know that in every repeated sequence of $l$ states, the minimal visited priority will be even. Therefore, for all additional dimensions, ranging from $\dimension + 1$ to $\dimension + \priorities$, the effect of a sequence of $l$ states will be bounded from below by $-1 \cdot (l-1) + l$, which is positive. Thus strategy $\strat_{1}$ is also winning in $\game$ (with initial credit bounded by $l$ on additional dimensions).

Second, suppose $\strat_{1} \in \stratsPureFinite_{1}$ is a winning strategy for $\playerOne$ in the MEG $\game$, as defined above. Since $\strat_{1}$ is winning, it yields an SCT (epSCT without the parity condition) of bounded depth such that $\playerOne$ is able to enforce positive energy cycles. By definition of weights over $\game$, this cannot be the case if the minimal priority infinitely often visited is odd. Thus this strategy is winning for parity on $\gamePar$, and stays winning for energy over dimensions $1$ to $\dimension$ as weights are unchanged.\qed
\end{proof}

\paragraph{{\bf Bounding the width.}} Thanks to Lemma \ref{lemma_parityToEnergy}, we continue with multi energy games without parity. In order to bound the overall size of memory for winning strategies, we consider the width of self-covering trees. The following lemma states that SCTs, whose width is at most doubly exponential by application of Lemma \ref{lemma_depth_parity}, can be compressed into \textit{directed acyclic graphs} (DAGs) of single exponential width. Thus we eliminate the second exponential blow-up and give an overall single exponential bound for memory of winning strategies. 

\begin{lemma}
\label{lemma_width}
Let $\gameFull$ be a multi energy game such that $\largestW$ is the maximal absolute weight appearing on an edge and $d$ the branching degree of $\game$. Suppose there exists a finite-memory winning strategy for $\playerOne$. Then, there exists $\strat^{\DAG}_{1} \in \stratsPureFinite_{1}$ a winning strategy for $\playerOne$ described by a DAG $\DAG$ of depth at most $\depth = 2^{(d-1) \cdot \vert\states\vert} \cdot \left( \largestW \cdot \vert\states\vert + 1\right) ^{c \cdot \dimension^{2}}$ and width at most $\width = \vert\states\vert \cdot (2\cdot l\cdot \largestW + 1)^{\dimension}$, where $c$ is a constant independent of $\game$. Thus the overall memory needed to win this game is bounded by the single exponential $\depth \cdot \width$.
\end{lemma}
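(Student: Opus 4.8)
The plan is to start from the self-covering tree (SCT) guaranteed by Lemma~\ref{lemma_depth_parity}, applied here with the parity condition already removed by Lemma~\ref{lemma_parityToEnergy} (so the priority requirement on downward paths drops and plain SCTs suffice), and to \emph{compress} this tree by collapsing, at each level, all nodes that carry the same label. Recall that a node with $\lab(\node) = \langle t, u\rangle$ records both a game state $t \in \states$ and an accumulated energy vector $u \in \integ^{\dimension}$, and that along any root-to-leaf branch the energy changes by at most $\largestW$ per edge on every dimension, starting from $u = (0,\dots,0)$ at the root. Consequently a node at depth $i$ satisfies $\Vert u\Vert_{\infty} \leq i \cdot \largestW \leq \depth \cdot \largestW$. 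This boundedness of energy levels per level is the observation that drives the whole argument.

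First I would bound the number of \emph{distinct} labels that can occur at any fixed depth. Since each of the $\dimension$ coordinates of $u$ lies in $\{-\depth\cdot\largestW, \dots, \depth\cdot\largestW\}$, there are at most $(2\cdot\depth\cdot\largestW+1)^{\dimension}$ possible energy vectors and at most $\vert\states\vert$ possible states, hence at most $\vert\states\vert\cdot(2\cdot\depth\cdot\largestW+1)^{\dimension} = \width$ distinct labels. Thus, if we identify any two nodes lying at the same depth and sharing the same label, every level of the resulting structure contains at most $\width$ nodes, which is exactly the claimed width bound; the depth bound $\depth$ is inherited verbatim from Lemma~\ref{lemma_depth_parity}. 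The compressed object therefore has at most $\depth\cdot\width$ nodes, a single exponential, replacing the doubly exponential width $d^{\depth}$ of the raw tree.

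Next I would perform the merge explicitly and argue soundness. Processing the SCT level by level, I identify all nodes at a given depth carrying the same label, keep one representative together with a single one of their subtrees as a common continuation, and redirect every incoming edge to this representative; for $t \in \statesTwo$ the representative retains one out-edge per outgoing edge of $t$ in the game (so that all choices of $\playerTwo$ remain answered), whereas for $t \in \statesOne$ one out-edge suffices. This is legitimate because the label $\langle t, u\rangle$ fully determines the current state and accumulated energy, so a subtree that is a winning continuation below one occurrence of a label is equally a winning continuation below any other occurrence. The resulting object is a leveled DAG $\DAG$ of depth $\leq \depth$ and width $\leq \width$. Crucially, the energy-ancestor (self-covering) relation is itself a property of labels: a leaf $\langle t, u\rangle$ is covered by an ancestor $\langle t, u'\rangle$ exactly when the states coincide and $u' \nodeLess u$. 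Since merging never alters labels and back-edges always run from a deeper level to a strictly shallower one, every covering cycle of $\DAG$ closes with non-negative energy effect $u - u' \geq 0$ on all dimensions, exactly as in $\tree$. Replaying the pebble strategy on $\DAG$ then produces plays whose energy stays bounded below by the same bound, so $\playerOne$ still wins with finite initial credit, now using memory of size $\depth\cdot\width$.

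I expect the main obstacle to be precisely this soundness step: proving that the compression introduces no \emph{spurious} cycle with a strictly negative energy effect, and that following the back-edges of $\DAG$ still yields genuine infinite winning plays rather than a play that can get stuck or drift negative. The delicate bookkeeping is that, after a leaf is merged with other nodes, the ancestor to which its pebble should jump must remain well defined and energy-covering; this is where the deterministic ``oldest even-descendance energy ancestor'' convention fixed for the SCT strategy, together with the strictly leveled structure of $\DAG$ (back-edges going from depth $j$ to some shallower depth with a dominated energy label), is used to certify that each traversed cycle is non-negative and that the strategy encoded by $\DAG$ is winning for the multi energy objective.
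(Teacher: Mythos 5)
Your construction is essentially the paper's: compress the self-covering tree level by level by identifying nodes that carry the same label $\langle t,u\rangle$, inherit the depth bound $\depth$ from Lemma~\ref{lemma_depth_parity}, and bound the width by counting possible labels, $\vert\states\vert\cdot(2\cdot\depth\cdot\largestW+1)^{\dimension}$. The counting, the depth preservation, and the remark that a $\playerTwo$ representative must keep one child per outgoing game edge are all correct and match the paper.

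The gap is the soundness of the merge, which is precisely where the paper spends its effort. You argue that the covering relation is a property of labels, so ``every covering cycle of $\DAG$ closes with non-negative energy effect,'' and you then assert that replaying the pebble strategy therefore keeps the energy bounded below. That implication is not immediate: a play prefix produced by the pebble is not a concatenation of completed covering cycles. The pebble descends partway, jumps from a leaf to one of possibly several energy ancestors (a set that grows under merging, since a representative inherits the ancestors of every node merged into it), descends along a different branch, jumps again, and so on; descents and jumps nest and interleave in combinations that did not exist in the original tree, so non-negativity of each individual covering cycle does not by itself bound the energy of an arbitrary prefix. Your closing paragraph concedes this is ``the main obstacle'' but does not supply the argument. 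The paper closes it by an induction over single merge steps, decomposing any hypothetical energy-violating pebble path in $\DAG_{i}$ and using Lemma~\ref{lemma_ancPath} together with the cycle-extraction Lemma~\ref{lemma_noCycle} to exhibit a path in $\DAG_{i-1}$ with energy at most as large, contradicting the induction hypothesis. A shorter way to finish within your setup is an explicit invariant: at every moment of the pebble walk, the accumulated energy $\el$ of the play prefix dominates, coordinatewise, the energy component $u$ of the label of the node currently occupied. This holds at the root, is preserved by every downward edge (both sides change by the same weight, because merging only identifies nodes with identical labels and hence keeps labels consistent along edges), and is preserved by every upward jump (the play's energy is unchanged while the label can only decrease). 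Since every label in $\DAG$ is bounded below by $-\depth\cdot\largestW$ in each coordinate, the initial credit $\depth\cdot\largestW$ suffices, with no case analysis on cycles. Either route must be carried out; as written, the proposal only names the obligation.
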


The sketch of this proof is the following. By Lemma \ref{lemma_depth_parity}, we know that there exists a tree $\tree$, and thus a DAG, that satisfies the bound on depth. We construct a finite sequence of DAGs, whose first element is $\tree$, so that (1) each DAG describes a winning strategy for the same initial credit, (2) each DAG has the same depth, and (3) the last DAG of the sequence has its width bounded by $\vert\states\vert \cdot (2\cdot l\cdot \largestW + 1)^{\dimension}$. This sequence $\DAG_{0} = \tree, \DAG_{1}, \DAG_{2}, \ldots{}, \DAG_{n}$ is built by merging nodes on the same level of the initial tree depending on their labels, level by level. The key idea of this procedure is that what actually matters for $\playerOne$ is only the current energy level, which is encoded in node labels in the self-covering tree $\tree$. Therefore, we merge nodes with identical states and energy levels: since $\playerOne$ can essentially play the same strategy in both nodes, we only keep one of their subtrees.

It is possible to further reduce the practical size of the compressed resulting DAG by merging nodes according to a ``greater or equal'' relation over energy levels rather than simply equality (Fig. \ref{fig:merge}). This improvement is part of the algorithm that follows, and it has a significant impact on the practical width of DAGs as it can then be bounded by the number of \textit{incomparable} labeling vectors instead of \textit{unequivalent} ones.

\begin{figure}[htb]
\begin{minipage}[b]{0.52\linewidth}
\hspace{0mm}\scalebox{0.9}{\begin{tikzpicture}[->,>=stealth',shorten >=1pt,auto,node
    distance=2.5cm,bend angle=45, scale=0.7, font=\scriptsize]
    \tikzstyle{p1}=[draw,ellipse,text centered,minimum width=26mm]
    \tikzstyle{p2}=[draw,rectangle,text centered,minimum width=22mm]
    \node[p2]  (0)  at (0, 0) {$\langle s_{0}, (0, 0)\rangle$};
    \node[p2]  (1) at (-2, -2) {$\langle s_{1}, (-1, 1)\rangle$};
    \node[p2]  (2) at (2, -2)  {$\langle s_{2}, (0, 2)\rangle$};
    \node[p1]  (3) at (-2, -4)  {$\langle s_{3}, (-1, 2)\rangle$};
    \node[p1]  (4)  at (2, -4) {$\langle s_{3}, (0, 2)\rangle$};
    \node[p1]  (5)  at (-2, -6) {$\langle s_{4}, (0, 1)\rangle$};
    \node[p1]  (6)  at (2, -6) {$\langle s_{5}, (-2, 3)\rangle$};
    \node[p2]  (7)  at (-2, -8) {$\langle s_{0}, (0, 0)\rangle$};
    \node[p1]  (8)  at (2, -8) {$\langle s_{3}, (0, 3)\rangle$};
    \coordinate[shift={(0mm,5mm)}] (init) at (0.north);
    \coordinate[shift={(2mm,3mm)}] (4a) at (4.west);
    \coordinate[shift={(24mm,3mm)}] (4b) at (4.west);
    \coordinate[shift={(24mm,-3mm)}] (8b) at (8.west);
    \coordinate[shift={(2mm,-3mm)}] (8a) at (8.west);
    \coordinate[shift={(13mm,-8mm)}] (lc1) at (2.west);
    \coordinate[shift={(-4mm,0mm)}] (lc2) at (lc1.west);
    \coordinate[shift={(0mm,4mm)}] (lc3) at (lc1.west);
    \coordinate[shift={(-4mm,4mm)}] (lc4) at (lc1.west);
    \path
    (init) edge (0);
	\draw[->,>=latex] (0) to[out=225,in=45] (1);
	\draw[->,>=latex] (0) to[out=315,in=135] (2);
	\draw[->,>=latex] (1) to[out=270,in=90] (3);
	\draw[->,>=latex] (2) to[out=270,in=90] (4);
	\draw[->,>=latex] (3) to[out=270,in=90] (5);
	\draw[->,>=latex] (4) to[out=270,in=90] (6);
	\draw[->,>=latex] (5) to[out=270,in=90] (7);
	\draw[->,>=latex] (6) to[out=270,in=90] (8);
	\draw[->,dashed,>=latex] (7) to[out=180,in=180] (0);
	\draw[->,dashed,>=latex] (8) to[out=5,in=355] (4);
	\draw[-,thick,>=latex] (4a) to (8b);
	\draw[-,thick,>=latex] (4b) to (8a);
	\draw[-,thick,>=latex] (lc1) to (lc4);
	\draw[-,thick,>=latex] (lc2) to (lc3);
	\draw[->,double,>=latex] (2) to[out=235,in=45] (3);
      \end{tikzpicture}}
      \caption{Merge between comparable nodes.}
\label{fig:merge}

\end{minipage}
\hfill
\begin{minipage}[b]{0.48\linewidth}
\centering
\scalebox{1}{\begin{tikzpicture}[->,>=stealth',shorten >=1pt,auto,node
    distance=2.5cm,bend angle=45, scale=0.35]
    \tikzstyle{p0}=[]
    \node[p0]  (0)  at (0, 0) {$\dagRoot$};
    \node[p0]  (1)  at (-1, -2) {};
    \node[p0]  (2)  at (-2, -4) {$\otherNode$};
    \node[p0]  (3)  at (-3, -6) {};
    \node[p0]  (4)  at (-4, -8) {$\nu$};
    \node[p0]  (5)  at (-6, -12) {$\node$};
    \node[p0]  (6)  at (2, -11) {$\xi$};
    \path
    ;
	\draw[->,>=latex] (0) to (1);
	\draw[->,>=latex] (1) to (2);
	\draw[->,>=latex] (2) to (3);
	\draw[->,>=latex] (3) to (4);
	\draw[->,>=latex] (4) to (5);
	\draw[->,>=latex] (4) to (6);
	\draw[dashed,->,>=latex] (6) to[out=65,in=330] (1);
	\draw[dashed,->,>=latex] (6) to[out=80,in=350] (2);
	\draw[dashed,->,>=latex] (6) to[out=90,in=0] (3);
	\draw[dashed,->,>=latex] (6) to[out=100,in=0] (4);
      \end{tikzpicture}}
      \caption{Cycles have positive energy levels.}
\label{fig:noCycle}
\end{minipage}
\end{figure}

The remainder of this subsection is dedicated to the proof of Lemma \ref{lemma_width}. We need to introduce some notations and two intermediate lemmas. If he so wishes, the reader may directly proceed to the next subsection and Lemma \ref{lemma_expFamily} for results on lower memory bounds.

We first introduce some notations. Let $\treeFull$ be a self-covering tree (i.e., epSCT without the parity condition). We define the partial order $\nodeLess$ on $\treeStates$ such that for all $\node_{1}, \node_{2} \in \treeStates$ such that $\lab(\node_{1}) = \langle t_{1}, u_{1}\rangle$ and $\lab(\node_{2}) = \langle t_{2}, u_{2}\rangle$, we have $\node_{1} \nodeLess \node_{2}$ iff $t_{1} = t_{2}$ and $u_{1} \leq u_{2}$. We denote the equivalence by $\nodeEqual$ such that $\node_{1} \nodeEqual \node_{2}$ iff $\node_{1} \nodeLess \node_{2}$ and $\node_{2} \nodeLess \node_{1}$. For all $\node \in \treeStates$, let $\ancestors$ and $\enAncestors$ resp. denote the set of \textit{ancestors} and \textit{energy ancestors} of $\node$ in $\tree$: $
\ancestors(\node) = \left\lbrace \otherNode \in \treeStates \setminus \lbrace\node\rbrace \;\vert\; \otherNode \vDash \exists\diamondsuit\node\right\rbrace$, where we use the classical {\sf CTL} notation to denote that there exists a path from $\otherNode$ to $\node$ in $\tree$, and $\enAncestors(\node) = \left\lbrace \otherNode \in \ancestors(\node) \;\vert\; \otherNode \nodeLess \node \right\rbrace$.

We build a sequence of DAGs $\dagSeq \equiv \DAG_{0} = \tree, \DAG_{1}, \DAG_{2}, \ldots{}, \DAG_{n}$ such that for all $0 < i \leq n$, $\DAG_{i}$ is obtained from $\DAG_{i-1}$ by \textit{merging} two equivalent nodes of the same minimal level (i.e., closest to the root) of $\DAG_{i-1}$. The sequence stops when we obtain a DAG $\DAG_{n} = (\dagStates_{n}, \dagEdges_{n})$ such that for all level $j$ of $\DAG_{n}$, there does not exist two distinct equivalent nodes on level $j$. This construction induces merges by increasing depth, starting with level one. Moreover, if a DAG $\DAG_{i}$ of the sequence is the result on merges up to level $j$, then it has the tree property (i.e., every node has a unique father) for levels greater than $j$. As the depth and the branching degree of $\tree$ are finite, the defined sequence of DAGs is finite (and actually bounded).

Let us give a formal definition of the \textit{merge} operation. Consider such a DAG $\DAG_{i} = (\dagStates_{i}, \dagEdges_{i})$. Let $j$ the minimal level of $\DAG_{i}$ that contains two equivalent nodes. Let $\node_{1}, \node_{2} \in \dagStates_{i}(j)$ (i.e., nodes of level $j$) be two nodes such that $\node_{1} \neq \node_{2}$ and $\node_{1} \nodeEqual \node_{2}$. We suppose w.l.o.g. an arbitrary order on nodes of the same level so that $\node_{1}, \node_{2}$ are the two leftmost nodes that satisfy this condition. We define $\DAG_{i+1}= (\dagStates_{i+1}, \dagEdges_{i+1}) = \mergeOp(\DAG_{i})$ as the result of the following transformation:
\begin{itemize}
\item $\dagStates_{i+1} = \dagStates_{i} \setminus \left(\lbrace\node_{2}\rbrace \cup \left\lbrace \node_{d} \in \dagStates_{i} \,\vert\, \node_{2} \in \ancestors(\node_{d}) \right\rbrace\right)$,
\item $\dagEdges_{i+1} = \left( \dagEdges_{i} \,\cap\, (\dagStates_{i+1} \times \dagStates_{i+1})\right) \cup \left\lbrace (\otherNode, \node_{1}) \;\vert\; (\otherNode, \node_{2}) \in \dagEdges_{i}\right\rbrace$.
\end{itemize}
Thus, we eliminate the subtree starting in $\node_{2}$ and replace all edges that point to $\node_{2}$ by edges pointing to $\node_{1}$. This follows the idea that the same strategy can be played in $\node_{2}$ as in $\node_{1}$ since the present state and the energy level are the same. 

Let $\DAG_{i} = (\dagStates_{i}, \dagEdges_{i})$ be a DAG of the sequence $\dagSeq$. Given $\node \in \dagStates_{i}$, $\otherNode \in \ancestors(\node)$, we denote by $\otherNode \ancPath \node$ an arbitrary downward path from $\otherNode$ to $\node$ in $\DAG_{i}$. Given a leaf $\node \in \dagStates_{i}$, we denote its oldest energy ancestor by $\oldestEnAnc(\node)$. Recall that a strategy is described by such a DAG according to moves of a pebble. Given a leaf $\node \in \dagStates_{i}$ and one of its energy ancestors $\otherNode \in \enAncestors(\node)$, we represent the pebble going up from $\node$ to $\otherNode$ by $\node\pebbleUp\otherNode$. Given $\pebblePathA, \pebblePathB \in (\dagStates_{i})^{\ast}$, $\pebblePathA \pebbleUp \pebblePathB$ naturally extends this notation such that we have $\last(\pebblePathA) \pebbleUp \first(\pebblePathB)$. We consider energy levels of paths in the tree by refering to their counterparts in the game. Note that given $\otherNode, \node \in \dagStates_{i}$, $\lab(\otherNode) = \langle t, u\rangle$, $\lab(\node) = \langle t', u'\rangle$, we have $\el(\otherNode \ancPath \node) = u' - u$. We start with two useful lemmas.

\begin{lemma}
\label{lemma_ancPath}
Let $\DAG_{i} = (\dagStates_{i}, \dagEdges_{i})$ be a DAG of $\dagSeq$. For all nodes $\node_{1}, \node_{2} \in \dagStates_{i}$ such that $\node_{1} \nodeEqual \node_{2}$, we have that $\forall\, \otherNode \in \ancestors(\node_{1}) \cap \ancestors(\node_{2})$, $\el(\otherNode \ancPath \node_{1}) = \el(\otherNode \ancPath \node_{2})$.
\end{lemma}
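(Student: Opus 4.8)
The plan is to derive the statement from a single structural fact, recorded in the note just before the lemma: in every $\DAG_i$ of the sequence $\dagSeq$ the energy level of a downward path depends only on the labels of its endpoints, that is, for $\otherNode, \node \in \dagStates_i$ with $\lab(\otherNode) = \langle t, u\rangle$ and $\lab(\node) = \langle t', u'\rangle$ one has $\el(\otherNode \ancPath \node) = u' - u$, whichever downward path is taken. Granting this, the lemma is immediate. By definition $\node_1 \nodeEqual \node_2$ abbreviates $\node_1 \nodeLess \node_2$ and $\node_2 \nodeLess \node_1$, which forces both the same game state and the same energy vector; writing $\lab(\node_1) = \langle t_1, u_1\rangle$ and $\lab(\node_2) = \langle t_2, u_2\rangle$, we thus have $u_1 = u_2$. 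For the common ancestor put $\lab(\otherNode) = \langle t, u\rangle$. Then $\el(\otherNode \ancPath \node_1) = u_1 - u = u_2 - u = \el(\otherNode \ancPath \node_2)$, which is exactly the claim.

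The substance of the argument lies in establishing the endpoint-only dependence, which I would phrase as an invariant preserved along $\dagSeq$: every edge $(x, y) \in \dagEdges_i$ with $\lab(x) = \langle t_x, u_x\rangle$ and $\lab(y) = \langle t_y, u_y\rangle$ satisfies $u_y = u_x + \weight(t_x, t_y)$. I would prove this by induction on $i$. The base case $\DAG_0 = \tree$ holds directly from the labeling rule of the (even-parity) self-covering tree, where the energy component of each non-root node equals that of its parent plus the weight of the corresponding game edge. For the inductive step, recall that $\mergeOp$ deletes one of two merged nodes together with its subtree and introduces only the redirected edges: if $\node_a \nodeEqual \node_b$ are the merged pair, the new edges are those of the form $(x, \node_a)$ obtained from edges $(x, \node_b) \in \dagEdges_i$. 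Since $\node_a \nodeEqual \node_b$ forces $\lab(\node_a) = \lab(\node_b)$, the game edge underlying $(x, \node_a)$ is literally the one underlying $(x, \node_b)$, and the energy vector carried by $\node_a$ equals that carried by $\node_b$; hence the redirected edge still satisfies the per-edge identity, and the invariant is preserved. Telescoping this identity along any downward path $\otherNode \ancPath \node$ then yields $\el(\otherNode \ancPath \node) = u' - u$ independently of the chosen path, which is the fact used above.

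The only genuinely delicate point --- and the reason the endpoint-only dependence must be proved as an invariant rather than read off from a tree --- is that after a merge a surviving node may acquire several parents, so that from a common ancestor it can be reachable by more than one downward path; the invariant is precisely what guarantees that all such paths carry the same energy. Once it is in place no further computation is needed, and the hypothesis $\otherNode \in \ancestors(\node_1) \cap \ancestors(\node_2)$ is used only to ensure that both paths $\otherNode \ancPath \node_1$ and $\otherNode \ancPath \node_2$ exist, after which their energy levels coincide simply because $u_1 = u_2$.
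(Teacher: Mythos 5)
Your proof is correct and follows exactly the route the paper intends: the paper's own ``proof'' is just the single line ``The proof is straightforward,'' relying on the endpoint-only formula $\el(\otherNode \ancPath \node) = u' - u$ stated in the text immediately before the lemma, and your argument is precisely the elaboration of that remark. You rightly identify that the formula itself needs justification once merges create multiple parents, and your per-edge invariant preserved under $\mergeOp$ (using that merged nodes carry identical labels) is the correct way to supply it.
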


\begin{proof}
The proof is straightforward.\qed
\end{proof}

\begin{lemma}
\label{lemma_noCycle}
Let $\DAG_{i} = (\dagStates_{i}, \dagEdges_{i})$ be a DAG of $\dagSeq$. Let $\node, \otherNode, \nu, \xi \in \dagStates_{i}$ be four nodes such that $\node$ and $\xi$ are leafs, $\nu$ is the deepest common ancestor of $\node$ and $\xi$, and $\otherNode$ is an ancestor of $\nu$. Let the oldest energy ancestor of $\xi$ be an ancestor of $\node$, i.e., $\oldestEnAnc(\xi) \in \ancestors(\node)$. We have that $\el(\otherNode \ancPath \node) \leq \el(\otherNode \ancPath \nu  \ancPath \xi \pebbleUp \oldestEnAnc(\xi) \ancPath \node)$.
\end{lemma}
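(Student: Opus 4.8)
The plan is to reduce the stated inequality to the single defining property of an oldest energy ancestor, via a direct energy-accounting argument that exploits the fact that $\el$ of any downward path in $\DAG_i$ depends only on the labels of its endpoints. Concretely, write $\lab(\otherNode) = \langle t_\otherNode, u_\otherNode\rangle$, $\lab(\nu) = \langle t_\nu, u_\nu\rangle$, $\lab(\xi) = \langle t_\xi, u_\xi\rangle$, $\lab(\node) = \langle t_\node, u_\node\rangle$, and $\lab(\oldestEnAnc(\xi)) = \langle t', u'\rangle$. Path-independence of the energy level of downward paths (guaranteed by the merging construction, and underpinned by Lemma~\ref{lemma_ancPath}) yields the identity $\el(\otherNode \ancPath \node) = u_\node - u_\otherNode$ recalled just before these lemmas, so I would first record that the left-hand side equals $u_\node - u_\otherNode$.

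Next I would decompose the right-hand path $\otherNode \ancPath \nu \ancPath \xi \pebbleUp \oldestEnAnc(\xi) \ancPath \node$ into the segments that actually correspond to moves in the game. Since $\otherNode$ is an ancestor of $\nu$ by hypothesis, $\nu$ is an ancestor of $\xi$ (being the deepest common ancestor of $\xi$ and $\node$), and $\oldestEnAnc(\xi) \in \ancestors(\node)$ by hypothesis, the three downward segments $\otherNode \ancPath \nu$, $\nu \ancPath \xi$ and $\oldestEnAnc(\xi) \ancPath \node$ are all well-defined, and their energy levels telescope. The crucial point is that the up-move $\xi \pebbleUp \oldestEnAnc(\xi)$ corresponds to \emph{no} transition in the game: the pebble merely returns to an ancestor sharing the same state, while the play continues from that state at its current (higher) energy. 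Hence the up-move contributes $0$ to $\el$, and the right-hand side equals $(u_\nu - u_\otherNode) + (u_\xi - u_\nu) + (u_\node - u') = (u_\xi - u_\otherNode) + (u_\node - u')$.

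Subtracting the two expressions, the desired inequality $\el(\otherNode \ancPath \node) \le \el(\otherNode \ancPath \nu \ancPath \xi \pebbleUp \oldestEnAnc(\xi) \ancPath \node)$ becomes $u_\node - u_\otherNode \le (u_\xi - u_\otherNode) + (u_\node - u')$, i.e.\ $0 \le u_\xi - u'$, i.e.\ $u' \le u_\xi$. This is precisely the defining property of an energy ancestor: from $\oldestEnAnc(\xi) \in \enAncestors(\xi)$ we have $\oldestEnAnc(\xi) \nodeLess \xi$, which forces $t' = t_\xi$ and $u' \le u_\xi$. This closes the argument.

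The only genuinely delicate point I anticipate is justifying the energy contribution of the up-move. One must argue that $\xi \pebbleUp \oldestEnAnc(\xi)$ adds nothing to $\el$, rather than the naive label-difference $u' - u_\xi$, which would make the right-hand side collapse to $u_\node - u_\otherNode$ and trivialize the statement. This hinges on reading a pebble path as its counterpart play in the game, where the equality of states $t' = t_\xi$ lets the play re-enter the subtree rooted at $\oldestEnAnc(\xi)$ without performing any move. Once this convention is fixed, every remaining step is routine arithmetic on the labels.
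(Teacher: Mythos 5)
Your proof is correct and follows essentially the same route as the paper's: both reduce the inequality to the non-negativity of $\el(\oldestEnAnc(\xi) \ancPath \xi)$ --- that is, to the defining property $u' \leq u_\xi$ of an energy ancestor (in your notation) --- after observing that the up-move $\xi \pebbleUp \oldestEnAnc(\xi)$ contributes nothing to the energy level. The only difference is presentational: by working directly with endpoint labels you treat all relative positions of $\oldestEnAnc(\xi)$ uniformly, whereas the paper splits into four cases according to whether $\oldestEnAnc(\xi)$ lies strictly above $\otherNode$, equals $\otherNode$, lies strictly between $\otherNode$ and $\nu$, or equals $\nu$, and works out only the first case explicitly.
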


This lemma states that we can extract pebble cycles, which have positive energy levels, from a given path, in order to obtain some canonical path whose energy level is lower or equal (Fig. \ref{fig:noCycle}).

\begin{proof}
Let $\nodeNC = \oldestEnAnc(\xi)$ and $\prefix = \otherNode \ancPath \nu \ancPath \xi\pebbleUp\nodeNC \ancPath \node$. Since $\nodeNC \in \ancestors(\node) \cap \ancestors(\xi)$, we have $\nodeNC \in \ancestors(\nu) \cup \lbrace\nu\rbrace$. Therefore, and applying Lemma \ref{lemma_ancPath}, four cases are possible: $\nodeNC \in \ancestors(\otherNode)$, $\nodeNC = \otherNode$, $\nodeNC \in \ancestors(\nu) \setminus \left(\ancestors(\otherNode) \cup \lbrace\otherNode\rbrace\right)$, and $\nodeNC = \nu$. Consider the first case, $\nodeNC \in \ancestors(\otherNode)$. Then $\prefix = \otherNode \ancPath \nu \ancPath \xi \pebbleUp \nodeNC \ancPath \otherNode \ancPath \nu \ancPath \node$. We have $\el(\prefix) = \el(\otherNode \ancPath \nu) + \el(\nu \ancPath \xi) + \el(\nodeNC \ancPath \otherNode) + \el(\otherNode \ancPath \nu) + \el(\nu \ancPath \node) = \el(\nodeNC \ancPath \otherNode \ancPath \nu \ancPath \xi) + \el(\otherNode \ancPath \node)$. By definition of $\nodeNC = \oldestEnAnc(\xi)$, the first term is positive. Thus, $\el(\prefix) \geq \el(\otherNode \ancPath \node)$. Arguments are similar for the other cases.\qed
\end{proof}

We proceed with the proof of Lemma \ref{lemma_width}.
\begin{proof}[Lemma \ref{lemma_width}]
Let $\dagSeq$ be the sequence of DAGs defined above. We claim that \textbf{(i)} each DAG describes a winning strategy for the same initial credit, \textbf{(ii)} each DAG has the same depth $l$, and \textbf{(iii)} the last DAG of the sequence has its width bounded by $\vert\states\vert \cdot (2\cdot l\cdot \largestW + 1)^{\dimension}$.

\textbf{(i)} First, recall that $\playerOne$ can play a strategy $\strat_{1}^{\tree} \in \stratsPureFinite_{1}$ based on edges taken by a pebble on $\tree$. Notice that moving the pebble as we previously defined is possible because nodes belonging to $\playerOne$ have only one child, and nodes of $\playerTwo$ have childs covering all his choices once, and only once. Fortunately, the $\mergeOp$ operation maintains this property. Therefore, it is straightforward to see that $\playerOne$ can also play a strategy $\strat_{1}^{\DAG_{i}} \in \stratsPureFinite_{1}$ for a DAG $\DAG_{i}$ resulting of some merges on $\tree$. However, while this would be a valid strategy for $\playerOne$, we have to prove that it is still a winning one, for the same initial credit $\initCredit$ as $\strat_{1}^{\tree}$. Precisely, we claim that $\forall\, i \geq 0$, we have that $\strat_{1}^{\DAG_{i}}$ is winning for $\initCredit$.

We show it by induction on $\DAG_{i}$. The base case is trivial as $\DAG_{0} = \tree$: the strategy $\strat_{1}^{\tree}$ is winning for $\initCredit$ by definition. Our induction hypothesis is that our claim is valid for $\DAG_{i-1}$, and we now prove it for $\DAG_{i}$, by contradiction. Let $\node_{1}, \node_{2} \in \dagStates_{i-1}(j)$ be the merged nodes, for some level $j$ of $\DAG_{i-1}$. Suppose $\strat_{1}^{\DAG_{i}}$ is not winning for $\initCredit$. Thus there exists a finite path $\pebblePath$ of the pebble in $\DAG_{i}$, which corresponds to a strategy $\strat_{2}^{\DAG_{i}} \in \stratsPureFinite_{2}$ of $\playerTwo$, such that it achieves a negative value on at least one dimension $m$, $1 \leq m \leq \dimension$. We have that $\left(\initCredit + \el(\pebblePath)\right)(m) < 0$. We aim to find a similar path $\pebblePathNew$ in $\DAG_{i-1}$ such that $\el(\pebblePathNew) \leq \el(\pebblePath)$, thus yielding contradiction, as it would witness that $\strat_{1}^{\DAG_{i-1}}$ is not winning for $\initCredit$.

We denote by $\nodeMerge$ the father of $\node_{2}$ in $\DAG_{i-1}$. The only edge added by the $\mergeOp$ operation is $(\nodeMerge, \node_{1})$. Obviously, if $\pebblePath$ does not involve this edge, then we can take $\eta = \pebblePath$ and im\-me\-dia\-tely obtain contradiction. Thus, we can decompose the witness path 
\begin{equation*}
\pebblePath = \pebblePathA(1)\,\nodeMerge\node_{1}\,\pebblePathB(1)\pebbleUp\pebblePathA(2)\,\nodeMerge\node_{1}\,\pebblePathB(2)\pebbleUp\ldots{}\pebbleUp\pebblePathA(q)\,\nodeMerge\node_{1}\,\pebblePathE,
\end{equation*}
for some $q \geq 1$ such that for all $1 \leq p \leq q$, we have that $\pebblePathA(p), \pebblePathB(p), \pebblePathE \in \left( \dagStates_{i} \cup \lbrace\pebbleUp\rbrace\right)^{\ast}$ are valid paths of the pebble in $\DAG_{i}$ (and $\DAG_{i-1}$); they do not involve edge $(\nodeMerge, \node_{1})$, i.e., $\lbrace\nodeMerge\node_{1}\rbrace \not\subseteq \pebblePathA(p), \pebblePathB(p), \pebblePathE$; and $\pebblePathB(p) \cap \left( \ancestors_{\DAG_{i}}(\nodeMerge) \setminus \ancestors_{\DAG_{i-1}}(\node_{1})\right) = \emptyset$, $\last(\pebblePathB(p))$ is a leaf and $\oldestEnAnc(\last(\pebblePathB(p))) \in \ancestors_{\DAG_{i}}(\nodeMerge)$.

Intuitively, $\pebblePath$ is split into several parts in regard to $q$, the number of times it takes the added edge $(\nodeMerge, \node_{1})$. Each time, this transition is preceded by some path $\pebblePathA$. It is then followed by some path $\pebblePathB$ where all visited ancestors of $\nodeMerge$ were already ancestors of $\node_{1}$ in $\DAG_{i-1}$ (thus, $\pebblePathB$ paths can be kept in $\pebblePathNew$). Finally, after the $q$-th transition $\nodeMerge\node_{1}$ is taken, the path $\pebblePath$ ends with a finite sub-path $\pebblePathE$.

We define the witness path $\pebblePathNew$ in $\DAG_{i-1}$ as $\pebblePathNew = \pebblePathNewA(1)\pebblePathB(1)\pebbleUp\pebblePathNewA(2)\pebblePathB(2)\pebbleUp\ldots{}\pebbleUp\pebblePathNewA(q)\pebblePathE,$
with the following transformation of sub-paths $\pebblePathA(p)\,\nodeMerge\node_{1}$:\begin{itemize}
\item $\pebblePathNewA(1) = \dagRoot \ancPath_{\DAG_{i-1}} \node_{1}$,
\item $\forall\, 2 \leq p \leq q, \pebblePathNewA(p) = \oldestEnAnc(\last(\pebblePathB(p-1))) \ancPath_{\DAG_{i-1}} \node_{1}$,
\end{itemize}
where $\ancPath_{\DAG_{i-1}}$ denotes a valid path in $\DAG_{i-1}$. Note that given preceding definitions, this indeed constitutes a valid path in $\DAG_{i-1}$. We have to prove that $\el(\pebblePathNew) \leq \el(\pebblePath)$.
We have 
\begin{equation*}
\el(\pebblePathNew) = \sum_{1 \leq p \leq q} \el(\pebblePathNewA(p)) + \sum_{1 \leq p \leq q-1} \el(\pebblePathB(p)) + \el(\pebblePathE),
\end{equation*}
and
\begin{equation*}
\el(\pebblePath) = \sum_{1 \leq p \leq q} \el(\pebblePathA(p)\,\nodeMerge\node_{1}) + \sum_{1 \leq p \leq q-1} \el(\pebblePathB(p)) + \el(\pebblePathE).
\end{equation*}
Thus, it remains to show that
\begin{equation*}
\sum_{1 \leq p \leq q} \el(\pebblePathNewA(p)) \leq \sum_{1 \leq p \leq q} \el(\pebblePathA(p)\,\nodeMerge\node_{1}).
\end{equation*}

In particular, we claim that for all $1 \leq p \leq q$, we have $\el(\pebblePathNewA(p)) \leq \el(\pebblePathA(p)\,\nodeMerge\node_{1})$. Indeed, notice that $\pebblePathNewA(p)$ and $\pebblePathA(p)$ share their starting and ending nodes and that $\pebblePathA(p)$ contains a finite number of pebble cycles. Let $\otherNode$ denote the common starting node of both $\pebblePathNewA(p)$ and $\pebblePathA(p)$. Applying Lemma \ref{lemma_noCycle} on $\pebblePathA(p)$, we can eliminate cycles one at a time, without ever increasing the energy level, and obtain a path $\otherNode \ancPath_{\DAG_{i}} \nodeMerge\node_{1}$ such that $\el(\otherNode \ancPath_{\DAG_{i}} \nodeMerge\node_{1}) \leq \el(\pebblePathA(p)\,\nodeMerge\node_{1})$. Since $\node_{1} \nodeEqual \node_{2}$, we have by Lemma \ref{lemma_ancPath} that $\el(\otherNode \ancPath_{\DAG_{i}} \nodeMerge\node_{1}) = \el(\otherNode \ancPath_{\DAG_{i-1}} \nodeMerge\node_{2}) = \el(\otherNode \ancPath_{\DAG_{i-1}} \node_{1})$, implying the claim.

Consequently, we obtain $\el(\pebblePathNew) \leq \el(\pebblePath)$, which witnesses that $\DAG_{i-1}$ was not winning. This contradicts our induction hypothesis and concludes our proof that for all $0 \leq i \leq n$, $\strat_{1}^{\DAG_{i}}$ is winning for $\initCredit$.

\textbf{(ii)} Second, the $\mergeOp$ operation only prunes some parts of the tree $\tree$, without ever adding any new state, and added edges are on existing successive levels. Therefore, each $\DAG_{i}$ has noticeably the same depth $l$.

\textbf{(iii)} Third, the last DAG of the sequence, $\DAG_{n}$, is such that for all level $j$, for all $\node_{1}, \node_{2} \in \dagStates_{n}(j)$, we have $(\node_{1} \neq \node_{2}) \Rightarrow (\node_{1} \not\nodeEqual \node_{2})$. Therefore the width of this DAG is bounded by the number of possible non-equivalent nodes. Recall that two nodes are equivalent if they have the same labels, i.e., they represent the same state of the game and are marked with exactly the same energy level vector. Since the maximal change in energy level on an edge is $\largestW$, and the depth of the DAG is bounded by $l = 2^{(d-1) \cdot \vert\states\vert} \cdot \left( \largestW \cdot \vert\states\vert + 1\right) ^{c \cdot \dimension^{2}}$ thanks to Lemma \ref{lemma_depth_parity}, we have possible vectors in $\lbrace -l\cdot\largestW, -l\cdot\largestW + 1, \ldots{}, l\cdot\largestW - 1, l\cdot\largestW\rbrace^{\dimension}$ for each state. Consequently, the width of $\DAG_{n}$ is bounded by
\begin{equation*}
\vert\states\vert \cdot (2\cdot l\cdot \largestW + 1)^{\dimension} = 
\vert\states\vert \cdot \left( 2^{d \cdot \vert\states\vert} \cdot \left( \largestW \cdot \vert\states\vert +1\right) ^{c \cdot \dimension^{2}}\cdot \largestW + 1\right) ^{\dimension},
\end{equation*}
which is still single exponential.\qed
\end{proof}

\paragraph{{\bf Lower bound.}} In the next lemma, we show that the upper bound is tight in the sense that there exist families of games which require exponential memory (in the number of dimensions), even for the simpler case of multi energy objectives without parity and weights in $\lbrace -1, 0, 1\rbrace$ (Fig. \ref{fig:expFamily}). Note that for one-dimension energy parity, it was shown in \cite{chatterjee_ICALP10} that exponential memory (in the encoding of weights) may be necessary.

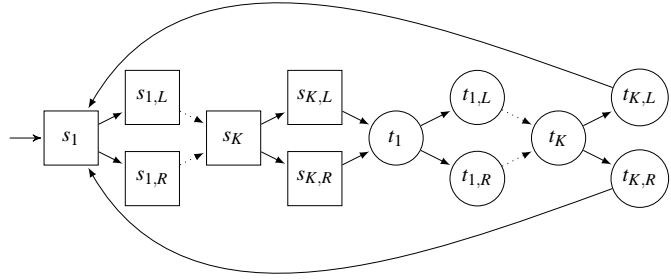
\begin{figure}[htb]
  \centering   
  \scalebox{0.9}{\begin{tikzpicture}[->,>=stealth',shorten >=1pt,auto,node
    distance=2.5cm,bend angle=45,scale=0.3, font=\small]
    \tikzstyle{p1}=[draw,circle,text centered,minimum size=8mm]
    \tikzstyle{p2}=[draw,rectangle,text centered,minimum size=8mm]
    \node[p2]  (0)  at (0, 0) {$s_{1}$};
    \node[p2]  (1) at (4, 2) {$s_{1,L}$};
    \node[p2]  (2) at (4, -2)  {$s_{1,R}$};
    \node[p2]  (3) at (8, 0)  {$s_{\gadgets}$};
    \node[p2]  (4)  at (12, 2) {$s_{\gadgets,L}$};
    \node[p2]  (5)  at (12, -2) {$s_{\gadgets,R}$};
    \node[p1]  (6)  at (16, 0) {$t_{1}$};
    \node[p1]  (7) at (20, 2) {$t_{1,L}$};
    \node[p1]  (8) at (20, -2)  {$t_{1,R}$};
    \node[p1]  (9) at (24, 0)  {$t_{\gadgets}$};
    \node[p1]  (10)  at (28, 2) {$t_{\gadgets,L}$};
    \node[p1]  (11)  at (28, -2) {$t_{\gadgets,R}$};
    \coordinate[shift={(-5mm,0mm)}] (init) at (0.west);
    \path
    (init) edge (0);
	\draw[->,>=latex] (0) to (1);
	\draw[->,>=latex] (0) to (2);
	\draw[->,>=latex] (3) to (4);
	\draw[->,>=latex] (3) to (5);
	\draw[dotted,->,>=latex] (1) to (3);
	\draw[dotted,->,>=latex] (2) to (3);
	\draw[->,>=latex] (4) to (6);
	\draw[->,>=latex] (5) to (6);
	\draw[->,>=latex] (6) to (7);
	\draw[->,>=latex] (6) to (8);
	\draw[dotted,->,>=latex] (7) to (9);
	\draw[dotted,->,>=latex] (8) to (9);
	\draw[->,>=latex] (9) to (10);
	\draw[->,>=latex] (9) to (11);
	\draw[->,>=latex] (10) to[out=160,in=60] (0);
	\draw[->,>=latex] (11) to[out=200,in=300] (0);
      \end{tikzpicture}}
\vspace*{-8mm}
      \caption{Family of games requiring exponential memory.}
\label{fig:expFamily}
  \end{figure}

\begin{lemma}
\label{lemma_expFamily}
There exists a family of multi energy games $(\game(\gadgets))_{\gadgets \geq 1} = \left( \statesOne, \statesTwo, \initState, \edges, \right. $ $\left. \dimension = 2\cdot\gadgets, \weight: \edges \rightarrow \lbrace -1, 0, 1\rbrace^{\dimension}\right)$ such that for any initial credit, $\playerOne$ needs exponential memory to win.
\end{lemma}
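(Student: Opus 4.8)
The plan is to use the game $\game(\gadgets)$ sketched in Fig.~\ref{fig:expFamily}, in which \playerTwo\ first traverses a chain of $\gadgets$ binary ``choice gadgets'' $s_{1}, \ldots, s_{\gadgets}$ and then \playerOne\ traverses a chain of $\gadgets$ binary choice gadgets $t_{1}, \ldots, t_{\gadgets}$ before the play loops back to $s_{1}$. I would pair up the $\dimension = 2\cdot\gadgets$ dimensions so that gadget $i$ owns dimensions $2i-1$ and $2i$. \playerTwo's choice at $s_{i}$ is encoded by putting weight $-1$ on dimension $2i-1$ on the edge to $s_{i,L}$ and weight $-1$ on dimension $2i$ on the edge to $s_{i,R}$ (all other coordinates $0$); symmetrically, \playerOne's choice at $t_{i}$ puts weight $+1$ on dimension $2i-1$ (edge to $t_{i,L}$) or $+1$ on dimension $2i$ (edge to $t_{i,R}$). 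Every remaining edge carries the zero vector. Thus one full cycle lets \playerTwo\ pick a bit string $b \in \lbrace L, R\rbrace^{\gadgets}$, decrementing exactly one coordinate per gadget, and forces \playerOne\ to pick a response $r \in \lbrace L, R\rbrace^{\gadgets}$, incrementing exactly one coordinate per gadget.

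First I would confirm the family is winnable, so that ``needs exponential memory to win'' is meaningful. The strategy that records \playerTwo's current bit string and replays it at the $t$-gadgets (a tree of $2^{\gadgets}$ memory states) matches every coordinate exactly, so the net effect of each cycle on every dimension is $0$ and the energy dips by at most $1$ in each coordinate within a cycle; initial credit $(1,\ldots,1)$ therefore suffices and \playerOne\ wins. This also shows that the exponential upper bound of the previous lemmas is attained by this family.

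The core of the argument is the lower bound: any pure finite-memory winning strategy $\strat_{1} = \langle M, m_{0}, \alpha_{u}, \alpha_{n}\rangle$ must satisfy $\vert M\vert \geq 2^{\gadgets}$. By Lemma~\ref{purefinitememory} I may oppose \playerOne\ only with memoryless \playerTwo\ strategies; for each $b \in \lbrace L, R\rbrace^{\gadgets}$ let $\sigma_{b} \in \stratsPureMemoryless_{2}$ always pick the $b_{i}$-direction at $s_{i}$, and consider the unique outcome $\play_{b}$. Since the product of the (finite) game with the (finite) memory is a deterministic system, $\play_{b}$ is eventually periodic. The key observation is that the $t$-gadget chain contains no \playerTwo\ states, so \playerOne's entire response in a cycle is a deterministic function $g(m) \in \lbrace L, R\rbrace^{\gadgets}$ of the single memory state $m$ held upon entering $t_{1}$. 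Over one period of $\play_{b}$, the net change of dimension $2i-1$ (say $b_{i} = L$) equals the number of cycles in which \playerOne\ answers $L$ at $t_{i}$ minus the number of cycles in the period, which is $\leq 0$ with equality iff \playerOne\ answers $b_{i}$ in every cycle; as a strictly negative net change would drive that coordinate to $-\infty$ and beat any finite initial credit, winning forces \playerOne\ to match $b$ in every cycle of the period. Hence some memory state $m_{b}$ recurring at $t_{1}$ satisfies $g(m_{b}) = b$.

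Finally, the $2^{\gadgets}$ strings $b$ are pairwise distinct and $g(m_{b}) = b$, so the states $m_{b}$ are pairwise distinct, giving $\vert M\vert \geq 2^{\gadgets} = 2^{\dimension/2}$, which is exponential in the number of dimensions and independent of the initial credit. The step I expect to be the main obstacle is the reduction of \playerOne's per-cycle behaviour to the function $g$ on memory states, together with the steady-state matching argument: one must argue carefully that (i)~the response depends only on the memory entering the $t$-chain, precisely because \playerTwo\ is passive there, and (ii)~winning against the fixed $\sigma_{b}$ forces a perfect match throughout the periodic part, so that all $2^{\gadgets}$ responses are genuinely realized and by distinct memory states.
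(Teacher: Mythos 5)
Your proposal is correct and follows essentially the same route as the paper: the identical family of $2\cdot\gadgets$ paired choice gadgets (your weight encoding is an inessential variant of the paper's $\pm 1$ pairs), the same $2^{\gadgets}$-state winning strategy that replays \playerTwo's choices, and the same core reason for the lower bound, namely that keeping every coordinate bounded forces \playerOne\ to reproduce \playerTwo's $\gadgets$-bit choice string in (almost) every cycle. The only difference is presentational: the paper argues contrapositively by pigeonhole (fewer than $2^{\gadgets}$ states force a collision at some gadget $x$, which \playerTwo\ then exploits forever), whereas you extract, for each of the $2^{\gadgets}$ memoryless adversary strategies, a distinct memory state at $t_{1}$ via eventual periodicity — a somewhat more explicit rendering of the same counting argument.
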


The idea is the following: in the example of Fig. \ref{fig:expFamily}, if $\playerOne$ does not remember the exact choices of $\playerTwo$ (which requires an exponential size Moore machine), there will exist some sequence of choices of $\playerTwo$ such that $\playerOne$ cannot counteract a decrease in energy. Thus, by playing this sequence long enough, $\playerTwo$ can force $\playerOne$ to lose, whatever his initial credit is.

\begin{proof}
We define a family of games $(\game(\gadgets))_{\gadgets \geq 1}$ which is an assembly of $\dimension = 2\cdot\gadgets$ gadgets, the first $\gadgets$ belonging to $\playerTwo$, and the remaining $\gadgets$ belonging to $\playerOne$ (Fig. \ref{fig:expFamily}). Precisely, we have $\vert\states_{1}\vert = \vert\states_{2}\vert = 3\cdot\gadgets$, $\vert\states\vert = \vert\edges\vert = 6\cdot\gadgets = 3\cdot \dimension$ (linear in $\dimension$), $\dimension = 2\cdot\gadgets$, and $\weight$ defined as:
\begin{align*}
\forall\, 1 \leq i \leq \gadgets,\, &\weight((\circ, s_{i})) = \weight((\circ, t_{i})) = (0, \ldots{}, 0),\\
&\weight((s_{i}, s_{i, L})) = - \weight((s_{i}, s_{i, R})) = \weight((t_{i}, t_{i, L})) = - \weight((t_{i}, t_{i, R})),\\
&\forall\, 1 \leq j \leq \dimension,\, \weight((s_{i}, s_{i, L}))(j) = \begin{cases}1 \text{ if } j = 2\cdot i - 1\\-1 \text{ if } j = 2\cdot i\\0 \text{ otherwise}\end{cases},
\end{align*}
where $\circ$ denotes any valid predecessor state.

There exists a winning strategy $\strat^{exp}_{1}$ for $\playerOne$, for initial credit $v^{exp}_{0} = (1, \ldots{}, 1)$. Indeed, for any strategy of $\playerTwo$, for any state $t_{i}$ belonging to $\playerOne$, it suffices to play the \textit{opposite} choice as $\playerTwo$ made on its last visit of $s_{i}$ to maintain at all times an energy vector which is positive on all dimensions. This strategy thus requires to remember the last choice of $\playerTwo$ in all gadgets, which means $\playerOne$ needs $\gadgets$ bits to encode these decisions. Thus, this winning strategy is described by a Moore machine containing $2^{\gadgets} = 2^{\frac{\dimension}{2}}$ states, which is exponential in the number of dimensions $\dimension$.

We claim that, for any initial credit $v_{0}$, there exists no winning strategy $\strat_{1}$ that can be described with less than $2^{\gadgets}$ states and prove it by contradiction. Suppose $\playerOne$ plays according to such a strategy $\strat_{1}$. Then there exists some $1 \leq x \leq \gadgets$ such that
$\strat_{1}(s_{1} \ldots{} s_{x} s_{x, L} \ldots{} t_{x}) =  \strat_{1}(s_{1} \ldots{} s_{x} s_{x, D} \ldots{} t_{x})$, i.e., $\playerOne$ chooses the same action in $t_{x}$ against both choices of the adversary. Suppose that $\playerOne$ chooses to play $t_{x, L}$ in both cases, that is $\strat_{1}(s_{1} \ldots{} s_{x} s_{x, L} \ldots{} t_{x}) =  \strat_{1}(s_{1} \ldots{} s_{x} s_{x, D} \ldots{} t_{x}) = t_{x, L}$. By playing $s_{x, L}$, $\playerTwo$ can force a decrease of the energy vector by $2$ on dimension $2\cdot x$ every visit in gadget $x$. Similarly, if the strategy of $\playerOne$ is to play $t_{x, R}$, $\playerTwo$ wins by choosing to play $s_{x, R}$ as dimension $2\cdot x - 1$ decreases by $2$ every visit. Therefore, whatever the finite initial vector of $\playerOne$, $\playerTwo$ can enforce a negative dimension by playing long enough. This contradicts the fact that $\strat_{1}$ is winning and concludes our proof that exponential memory is necessary for this simple family of games $(\game(\gadgets))_{\gadgets \geq 1}$.\qed
\end{proof}

We summarize our results in Theorem \ref{thm_optBounds}.

\begin{theorem}[Optimal memory bounds]
\label{thm_optBounds}
The following assertions hold: (1) In multi energy parity games, if there exists a winning strategy, then there exists a finite-memory winning strategy. (2) In multi energy parity and multi mean-payoff games, if there exists a finite-memory winning strategy, then there exists a winning strategy with at most exponential memory. (3) There exists a family of multi energy games (without parity) with weights in $\lbrace -1, 0 ,1\rbrace$ where all winning strategies require at least exponential memory.
\end{theorem}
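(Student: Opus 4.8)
The plan is to assemble the three assertions directly from the lemmas established in this section; no genuinely new argument is required, only careful bookkeeping of the sizes to confirm that the composition stays single exponential. For assertion (1), I would simply invoke the first part of Lemma \ref{purefinitememory}, which already states that whenever $\playerOne$ wins a multi energy parity game he has a pure finite-memory winning strategy.

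For assertion (2) the strategy is to chain the depth bound, the parity-removal reduction, and the width compression. Starting from a multi energy parity game $\gamePar$ in which $\playerOne$ has a finite-memory winning strategy, I first apply Lemma \ref{lemma_parityToEnergy} to obtain an equivalent multi energy game $\game$ (without parity) on the same state set, with $k+m$ dimensions and maximal absolute weight bounded by $\depth$, the depth bound of Lemma \ref{lemma_depth_parity}. By the equivalence in Lemma \ref{lemma_parityToEnergy}, $\playerOne$ wins $\game$, and viewing $\game$ as a degenerate parity game (priority function constant $0$, as noted in Section \ref{preliminaries}) Lemma \ref{purefinitememory} guarantees he wins it with finite memory. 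I then apply Lemma \ref{lemma_width} to $\game$: this yields a winning strategy described by a DAG whose depth and width are both single exponential, so the number of nodes, and hence the size of the Moore machine encoding the strategy, is bounded by the single exponential $\depth \cdot \width$. Because the energy objective of $\game$ was constructed precisely so as to imply the parity objective of $\gamePar$, this Moore machine is also winning in $\gamePar$. The multi mean-payoff case then follows for free from the equivalence between multi energy and multi mean-payoff games under finite-memory strategies recalled in Section \ref{preliminaries} (from \cite{chatterjee_FSTTCS10}).

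For assertion (3) I would point to Lemma \ref{lemma_expFamily}, which exhibits the family $(\game(\gadgets))_{\gadgets \geq 1}$ of multi energy games with weights in $\lbrace -1,0,1\rbrace$ in which any winning strategy of $\playerOne$ requires at least $2^{\gadgets}$ memory states, matching the upper bound of assertion (2).

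Since all the substance lives in the preceding lemmas, there is no real obstacle here beyond one verification that deserves an explicit line: that the composition in assertion (2) genuinely remains single exponential. The parity reduction inflates the maximal weight from $\largestW$ to $\depth$, which is itself single exponential; Lemma \ref{lemma_width} then raises this quantity only to the polynomial power $k+m$ inside its width formula and multiplies by single-exponential factors. As a single-exponential quantity raised to a polynomial power and multiplied by single-exponential terms is still single exponential, the final bound $\depth \cdot \width$ is single exponential, which is the one point worth checking explicitly.
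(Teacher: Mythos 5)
Your proposal is correct and follows essentially the same route as the paper, which proves the theorem by citing exactly the lemmas you use: Lemma \ref{purefinitememory} for assertion (1), the chain Lemma \ref{lemma_parityToEnergy} + Lemma \ref{lemma_width} (with the mean-payoff case via the finite-memory equivalence of \cite{chatterjee_FSTTCS10}) for assertion (2), and Lemma \ref{lemma_expFamily} for assertion (3). Your explicit check that the composition stays single exponential (a single-exponential weight bound raised to a polynomial power) is a detail the paper leaves implicit but is a worthwhile addition rather than a deviation.
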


\begin{proof}
Thanks to \cite[Theorem 3]{chatterjee_FSTTCS10}, we have equivalence between finite-memory winning for multi energy and multi mean-payoff games. The rest follows from straigthforward application of Lemma \ref{purefinitememory}, Lemma \ref{lemma_parityToEnergy}, Lemma \ref{lemma_width}, and Lemma~\ref{lemma_expFamily}.\qed
\end{proof}

\section{Symbolic synthesis algorithm}
\label{sec:alg}

We now present a {\em symbolic}, {\em incremental} and {\em optimal} algorithm to synthesize a finite-memory winning strategy in a MEG.\footnote{Note that the symbolic algorithm can be applied to MEPGs and MMPPGs after removal of the parity condition by applying the construction of Lemma~\ref{lemma_parityToEnergy}.} This algorithm outputs a (set of) winning initial credit(s) and a derived finite-memory winning strategy (if one exists) which is exponential in the worst-case. Its running time is at most exponential. So our symbolic algorithm can be considered (worst-case) optimal in the light of the results of previous section. 

This algorithm computes the greatest fixed point of a monotone operator that defines the sets of winning initial (vectors of) credits for each state of the game. As those sets are upward-closed, they are symbolically represented by their minimal elements. To ensure convergence, the algorithm considers only credits that are below some {\em threshold}, noted $\maxE$. This is without giving up completeness because, as we show below, for a game $\gameFull$, it is sufficient to take the value $2\cdot l \cdot W$ for $\maxE$, where $l$ is the bound on the depth of epSCTs obtained in Lemma~\ref{lemma_depth_parity} and $W$ is the largest absolute value of weights used in the game. We also show how to extract a finite state Moore machine representing a corresponding winning strategy (states of the Moore machine encode the memory of the strategy) from this set of minimal winning initial credits and how to obtain an {\em incremental} algorithm by increasing values for the threshold $\maxE$ starting from small values. 

\paragraph{{\bf A controllable predecessor operator.}}
Let $\gameFull$ be a MEG, $\maxE \in \nat$ be a constant, and $U(\maxE)$ be the set $(\statesOne \cup \statesTwo) \times \lbrace 0, 1,\ldots{},\maxE\rbrace^{\dimension}$. Let ${\cal U}(\maxE)=2^{U(\maxE)}$, i.e., the powerset of $U(\maxE)$, and the operator $\Cpre_{\maxE} \colon  {\cal U}(\maxE) \rightarrow {\cal U}(\maxE)$ be defined as follows:
\begin{align*}
\mathcal{E}(V) &= \{ (s_1,e_1) \in U(\maxE) \mid s_1 \in \statesOne \land \exists (s_1,s) \in E, \exists (s,e_2) \in V : e_2 \leq e_1 + w(s_1,s) \},\\
\mathcal{A}(V) &= \{ (s_2,e_2) \in U(\maxE) \mid s_2 \in \statesTwo \land \forall (s_2,s) \in E, \exists (s,e_1) \in V : e_1 \leq e_2 + w(s_2,s) \},
\end{align*}
\begin{equation}
\label{eq:cpre}
\Cpre_{\maxE}(V) = \mathcal{E}(V) \;\cup\; \mathcal{A}(V).
\end{equation}
Intuitively, $\Cpre_{\maxE}(V)$ returns the set of energy levels from which \playerOne can force an energy level in $V$ in one step. 
The operator $\Cpre_{\maxE}$ is $\subseteq$-monotone over the complete lattice ${\cal U}(\maxE)$, and so there exists a \textit{greatest fixed point} for $\Cpre_{\maxE}$ in the lattice ${\cal U}(\maxE)$, denoted by $\Cpre_{\maxE}^{\ast}$. As usual, the greatest fixed point of the operator  $\Cpre_{\maxE}$ can be computed by successive ap\-proxi\-ma\-tions as the last element of the following finite $\subseteq$-descending chain. We define the algo\-rithm $\algoName$ that computes this greatest fixed point:
\begin{equation}
\label{eq:desc-chain}
U_0=U(\maxE),\;U_1=\Cpre_{\maxE}(U_0),\;\ldots,\;U_n=\Cpre_{\maxE}(U_{n-1})=U_{n-1}.
\end{equation}
The set $U_i$ contains all the energy levels that are sufficient to maintain the energy positive in all dimensions for $i$ steps. Note that the length of this chain can be bounded by $| U(\maxE) |$ and the time needed to compute each element of the chain can be bounded by a polynomial in $| U(\maxE) |$. As a consequence, we obtain the following lemma.

\begin{lemma}
\label{lemma_cpre}
Let $\gameFull$ be a multi energy game and $\maxE \in \nat$ be a constant. Then $\Cpre_{\maxE}^{\ast}$ can be computed in time bounded by a polynomial in $| U(\maxE) |$, i.e., an exponential in the size of $\game$.
\end{lemma}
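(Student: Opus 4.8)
The plan is to make rigorous the two quantitative assertions stated informally just before the lemma, and then combine them with an elementary count of $| U(\maxE) |$. Concretely, I would prove: (a) the $\subseteq$-descending chain \eqref{eq:desc-chain} stabilizes after at most $| U(\maxE) |$ iterations, and (b) a single application $V \mapsto \Cpre_{\maxE}(V)$ is computable in time polynomial in $| U(\maxE) |$. Multiplying these two bounds gives a total running time polynomial in $| U(\maxE) |$, and a final counting argument converts this into ``exponential in the size of $\game$''.

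First I would bound the \emph{length} of the chain. By definition $\Cpre_{\maxE}$ maps into ${\cal U}(\maxE)$, so $U_1 = \Cpre_{\maxE}(U_0) \subseteq U(\maxE) = U_0$; applying $\subseteq$-monotonicity of $\Cpre_{\maxE}$ inductively yields $U_{i+1} \subseteq U_i$ for all $i$, so the chain is genuinely descending. As long as the fixed point has not been reached we have $U_{i+1} \subsetneq U_i$, and each such strict step removes at least one element of the finite set $U(\maxE)$. Hence the chain reaches $\Cpre_{\maxE}^{\ast} = U_n$ after at most $n \leq | U(\maxE) |$ iterations.

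Second I would bound the \emph{cost of one iteration}, namely computing $\Cpre_{\maxE}(V) = \mathcal{E}(V) \cup \mathcal{A}(V)$. For each of the at most $| U(\maxE) |$ candidate pairs $(s,e)$, deciding membership in $\mathcal{E}(V)$ (resp. $\mathcal{A}(V)$) amounts to testing an existential (resp. universal) condition ranging over the outgoing edges of $s$ and, for each such edge, over the elements of $V$, where each individual test is a componentwise comparison of two $\dimension$-vectors. Since there are at most $| \edges | \le | \states |^2$ outgoing edges, at most $| U(\maxE) |$ elements in $V$, and each comparison costs $O(\dimension)$, one candidate is handled in time polynomial in $| U(\maxE) |$, and the full set $\Cpre_{\maxE}(V)$ in time polynomial in $| U(\maxE) |$ as well (the extra factors $| \states |^2$ and $\dimension$ being themselves polynomial in the size of $\game$).

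Combining (a) and (b), the whole computation of $\Cpre_{\maxE}^{\ast}$ runs in time polynomial in $| U(\maxE) |$, and it remains only to observe that $| U(\maxE) | = | \states | \cdot (\maxE+1)^{\dimension}$, which is exponential in the dimension $\dimension$ (and in the number of bits encoding $\maxE$), hence exponential in the size of $\game$. I expect no genuine difficulty here: the one place asking for a little care is keeping the per-iteration bound honest, i.e.\ verifying that the nested quantifier structure of $\mathcal{E}$ and $\mathcal{A}$ collapses to a polynomial in $| U(\maxE) |$ rather than hiding an extra blow-up, which follows precisely because both the edge count and the dimension contribute only polynomial factors.
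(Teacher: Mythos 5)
Your proposal is correct and follows essentially the same route as the paper, which argues (without further detail) that the descending chain of eq.~\eqref{eq:desc-chain} has length at most $|U(\maxE)|$ and that each application of $\Cpre_{\maxE}$ costs time polynomial in $|U(\maxE)|$, then observes that $|U(\maxE)| = |\states|\cdot(\maxE+1)^{\dimension}$ is exponential in the size of $\game$. Your write-up merely makes explicit the monotonicity argument for the chain being descending and the quantifier-by-quantifier cost of one iteration, both of which the paper leaves implicit.
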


\paragraph{{\bf Symbolic representation.}}
To define a symbolic representation of the sets manipulated by the $\Cpre_{\maxE}$ operator, we exploit the following partial order: let $(s,e), (s',e') \in U(\maxE)$, we define 
\begin{equation}
(s,e) \nodeLess (s',e')  \mbox{~iff~} s = s' \mbox{~and~} e \leq e'.
\end{equation}
\noindent
A set $V \in {\cal U}(\maxE)$ is {\em closed} if for all $(s,e), (s',e') \in U(\maxE)$, if $(s,e) \in V$ and $(s,e) \preceq (s',e')$, then $(s',e') \in V$. By definition of $\Cpre_{\maxE}$, we get the following property.

\begin{lemma}
\label{lemma_closed}
All sets $U_i$ in Eq.~\eqref{eq:desc-chain} are closed for $\preceq$. 
\end{lemma}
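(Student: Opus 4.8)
The plan is to prove Lemma~\ref{lemma_closed} by induction on the index $i$ of the descending chain defined in eq.~\eqref{eq:desc-chain}, establishing that each $U_i$ is closed for $\preceq$. The base case $U_0 = U(\maxE)$ is immediate: by definition $U_0$ contains \emph{all} elements of $U(\maxE)$, so the closure condition holds vacuously (there is no pair violating it). For the inductive step, I assume $U_{i-1}$ is closed and show that $U_i = \Cpre_{\maxE}(U_{i-1}) = \mathcal{E}(U_{i-1}) \cup \mathcal{A}(U_{i-1})$ is closed as well. Since the union of two closed sets is closed (the defining implication is preserved under union), it suffices to argue that both $\mathcal{E}(U_{i-1})$ and $\mathcal{A}(U_{i-1})$ are closed separately.

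For the $\mathcal{E}$ component, suppose $(s_1, e_1) \in \mathcal{E}(U_{i-1})$ and $(s_1, e_1) \preceq (s_1, e_1')$, so $e_1 \leq e_1'$. By definition of $\mathcal{E}$ there is an edge $(s_1, s) \in E$ and an element $(s, e_2) \in U_{i-1}$ with $e_2 \leq e_1 + w(s_1,s)$. Since $e_1 \leq e_1'$, we immediately get $e_2 \leq e_1' + w(s_1,s)$ using the same witnesses, so $(s_1, e_1') \in \mathcal{E}(U_{i-1})$. The key point is that raising the available energy $e_1$ only relaxes the existential constraint, so no appeal to the induction hypothesis is even needed here. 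For the $\mathcal{A}$ component, suppose $(s_2, e_2) \in \mathcal{A}(U_{i-1})$ and $e_2 \leq e_2'$. For every edge $(s_2, s) \in E$ there is a witness $(s, e_1) \in U_{i-1}$ with $e_1 \leq e_2 + w(s_2,s)$; since $e_2 \leq e_2'$, the same witness satisfies $e_1 \leq e_2' + w(s_2,s)$, giving $(s_2, e_2') \in \mathcal{A}(U_{i-1})$. Again the monotonicity in the energy coordinate of the threshold does all the work.

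The proof is essentially a monotonicity observation: both $\mathcal{E}$ and $\mathcal{A}$ are defined by inequalities in which the energy level of the element being tested appears only on the ``larger'' side of the relevant $\leq$, so increasing that level can never falsify membership. I do not expect any genuine obstacle; the only care required is to handle the $\forall$ quantifier in $\mathcal{A}$ correctly (observing that the \emph{same} witness $(s,e_1)$ works for the larger threshold $e_2'$, edge by edge), and to note that the induction hypothesis that $U_{i-1}$ is closed is in fact not strictly necessary for this particular argument, since the witnesses are reused verbatim rather than replaced by dominating ones. Consequently every $U_i$ in the chain is closed for $\preceq$, which is exactly the claim.
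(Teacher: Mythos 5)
Your proof is correct and is exactly the elaboration the paper intends: the paper gives no explicit proof, stating only that the property follows ``by definition of $\Cpre_{\maxE}$,'' and your monotonicity observation (the tested energy level appears only on the larger side of each inequality, so the same witnesses survive any increase) is precisely that justification. Your remark that the induction hypothesis is not actually needed --- $\Cpre_{\maxE}(V)$ is closed for arbitrary $V$ --- is a correct and slightly sharper formulation of the same fact.
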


Therefore, all sets $U_{i}$ in the descending chain of Eq.~\eqref{eq:desc-chain} can be symbolically represented by their minimal elements $\minElems(U_{i})$ which is an antichain of elements for $\preceq$. Even if the largest antichain can be exponential in $\game$, this representation is, in practice, often much more efficient, even for small values of the parameters. For example, with $\maxE = 4$ and $\dimension = 4$, we have that the cardinality of a set can be as large as $\vert U_{i}\vert \leq 625$ whereas the size of the largest antichain is bounded by $\vert \minElems(U_{i})\vert \leq 35$. Antichains have proved to be very efficient: see for example \cite{ACHMV10,WDHR06,DR10}. Therefore, our algorithm is expected to have good performance in practice.

\paragraph{{\bf Correctness and completeness.}}
The following two lemmas relate the greatest fixed point $\Cpre_{\maxE}^{\ast}$ and the existence of winning strategies for $\playerOne$ in $\game$. We start with the correctness of the symbolic algorithm.

\begin{lemma}[Correctness]
\label{lem:MooreFP}
Let $\gameFull$ be a multi energy game, let $\maxE \in \nat$ be a constant. If there exists $(c_{1}, \ldots{}, c_{\dimension}) \in \nat^{\dimension}$ such that $(\initState, (c_{1}, \ldots{}, c_{\dimension})) \in \Cpre_{\maxE}^{\ast}$, then $\playerOne$ has a winning strategy in $\game$ for initial credit $(c_{1}, \ldots{}, c_{\dimension})$ and the memory needed by $\playerOne$ can be bounded by $| {\sf Min}_{\preceq}(\Cpre_{\maxE}^{\ast})|$ (the size of the antichain of minimal elements in the fixed point).
\end{lemma}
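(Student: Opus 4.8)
The plan is to read a finite-memory winning strategy directly off the greatest fixed point, taking the antichain of minimal elements $\minElems(\Cpre_{\maxE}^{\ast})$ as the memory, and to certify that it is surely winning by maintaining an invariant coupling each memory value with the true energy level of the play. The driving observation is that $\Cpre_{\maxE}^{\ast}$ is a fixed point, so $\Cpre_{\maxE}(\Cpre_{\maxE}^{\ast}) = \Cpre_{\maxE}^{\ast}$ and hence every $(s,m) \in \Cpre_{\maxE}^{\ast}$ lies in $\mathcal{E}(\Cpre_{\maxE}^{\ast}) \cup \mathcal{A}(\Cpre_{\maxE}^{\ast})$. Unfolding the definitions of $\mathcal{E}$ and $\mathcal{A}$ then yields a one-step response: for $s \in S_1$ there is an edge $(s,s')$ and some $(s',e') \in \Cpre_{\maxE}^{\ast}$ with $e' \le m + w(s,s')$, while for $s \in S_2$ such an $e'$ exists for \emph{every} outgoing edge chosen by the adversary.

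First I would build the Moore machine with memory $M = \minElems(\Cpre_{\maxE}^{\ast})$ (so that $|M| = |\minElems(\Cpre_{\maxE}^{\ast})|$, giving the claimed bound), each memory value being a pair $(s,m)$ recording the current state and a minimal winning credit. For $s \in S_1$ the next-action function picks, by a fixed deterministic selection rule, the witnessing edge $(s,s')$ from $\mathcal{E}$; the update function then moves the memory to an arbitrarily chosen minimal element $(s',m') \in \minElems(\Cpre_{\maxE}^{\ast})$ with $m' \le e'$, which exists because $\Cpre_{\maxE}^{\ast}$ is closed for $\preceq$ by Lemma~\ref{lemma_closed}. For $s \in S_2$, whatever edge $(s,s')$ the adversary plays, $\mathcal{A}$ supplies $(s',e') \in \Cpre_{\maxE}^{\ast}$ with $e' \le m + w(s,s')$, and the update again rounds down to a minimal $m' \le e'$.

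The heart of the argument is the invariant: along every prefix $\rho$ consistent with the strategy, ending in state $s$ with memory $(s,m)$, one has $m \le (c_{1},\ldots,c_{\dimension}) + \el(\rho)$ componentwise. The base case holds since $(\initState,(c_{1},\ldots,c_{\dimension})) \in \Cpre_{\maxE}^{\ast}$ furnishes a minimal $(\initState,m_{0})$ with $m_0 \le (c_{1},\ldots,c_{\dimension})$. For the inductive step, combining $m \le (c_{1},\ldots,c_{\dimension}) + \el(\rho)$ with $e' \le m + w(s,s')$ gives $e' \le (c_{1},\ldots,c_{\dimension}) + \el(\rho \cdot s')$, and $m' \le e'$ then re-establishes the invariant after the move. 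Since every element of $U(\maxE)$ has non-negative coordinates, $m' \ge 0$, so $(c_{1},\ldots,c_{\dimension}) + \el(\rho\cdot s') \ge m' \ge 0$ in all dimensions; as this holds for every consistent prefix, the play satisfies $\objEL$, proving the strategy surely winning for initial credit $(c_{1},\ldots,c_{\dimension})$.

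The step I expect to require the most care is the truncation at $\maxE$. The witnesses $e'$ returned by $\mathcal{E}$ and $\mathcal{A}$ are constrained to $\{0,1,\ldots,\maxE\}^{\dimension}$, so whenever $m + w(s,s')$ has a coordinate exceeding $\maxE$, the inequality $e' \le m + w(s,s')$ deliberately discards the surplus. The point to verify is that this only ever \emph{underestimates} the true energy: the tracked memory $m'$ is a lower bound on the actual (untruncated) level, and only lower bounds are needed to certify positivity, so capping is harmless. This is precisely why following a minimal element of the fixed point, rather than the exact energy vector, suffices and yields a memory bounded by the size of the antichain $\minElems(\Cpre_{\maxE}^{\ast})$.
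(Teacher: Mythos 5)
Your proposal is correct and follows essentially the same route as the paper's proof: you build a Moore machine whose states are the $\preceq$-minimal elements of the fixed point, use the fixed-point identity $\Cpre_{\maxE}(\Cpre_{\maxE}^{\ast})=\Cpre_{\maxE}^{\ast}$ together with closedness (Lemma~\ref{lemma_closed}) to obtain one-step witnesses, and maintain the same invariant that the stored credit is a componentwise lower bound on $(c_1,\ldots,c_k)+\el(\rho)$, from which positivity follows since all labels lie in $\lbrace 0,\ldots,\maxE\rbrace^{\dimension}$. The only (cosmetic) differences are that the paper assumes alternation of players and keeps only the $\playerOne$-minimal elements as machine states, whereas you treat both player types uniformly, and that you make explicit the observation that truncation at $\maxE$ only ever underestimates the true energy --- a point the paper leaves implicit in the inequalities defining $\Act^M$ and $\Delta^M$.
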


Given the set of winning initial credits output by $\algoName$, it is straightforward to derive a corresponding winning strategy of at most exponential size. Indeed, for winning initial credit $\overline{c} \in \nat^{\dimension}$, we build a Moore machine which (i) states are the minimal elements of the fixed point (antichain at most exponential in $\game$), (ii) initial state is any element $(t, u)$ among them such that $t=\initState$ and $u \leq \overline{c}$, (iii) next-action function prescribes an action that ensures remaining in the fixed point, and (iv) update function maintains an accurate energy level in the memory.

\newcommand{\mooreMachineNoParam}{\ensuremath{\mathcal{M}} }
\newcommand{\mooreMachine}[1]{\ensuremath{\mathcal{M}(#1)} }
\newcommand{\mooreMachineFull}[1]{\ensuremath{\mooreMachine{#1} = (\mooreMem, \mooreInitMem, \mooreUpd, \mooreNext)} }
\newcommand{\mooreMem}{\ensuremath{M} }
\newcommand{\mooreMemElem}{\ensuremath{{\sf m}} }
\newcommand{\mooreInitMem}{\ensuremath{\mooreMemElem_{0}} }
\newcommand{\mooreUpd}{\ensuremath{\alpha_{{\sf u}}} }
\newcommand{\mooreUpdHat}{\ensuremath{\hat{\alpha}_{{\sf u}}} }
\newcommand{\mooreNext}{\ensuremath{\alpha_{{\sf n}}} }

\begin{proof}
We denote by $\overline{c}$ the $\dimension$-dimension credit vector $(c_{1}, \ldots{}, c_{\dimension})$. W.l.o.g. we assume that states of $\game$ alternate between positions of $\playerOne$ and positions of $\playerTwo$ (otherwise, we split needed edges by introducing dummy states). From $\Cpre_{\maxE}^{\ast}$, we construct a Moore machine $\mooreMachineNoParam =(\mooreMem,\mooreInitMem,\mooreUpd, \mooreNext)$ which respects the following definitions:
  \begin{itemize}
  	\item $\mooreMem ={\sf Min}_{\preceq} \{ (t,u) \in S_1 \times \lbrace 0 \ldots{} \maxE\rbrace^{\dimension} \mid (t,u) \in (\Cpre_{\maxE}^{\ast}) \}$. The set of states of the machine is the antichain of $\preceq$-minimal elements that belong to $\playerOne$ in the fixed point. Note that the length of this antichain is bounded by an exponential in the size of the game.
	\item $\mooreInitMem$ is any element $(t,u)$ in $\mooreMem$ such that $t=\initState$ and $u \leq \overline{c}$. Note that such an element is guaranteed to exist as $(\initState, \overline{c}) \in \Cpre_{\maxE}^{\ast}$.
	\item For all $(t,u) \in \mooreMem$, we define $\mooreNext((t,u))$ by choosing any element $(t,t') \in E$ such that there exists $(t',u') \in \Cpre_{\maxE}^*$ with $u'=u+w(t,t')$. Such an element is guaranteed to exist by definition of $\Cpre_{\maxE}$ and the fact that  $(t,u) \in \Cpre_{\maxE}^{\ast}$.
	\item $\mooreUpd \colon \mooreMem \times ((\statesTwo \times \states) \cap \edges) \rightarrow \mooreMem$ is any partial function that respects the following constraint: if $\mooreNext((t,u))=(t,t')$ then $\mooreUpd((t,u), (t',t''))$ is defined for any $(t',t'') \in E$ and can be chosen to be equal to any $(t'',u'')$ such that $u'' \leq u+w(t,t')+w(t',t'')$, and such an $u''$ is guaranteed to exist by definition of $\Cpre_{\maxE}$ and because $\Cpre_{\maxE}^{\ast}$ is a fixed point.
  \end{itemize}
Now, let us prove that for any initial prefix $s_0s_1  \dots s_{2n}$ of even length in $G$, which is compatible with $\mooreMachineNoParam$, we have that $\overline{c} + \el(s_0s_1 \dots s_{2n-1}) \geq 0$ and that $\overline{c} + \el(s_0s_1 \dots s_{2n}) \geq 0$. To establish this property, we first prove the following property by induction on $n$: $\overline{c} + \el(s_0s_1  \dots s_{2n}) \geq u$ where $u$ is the energy level of the label of the state reached after reading the prefix $s_0s_1  \dots s_{2n}$ with the Moore machine $\mooreMachineNoParam$. Base case $n=0$ is trivial. Induction: assume that the property is true for $n-1$, and let us establish it for $n$. By induction hypothesis, we have that $\overline{c} + \el(s_0s_1  \dots s_{2(n-1)}) \geq u$ where $u$ is the energy level of the label of state $\mooreMemElem$ that is reached after reading $s_0s_1  \dots s_{2(n-1)}$ with the Moore machine. Now, assume that $\mooreNext(\mooreMemElem)=(t,t')$. So, $s_{2(n-1)}=t$ and the choice of $\playerOne$ is to play $(t,t')$. So, $s_{2(n-1)+1}=t'$. Now for all possible choices $(t',t'')$ of $\playerTwo$, we know by definition of $\mooreMachineNoParam$ that the energy level $u''$  that labels the state $\mooreUpd(\mooreMemElem,(t',t''))$ is $u'' \leq u+w(t,t')+w(t',t'')$, which establishes our property. Therefore, the strategy of $\playerOne$ based on $\mooreMachineNoParam$ is such that the energy always stays positive for initial credit $\overline{c}$, which concludes the proof.\qed
\end{proof}

Completeness of the symbolic algorithm is guaranteed when a sufficiently large threshold $\maxE$ is used as established in the following lemma.

\begin{lemma}[Completeness]
\label{lem:gfpcpre}
Let $\gameFull$ be a multi energy game in which all absolute values of weights are bounded by $W$. If $\playerOne$ has a winning strategy in $\game$ and $T=(Q,R)$ is a self-covering tree for $\game$ of depth $l$, then $(\initState, (\maxE, \ldots{}, \maxE)) \in \Cpre_{\maxE}^{\ast}$ for $\maxE = 2\cdot l \cdot \largestW$. 
\end{lemma}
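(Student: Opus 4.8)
The plan is to exploit the Knaster--Tarski characterization of the greatest fixed point. Since $\Cpre_{\maxE}$ is $\subseteq$-monotone on the complete lattice $\mathcal{U}(\maxE)$, every \emph{post-fixed point} $V$ (i.e.\ every $V$ with $V \subseteq \Cpre_{\maxE}(V)$) lies below the greatest fixed point $\Cpre_{\maxE}^{\ast}$. Hence it suffices to exhibit one such $V \subseteq U(\maxE)$ that contains a pair $(\initState, e)$ with $e \leq (\maxE, \ldots, \maxE)$, and then lift this to the desired pair using closedness. I would build $V$ directly out of the given self-covering tree $\treeFull$, reading each node label as a (truncated) energy credit.

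Concretely, to every node $\node \in \treeStates$ with $\lab(\node) = \langle t, u\rangle$ I associate the credit $e_{\node} = (\depth\cdot\largestW, \ldots, \depth\cdot\largestW) + u \in \integ^{\dimension}$ and put $V = \{(t, e_{\node}) \mid \node \in \treeStates,\ \lab(\node) = \langle t, u\rangle\}$. The first thing to check is $V \subseteq U(\maxE)$, and this is exactly where the value $\maxE = 2\cdot\depth\cdot\largestW$ originates: a node sits at depth at most $\depth$, and each tree edge changes the accumulated energy by at most $\largestW$ in absolute value on every dimension, so $u \in \{-\depth\largestW, \ldots, \depth\largestW\}^{\dimension}$ and therefore $e_{\node} \in \{0, \ldots, 2\depth\largestW\}^{\dimension} = \{0, \ldots, \maxE\}^{\dimension}$. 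Intuitively, the uniform offset $\depth\cdot\largestW$ dominates the largest possible energy drop $\depth\cdot\largestW$ along a branch (guaranteeing $e_{\node} \geq 0$), while the factor two in $\maxE$ absorbs the symmetric largest gain (guaranteeing $e_{\node} \leq \maxE$).

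Next I would verify $V \subseteq \Cpre_{\maxE}(V)$ node by node. For a non-leaf $\node$ with $\lab(\node) = \langle t, u\rangle$, every relevant child $\otherNode$ has label $\langle t', u + \weight(t,t')\rangle$, so $e_{\otherNode} = e_{\node} + \weight(t,t')$; if $t \in \statesOne$ the unique child witnesses $(t, e_{\node}) \in \mathcal{E}(V)$, and if $t \in \statesTwo$ the children, which by definition of an SCT cover all outgoing edges of $t$, witness $(t, e_{\node}) \in \mathcal{A}(V)$. The delicate case, which I expect to be the main obstacle, is a leaf $\node$: here the epSCT strategy sends the pebble up to an energy ancestor $\otherNode$ carrying the \emph{same} game state $t$ but a label $\langle t, u_{\otherNode}\rangle$ with $u_{\otherNode} \leq u$. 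Since $\otherNode$ is internal, I reuse \emph{its} children $\nu$ (the unique one if $t \in \statesOne$, one per outgoing edge if $t \in \statesTwo$); each belongs to $V$ and satisfies $e_{\nu} = (\depth\largestW,\ldots,\depth\largestW) + u_{\otherNode} + \weight(t,t') \leq (\depth\largestW,\ldots,\depth\largestW) + u + \weight(t,t') = e_{\node} + \weight(t,t')$, using $u_{\otherNode} \leq u$. This inequality is precisely the condition required by $\mathcal{E}$ (resp.\ $\mathcal{A}$), so the ``reset'' performed at a leaf still keeps $(t, e_{\node})$ inside $\Cpre_{\maxE}(V)$; informally, capping the genuinely higher leaf energy at the lower ancestor level is harmless and is matched exactly by the truncation built into $\Cpre_{\maxE}$.

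Finally I would conclude. The post-fixed point property gives $V \subseteq \Cpre_{\maxE}^{\ast}$. The root of $\treeFull$ has label $\langle \initState, (0,\ldots,0)\rangle$, hence $(\initState, (\depth\largestW, \ldots, \depth\largestW)) \in V \subseteq \Cpre_{\maxE}^{\ast}$. Since $\Cpre_{\maxE}^{\ast}$ is the last set $U_n$ of the descending chain of eq.~\eqref{eq:desc-chain}, it is closed for $\preceq$ by Lemma~\ref{lemma_closed}; as $(\initState,(\depth\largestW, \ldots, \depth\largestW)) \preceq (\initState,(\maxE, \ldots, \maxE))$, this upgrades to $(\initState, (\maxE, \ldots, \maxE)) \in \Cpre_{\maxE}^{\ast}$, as required.
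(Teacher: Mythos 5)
Your proposal is correct and follows essentially the same route as the paper's proof: both construct a post-fixed point of $\Cpre_{\maxE}$ from the tree labels shifted upward by (at most) $l\cdot\largestW$, handle leaves via their energy ancestors, and conclude by the Knaster--Tarski characterization of the greatest fixed point. The only cosmetic difference is that the paper takes the $\preceq$-upward closure of the shifted labels from the start (so the root element $(\initState,(\maxE,\ldots,\maxE))$ lies in the set directly), whereas you keep the bare label set and invoke the closedness of $\Cpre_{\maxE}^{\ast}$ (Lemma~\ref{lemma_closed}) at the end; both steps are valid.
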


\begin{remark}
This algorithm is complete in the sense that if a winning strategy exists for $\playerOne$, it outputs at least a winning initial credit (and the derived strategy) for $\maxE = 2\cdot l \cdot \largestW$. However, this is different from the \textit{fixed initial credit problem}, which consists in deciding if a particular given credit vector is winning and is known to be EXPSPACE-hard by equivalence with deciding the existence of an infinite run in a Petri net given an initial marking \cite{brazdil_ICALP10,fahrenberg_ICTAC11}. In general, there may exist winning credits incomparable to those captured by algorithm $\algoName$. More precisely, given a constant $\maxE \in \nat$, the algorithm fully captures all the winning initial credits smaller than $(\maxE, \ldots{}, \maxE)$. Indeed, the fixed point computation considers the whole range of initial credits up to the given constant exhaustively, and only removes credits if they do not suffice to win. By Lemma \ref{lem:gfpcpre}, it is moreover guaranteed that if an arbitrary winning initial credit exists, then there exists one in the range defined by the constant $\maxE = 2\cdot l \cdot \largestW$. Nevertheless, since our algorithm works in exponential time while the problem of finding \textit{all} the winning initial credits is EXPSPACE-hard, there may be some incomparable credits outside that range that are not captured by the algorithm (comparable credits are captured since we work with upper closed sets). Indeed, if our algorithm was able to compute exhaustively all winning credits in exponential time, this would induce that EXPTIME is equal to EXPSPACE. Notice that defining a class of games for which the algorithm $\algoName$ proves to be incomplete (in the sense that incomparable winning credits exist outside the region captured by constant $\maxE = 2\cdot l \cdot \largestW$) is an interesting open problem.
\end{remark}

\begin{proof}
To establish this property, we first prove that from the set of labels of $T$, we can construct a set $f$ which is increasing for the operator $\Cpre_{\maxE}$, i.e., $\Cpre_{\maxE}(f) \supseteq f$, and such that $(\initState, (\maxE, \ldots{}, \maxE)) \in f$. We define $f$ from $T=(Q,R)$ as follows. Let $C \in \nat$ be the smallest non-negative integer such that for all $q \in Q$, with $\lab(q)=(t,u)$, for all dimensions $i$, $1 \leq i \leq \dimension$, we have that $u(i)+C \geq 0$. Integer $C$ is bounded from above by $l \cdot \largestW$ because on every path from the root to a leaf in $T$, every dimension is at most decreased $l$ times by an amount bounded by $W$, and at the root all the dimensions are equal to $0$. For any $q \in Q$, we denote by $\lab(q)+C$ the label of $q$ where the energy level has been increased by $C$ in all the dimensions, i.e., if $\lab(q)=(t,u)$ then $\lab(q)+C=(t,u+ (C,\ldots,C))$. Note that for all nodes in $Q$, the label is at most $l \cdot W$ and thus the shifted label remains under $\maxE = 2\cdot l\cdot W$. Now, we define the set $f$ as follows:
\begin{equation}
f=\{ (t,u) \in U(\maxE) \mid \exists\, q \in Q,\, \lab(q)+C \preceq (t,u) \}.
\end{equation}
\noindent
So, $f$ is defined as the $\preceq$-closure of the set of labels in $T$ shifted by $C$ in all the dimensions. 

First, note that $(\initState, (\maxE,\ldots,\maxE)) \in f$ as the label of the root in $T$ is $(\initState, (0,\ldots,0))$. Second, let us show that $\Cpre_{\maxE}(f) \supseteq f$. Take any $(t,u) \in f$ and let us show that $(t,u) \in \Cpre_{\maxE}(f)$. We decompose the proof in two cases. \textbf{(A)} $t \in \statesOne$. By definition of $f$, there exists $q \in Q$ such that $\lab(q)+C \preceq (t,u)$. W.l.o.g. we can assume that $q$ is not a leaf as otherwise there exists an ancestor $q'$ of $q$ such that $\lab(q') \preceq \lab(q)$ (recall the set is described by its minimal elements). By definition of $T$, there exists $(t,t') \in E$ and $q' \in Q$ such that $(q,q') \in R$ and $\lab(q') = \lab(q) + \weight(t, t')$. Let $(t', v) =  \lab(q') + C$.
By definition of $f$, we have $(t',v) \in f$. By Eq.~\eqref{eq:cpre}, it follows that $(t,u) \in \Cpre_{\maxE}(f)$. \textbf{(B)} $t \in \statesTwo$. By definition of $f$, there exists $q \in Q$ such that $\lab(q)+C \preceq (t,u)$. Again, w.l.o.g. we can assume that $q$ is not a leaf as otherwise there exists an ancestor $q'$ of $q$ such that $\lab(q') \preceq \lab(q)$. By definition of $T$, for all $(t,t') \in E$, there is $q' \in Q$ such that $(q,q') \in R$ and $\lab(q') = \lab(q) + \weight(t, t')$. Let $(t', v) =  \lab(q') + C$. By definition of $f$, we have $(t',v) \in f$. By Eq.~\eqref{eq:cpre}, it follows that $(t,u) \in \Cpre_{\maxE}(f)$.

Now, let us show that $f \subseteq \Cpre_{\maxE}^{\ast}$. This is a direct consequence of the monotonicity of $\Cpre_{\maxE}$: it is well known that for any monotone function on a complete lattice, its greatest fixed point is equal to the least upper bound of all post-fixed points (points $e$ such that $e \subseteq \Cpre_{\maxE}(e)$), i.e., $\Cpre_{\maxE}^{\ast} = \bigcup \{ e \mid  e \subseteq \Cpre_{\maxE}(e) \} \supseteq f$. As $(\initState, (\maxE, \ldots{}, \maxE)) \in f$, that concludes the proof.\qed
\end{proof}

\begin{remark}
Note that the exponential bound on memory, obtained in Lemma \ref{lemma_width}, can also be derived from the Moore machine construction of Lemma~\ref{lem:MooreFP} as this method is complete according to Lemma~\ref{lem:gfpcpre}. 
Still, the DAG construction of Lemma \ref{lemma_width} is interesting in its own right, and introduces the concept of node merging, which is underlying to the symbolic algorithm correctness, while transparent in its use.
\end{remark}

\paragraph{{\bf Incrementality.}}
While the threshold $2 \cdot l \cdot W$ is sufficient, it may be the case that $\playerOne$ can win the game even if its energy level is bounded above by some smaller value. So, in practice, we can use Lemma \ref{lem:MooreFP}, to justify an incremental algorithm that first starts with small values for the parameter $\maxE$ and stops as soon as a winning strategy is found or when the value of $\maxE$ reaches the threshold $2\cdot l \cdot W$ and no winning strategy has been found.

\paragraph{{\bf Application of the symbolic algorithm to MEPGs and MMPGs.}}
Using the reduction of Lemma~\ref{lemma_parityToEnergy} that allows us to remove the parity condition, and the equivalence between multi energy games and multi mean-payoff games for finite-memory strategies (given by \cite[Theorem 3]{chatterjee_FSTTCS10}), along with Lemma \ref{lemma_cpre} (complexity), Lemma \ref{lem:MooreFP} (correctness) and Lemma \ref{lem:gfpcpre} (completeness), we obtain the following result.

\begin{theorem}[Symbolic and incremental synthesis algorithm]
\label{thm_symb}
Let $\gamePar$ be a multi energy (resp. multi mean-payoff) parity game. Algorithm $\algoName$ is a symbolic and incremental algorithm that synthesizes a winning strategy in $\gamePar$ of at most exponential size memory, if a winning (resp. finite-memory winning) strategy exists. In the worst-case, the algorithm $\algoName$ takes exponential time.
\end{theorem}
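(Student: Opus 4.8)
The plan is to treat Theorem \ref{thm_symb} as the assembly of the reductions and fixed-point lemmas established earlier in this section, so that the whole argument reduces to verifying that the composition of all the reductions stays within a single exponential. First I would dispose of the two forms of the input. For a multi mean-payoff parity game, I would invoke the equivalence of finite-memory winning between multi mean-payoff and multi energy games from \cite[Theorem 3]{chatterjee_FSTTCS10}, so that a finite-memory winning strategy exists for $\playerOne$ in the mean-payoff instance iff one exists in the corresponding energy instance; this is exactly where the ``(resp. finite-memory winning)'' qualifier in the statement comes from, since for mean-payoff infinite memory may otherwise be required. Then, in both cases, I would remove the parity condition using Lemma \ref{lemma_parityToEnergy}, replacing $\gamePar$ by a parity-free multi energy game $\game$ on the same state set, with $\dimension + \priorities$ dimensions and maximal absolute weight bounded by $\depth$ (the epSCT-depth bound of Lemma \ref{lemma_depth_parity}); by that lemma, $\playerOne$ wins $\game$ iff he wins $\gamePar$. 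After these two steps it suffices to run $\algoName$ on a plain multi energy game.

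Next I would feed the three fixed-point lemmas into this reduced game. Running $\algoName$ computes the greatest fixed point $\Cpre_{\maxE}^{\ast}$ of the monotone operator $\Cpre_{\maxE}$. Lemma \ref{lemma_cpre} bounds the cost of this computation by a polynomial in $|U(\maxE)|$, i.e. a single exponential in the size of the game for the chosen threshold; Lemma \ref{lem:MooreFP} turns a point $(\initState, \overline{c}) \in \Cpre_{\maxE}^{\ast}$ into a surely winning strategy encoded by a Moore machine whose state set is the antichain $\minElems(\Cpre_{\maxE}^{\ast})$, hence of at most exponential memory; and Lemma \ref{lem:gfpcpre} guarantees completeness, namely that whenever $\playerOne$ wins, the point $(\initState, (\maxE, \ldots, \maxE))$ already lies in $\Cpre_{\maxE}^{\ast}$ for the threshold $\maxE = 2 \cdot \depth \cdot \largestW$. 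Incrementality follows from the discussion preceding the theorem: one runs the fixed-point computation for increasing thresholds, halting as soon as a winning credit for $\initState$ appears, or when $\maxE$ reaches $2 \cdot \depth \cdot \largestW$, at which point completeness certifies that no winning strategy exists.

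The step requiring the most care — and the one I expect to be the main obstacle — is checking that the two reductions do not break the single-exponential accounting, since the parity-removal of Lemma \ref{lemma_parityToEnergy} raises the dimension to $\dimension + \priorities$ and inflates the maximal weight from $\largestW$ up to $\depth$, which is itself already exponential. Concretely, I would verify that with these new parameters both the threshold $\maxE$ and the cardinality $|U(\maxE)| = |\states| \cdot (\maxE + 1)^{\dimension + \priorities}$ remain single exponential in the original input size: since $\depth$ is single exponential, $\log \maxE$ is polynomial, and raising to the power $\dimension + \priorities$ (itself polynomial in the input) keeps the whole expression single exponential rather than tipping it into a double exponential. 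Granting this bookkeeping, Lemma \ref{lemma_cpre} yields the exponential worst-case running time, Lemma \ref{lem:MooreFP} yields the exponential memory bound together with soundness, and Lemma \ref{lem:gfpcpre} yields completeness; combined with the two preliminary reductions, these establish every claim of the theorem.
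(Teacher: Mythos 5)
Your proposal is correct and follows essentially the same route as the paper's own proof: reduce mean-payoff to energy via \cite[Theorem 3]{chatterjee_FSTTCS10}, strip the parity condition with Lemma~\ref{lemma_parityToEnergy}, and then combine Lemma~\ref{lemma_cpre} (complexity), Lemma~\ref{lem:MooreFP} (correctness and Moore-machine extraction), and Lemma~\ref{lem:gfpcpre} (completeness). Your explicit check that the composed parameters ($\dimension+\priorities$ dimensions, weights inflated to $\depth$, threshold $\maxE = 2\cdot\depth\cdot\largestW$) keep $|U(\maxE)|$ single exponential is a useful piece of bookkeeping that the paper leaves implicit, but it does not change the argument.
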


\begin{proof}
The correctness and completeness for algorithm $\algoName$ on multi energy games are resp. given by Lemma \ref{lem:MooreFP} and Lemma \ref{lem:gfpcpre}. Extension to mean-payoff games (under finite memory) is given by \cite[Theorem 3]{chatterjee_FSTTCS10}, whereas the parity condition can be encoded as energy thanks to Lemma \ref{lemma_parityToEnergy}. Exponential worst-case complexity of the algorithm $\algoName$ is induced by Lemma \ref{lemma_cpre}.\qed
\end{proof}

\paragraph{{\bf Integration in synthesis tools.}} Following the conference version of this paper \cite{chatterjee_CONCUR2012}, our results on strategy synthesis have been used in the {\sf Acacia+} synthesis tool. This tool originally handled the synthesis of controllers for specifications expressed in {\sf LTL} (Linear Temporal Logic, a classical formalism for formal specifications \cite{pnueli_FOCS1977}) using antichain-based algorithms and has recently been extended to the synthesis from {\sf LTL} specifications with mean-payoff objectives~\cite{bohy_TACAS2013}. The addition of multi mean-payoff objectives to {\sf LTL} specifications provides a convenient way to enforce that synthesized controllers also satisfy some reasonable behavior from a quantitative standpoint, such as minimizing the number of unsollicited grants in a client-server architecture with prioritized clients. Numerous practical applications may benefit from this multi-dimension framework.

The authors present an approach in which the corresponding synthesis problem ultimately reduces to strategy synthesis on a multi energy game \cite[Theorem 26]{bohy_TACAS2013}. Their implementation uses fixed point computations similar to Eq.~\eqref{eq:desc-chain} and has proved efficient (considering the complexity of the problem) in practice. It uses antichains to provide a compact representation of upper-closed sets and implements the incremental approach proposed before (regarding the constant $\maxE$). In practical benchmarks, winning strategies can generally be found for rather small values of $\maxE$. Hence, the incremental approach overcomes the need to compute up to the exponential theoretical bound $\maxE = 2 \cdot l \cdot W$ in many cases. Sample benchmarks and experiments can be found in \cite{bohy_TACAS2013}, and the tool can be used online \cite{acaciaWeb}.

\section{Trading finite memory for randomness}
\label{randomized}

In this section, we answer the fundamental question regarding the trade-off of memory for randomness in strategies: we study on which kind of games $\playerOne$ can replace a pure finite-memory winning strategy by an equally powerful, yet conceptually simpler, randomized memoryless one and discuss how memory is encoded into probability distributions. Note that we do not consider wider strategy classes (e.g., randomized finite-memory), nor do we allow randomization for $\playerTwo$ (which on most cases is dispensable anyway). Indeed, we aim at a better understanding of the underlying mechanics of memory and randomization, in order to provide alternative strategy representations of practical use; not exploration of more complex games with wider strategy classes (Lemma \ref{lemma_rfm} shows a glimpse of it).

\vspace{-3mm}
\begin{table}[bht]
  \centering   
\begin{tabular}{|c||c|c|c|c|}
\hline & ~Multi energy and energy parity~ & ~Multi MP (parity)~ & ~MP parity~ \\ 
\hline\hline ~one-player~ & $\times$ & $\surd$ & $\surd$ \\ 
\hline ~two-player~ & $\times$ & $\times$ & $\surd$ \\
\hline 
\end{tabular} 
\vspace{2mm}
      \caption{When pure finite memory for $\playerOne$ can be traded for randomized memorylessness.}
\label{tab:randomized}
  \end{table}

We present an overview of our results in Tab. \ref{tab:randomized} and summarize them in Theorem \ref{thm_rdm}. Note that we do not consider the opposite implication, i.e., does there always exist a way of encoding a randomized memoryless strategy into an equivalent finite-memory one. In general, this is not the case even for classes of games where we can trade memory for randomness, and it can easily be witnessed on the one-player multi mean-payoff game depicted on Fig. \ref{fig:satis_expect}. Indeed, expectation $(1, 1)$ is achievable with a simple uniform distribution while it is not achievable with a pure, arbitrary high memory strategy (even infinite).

\begin{figure}[tb]
\centering
\scalebox{1}{\begin{tikzpicture}[->,>=stealth',shorten >=1pt,auto,node
    distance=2.5cm,bend angle=45,scale=0.75,font=\scriptsize]
    \tikzstyle{p1}=[draw,circle,text centered,minimum size=8mm]
    \node[p1]  (0)  at (0, 0) {$s_{1}$};
    \node[p1]  (1) at (-2, 0) {$s_{2}$};
    \node[p1]  (2) at (2, 0)  {$s_{3}$};
    \coordinate[shift={(0mm,5mm)}] (init) at (0.north);
    \path
    (init) edge (0)
    (1) edge [loop left] node [below, xshift=-2mm, yshift=-2mm] {$(2, 0)$} (1)
    (2) edge [loop right] node [below, xshift=2mm, yshift=-2mm] {$(0, 2)$} (2)
    (0) edge node [below] {$(0, 0)$} (1)
    (0) edge node [below] {$(0, 0)$} (2);
      \end{tikzpicture}}
      \caption{Randomization can replace memory, but not the opposite.}
\label{fig:satis_expect}
\end{figure}
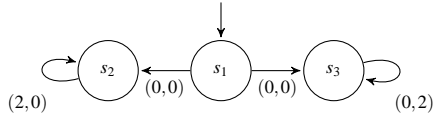
 
We break down these results into three subsections: energy games, multi mean-payoff (parity) games, and single mean-payoff parity games. We start with energy games.

\subsection{\textbf{Randomization and energy games}}
\label{app_energyRand}

Randomization is not helpful for energy objectives, even in one-player games. The proof argument is obtained from the intuition that energy objectives are similar in spirit to safety objectives.

\begin{lemma}
\label{lemma_enRand}
Randomization is not helpful for almost-sure winning in one-player and two-player energy, multi energy, energy parity and multi energy parity games: if there exists a finite-memory randomized winning strategy, then there exists a pure winning strategy with the same memory requirements.
\end{lemma}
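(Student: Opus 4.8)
The plan is to exploit that the energy component of the objective is a \emph{safety} condition, so that almost-sure satisfaction collapses to sure satisfaction along every play that is even \emph{possible} under the randomized strategy, and then to absorb the parity component by a memoryless-determinacy argument on a product arena that does not enlarge the memory. So the overall shape is: derandomize the energy part for free, and derandomize the parity part without adding memory by noting that on the relevant product the energy objective is automatically satisfied.

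Concretely, suppose \playerOne has a finite-memory randomized strategy $\strat_1$ encoded by a Moore machine $(M, m_0, \alpha_u, \alpha_n)$, with $\alpha_n$ randomized, that is almost-surely winning for $\objEL \cap \objPar$. Let $\delta > 0$ be the smallest positive probability assigned by $\alpha_n$; since $M$ and \states are finite, $\delta$ exists. First I would establish the \textbf{safety step}: every play whose \playerOne-moves all lie in the support of $\alpha_n$ (with the memory updated deterministically by $\alpha_u$) keeps the energy non-negative in all dimensions at all times. Indeed, if some such finite prefix $\rho$ of length $n$ drove one dimension below $0$, then letting \playerTwo play the pure strategy that simply follows $\rho$, the cylinder of $\rho$ would have probability at least $\delta^{n} > 0$; because $\objEL$ is prefix-monotone (a negative energy level is irrecoverable), this would force $\proba_{\initState}^{\strat_1, \strat_2}(\objEL) < 1$, contradicting almost-sure winning. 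Hence \emph{any} selection of support moves is safe for the energy objective, which already settles the energy and multi energy cases: picking an arbitrary support action yields a pure strategy using the same memory $M$.

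For the parity cases I would pass to the finite product arena $H$ on states $\states \times M$, where from a \playerOne-state $(s,m)$ only the support successors $(s', \alpha_u(m,s))$ with $s' \in \supp(\alpha_n(m,s))$ are available, while \playerTwo keeps all its moves, and priorities are inherited from \gamePar. By the safety step, every infinite play of $H$ keeps the energy non-negative, so on $H$ the objective $\objEL \cap \objPar$ coincides with the plain parity objective. I claim \playerOne wins this parity game on $H$ from $(\initState, m_0)$: otherwise, by memoryless determinacy of parity games, \playerTwo would have a memoryless winning strategy $\tau$ on $H$ for the complementary (odd) condition; reading $m$ off the history via $\alpha_u$ turns $\tau$ into a genuine finite-memory \playerTwo strategy in \gamePar, and then every play consistent with $\strat_1$ and $\tau$ would violate parity, giving $\proba_{\initState}^{\strat_1, \tau}(\objPar) = 0$ and contradicting almost-sure winning. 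Given that \playerOne wins parity on $H$, memoryless determinacy provides a \playerOne strategy that is memoryless \emph{on $H$}, i.e. a pure support selection $\sigma \colon (s,m) \mapsto s'$. Composing $\sigma$ with $\alpha_u$ and $m_0$ yields a pure strategy using exactly the memory $M$; it is safe for energy (a support move) and wins parity, hence surely wins $\objEL \cap \objPar$.

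The main obstacle is the parity component, since for parity the naive ``pick any support action'' argument fails — a support play may violate parity on a probability-zero set — and a sure-winning \emph{energy-parity} strategy on $H$ could a priori require more memory than $M$ (cf.\ the exponential bounds of Theorem~\ref{thm_optBounds}), which would break the ``same memory'' claim. The crucial observation that circumvents this is that on $H$ the energy objective is \emph{free}: every support play is already safe, so one only needs to win the \emph{plain} parity game on $H$, for which memoryless-on-$H$ strategies suffice, and ``memoryless on $H$'' is precisely ``memory $M$ in \gamePar''. The same construction covers the one-player cases (where $H$ simply has no \playerTwo choices), and taking the priority function identically even recovers the pure energy cases within the uniform argument.
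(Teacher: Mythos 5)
Your proof is correct, and its first half --- the safety step showing that every support-consistent prefix keeps the energy non-negative, because a violating finite prefix would carry probability at least $\delta^{n}>0$ --- is exactly the paper's argument; indeed the paper's entire proof consists of that step followed by the one-line assertion that randomization does not help for sure winning. Where you genuinely diverge is in the treatment of the parity component. The paper's argument, read literally, only establishes that the randomized strategy \emph{surely} wins the energy conjunct; it does not address the fact that an almost-surely winning strategy for $\objPar$ may still admit probability-zero consistent plays violating parity, so the ``pick any support action'' determinization that is valid for sure winning cannot be applied verbatim to the energy-parity cases (a determinization can select exactly such a null play). Your product construction on $\states \times M$ --- where the safety step makes the energy objective free, so that winning reduces to a plain parity game on which memoryless determinacy yields a support selection, i.e.\ a pure strategy with memory exactly $M$ --- closes precisely this gap and also substantiates the ``same memory requirements'' clause, which the paper leaves implicit. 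The cost is that you invoke memoryless determinacy of parity games, a heavier tool than anything used in the paper's two-line proof, but it buys a complete and uniform treatment of all four classes covered by the lemma.
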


\begin{proof}
Let $\gamePar$ be a game fitted with an energy objective. Consider an almost-sure winning strategy $\strat_{1}$. If there exists a single path $\play$ consistent with $\strat_{1}$ that violates the energy objective, then there exists a finite prefix witness $\prefix$ to violate the energy objective. Moreover, as the finite prefix has positive probability (otherwise the play is not consistent), and the strategy $\strat_{1}$ is almost-sure winning, it follows that no such path exists. In other words, $\strat_{1}$ is a sure winning strategy. Since randomization does not
help for sure winning strategy, it follows that randomization is not helpful for one-player and two-player energy, multi energy, energy parity and multi energy parity games.\qed
\end{proof}

\subsection{\textbf{Randomization and multi mean-payoff (parity) games}}
\label{app_mmppRand}

Randomized memoryless strategies can replace pure finite-memory ones in the one-player multi mean-payoff parity case, but not in the two-player one, even without parity. We first note a useful link between satisfaction and expectation semantics for the mean-payoff objective.

\begin{lemma}
\label{lemma_satis_expect}
Let $\gameFull$ be a game structure with mean-payoff objective $\objective = \objMPnoPar$ for some threshold vector $v \in \rat^{\dimension}$. Let $\strat_{1} \in \strats_{1}$ be a strategy of $\playerOne$. If $\strat_{1}$ is almost-sure winning for $\objective$ (i.e., winning for $1$-satisfaction), then $\strat_{1}$ is also winning for $v$-expectation for the mean-payoff function $\mpay$. The opposite does not hold.
\end{lemma}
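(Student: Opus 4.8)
The plan is to prove the non-trivial implication directly from the definitions, exploiting that the mean-payoff is a \emph{bounded} measurable function, and then to refute the converse using the one-player game of Fig.~\ref{fig:satis_expect}.

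First I would fix an arbitrary strategy $\strat_{2} \in \strats_{2}$ of $\playerTwo$ and work in the probability space induced by $\strat_{1}$ and $\strat_{2}$ from $s_{init}$. For each dimension $1 \leq j \leq \dimension$, consider the real-valued random variable $\play \mapsto \mpay(\play)(j) = \liminf_{n \rightarrow \infty} \frac{1}{n}\el(\play(n))(j)$. Since every edge weight lies in $\{-\largestW, \ldots{}, \largestW\}^{\dimension}$, this variable is bounded (it takes values in $[-\largestW, \largestW]$) and, being a $\liminf$ of measurable partial-average functions, is measurable; hence it is integrable and its expectation under $\proba_{s_{init}}^{\strat_{1}, \strat_{2}}$ is well defined.

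The key step is then to read off almost-sure winning. By hypothesis $\strat_{1}$ is winning for $1$-satisfaction, so $\proba_{s_{init}}^{\strat_{1}, \strat_{2}}(\objMPnoPar) = 1$, i.e. with probability one $\mpay(\play)(j) \geq v(j)$ simultaneously in every dimension. In particular, for each fixed $j$ the bounded random variable $\mpay(\cdot)(j)$ is $\geq v(j)$ almost surely, and therefore its expectation satisfies $\expect_{s_{init}}^{\strat_{1}, \strat_{2}}(\mpay(\cdot)(j)) \geq v(j)$. As both $j$ and $\strat_{2}$ were arbitrary, the vector of expectations of $\mpay$ dominates $v$ componentwise against every adversary, which is exactly the $v$-expectation criterion for $\strat_{1}$. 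I would note that this argument is completely generic: any bounded payoff whose objective is met almost surely yields at least the threshold in expectation.

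Finally, to show the converse fails, I would invoke the one-player game of Fig.~\ref{fig:satis_expect} with threshold $v = (1,1)$. Here $\playerOne$'s only decision is at $s_{init} = s_{1}$, after which the play loops forever in $s_{2}$ (mean-payoff $(2,0)$) or in $s_{3}$ (mean-payoff $(0,2)$). The randomized memoryless strategy choosing each successor with probability $\tfrac{1}{2}$ gives expected mean-payoff $\tfrac{1}{2}(2,0) + \tfrac{1}{2}(0,2) = (1,1) \geq v$, hence it is winning for $v$-expectation; yet every consistent play has mean-payoff either $(2,0)$ or $(0,2)$, neither of which is $\geq (1,1)$, so $\objMPnoPar$ is satisfied with probability $0$ and the strategy is not almost-sure (indeed not even positively) winning. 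Honestly there is no deep obstacle here; the only points needing care are the measurability and integrability justification for the $\liminf$ payoff, and, to keep the statement meaningful, reading the $v$-expectation condition dimension by dimension, comparing the vector of expectations with $v$ componentwise.
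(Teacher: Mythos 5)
Your proposal is correct and follows essentially the same route as the paper: the forward implication is the observation that a payoff which is at least $v$ with probability one has expectation at least $v$ (you merely add the explicit boundedness and measurability justification that the paper leaves implicit), and the counterexample is the same coin-flipping strategy on the game of Fig.~\ref{fig:satis_expect} yielding expectation $(1,1)$ while the satisfaction probability is $0$.
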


\begin{proof}
We first discuss the claimed implication. Suppose $1$-satisfaction is verified. Then, for all strategy $\strat_{2} \in \strats_{2}$ of $\playerTwo$, the set of consistent plays of value $\geq v$ has measure $1$, while the one of value $< v$ has measure $0$, by definition. Therefore, the expectation $\expect_{s_{init}}^{\strat_{1}, \strat_{2}}(\mpay)$ is at least $v$ and $v$-expectation is verified.

To show that the opposite does not hold, consider the simple one-player game depicted on Fig. \ref{fig:satis_expect}. Let $\strat_{1}$ be a simple coin flipping on $s_{1}$, i.e., $\strat_{1}(s_{1})(s_{2}) = 1/2$, $\strat_{1}(s_{1})(s_{3}) = 1/2$, $\strat_{1}(s_{2})(s_{2}) = 1$ and $\strat_{1}(s_{3})(s_{3}) = 1$. The expectation of this strategy is $v = (1,1)$. Nevertheless, the probability of achieving mean-payoff of at least $v$ is $0 < 1$, which shows that it does not verify $1$-satisfaction for $\objMPnoPar$.\qed
\end{proof}
 
The fundamental difference between energy and mean-payoff is that energy requires a property to be satisfied \textit{at all times} (in that sense, it is similar to safety), while mean-payoff is a \textit{limit} property. As a consequence, what matters here is the long-run frequencies of weights, not their order of appearance, as opposed to the energy case.

\begin{lemma}
\label{lemma_multiMP}
Pure finite-memory winning strategies can be traded for equally powerful randomized memoryless ones for one-player multi mean-payoff parity games, for both satisfaction and expectation semantics. For two-player games, randomized memoryless strategies are not as powerful, even limited to expectation semantics, no parity condition, and only $2$ dimensions.
\end{lemma}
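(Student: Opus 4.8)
The plan is to treat the two directions separately: a constructive argument for one player, and a single explicit counterexample for two players.

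\textbf{One-player case.} Since $\statesTwo = \emptyset$, a pure finite-memory winning strategy of $\playerOne$ produces a single ultimately periodic outcome $\play = \prefix\cdot\finplay^{\omega}$, where the cycle $\finplay$ is a closed walk through the set $T = \infPlay(\play)$ of recurrently visited states; winning means $\mpay(\play)\geq v$ and $\min\{\parity(s)\mid s\in T\}$ is even. I would build a randomized memoryless strategy $\strat_{1}^{rm}\in\stratsRandomizedMemoryless_{1}$ that reproduces, in the long run, the edge frequencies of $\finplay$. For $s\in T$, let $k_{s}$ be the number of occurrences of $s$ in one period of $\finplay$ and $m_{s,s'}$ the number of traversals of edge $(s,s')$; set $\strat_{1}^{rm}(s)(s') = m_{s,s'}/k_{s}$. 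For $s\in\statesOne\setminus T$ from which $T$ is reachable, let $\strat_{1}^{rm}$ deterministically pick a successor strictly closer to $T$, so that $T$ is reached from $\initState$ within $\vert\states\vert$ steps (the winning play visits $T$, so $\initState$ reaches it).

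The crux is to verify that the long-run statistics of the induced Markov chain on $T$ match those of $\finplay$. The chain is irreducible because $\finplay$ is a single closed walk covering $T$. I claim the normalized visit vector $\mu(s)=k_{s}/\vert\finplay\vert$ is its stationary distribution: this is exactly flow conservation in the closed walk $\finplay$, namely $\sum_{s} m_{s,s'} = k_{s'}$ (each state is entered as often as it is left), from which $\sum_{s}\mu(s)\,\strat_{1}^{rm}(s)(s') = \mu(s')$ follows immediately. By the ergodic theorem the long-run frequency of every edge $(s,s')$ equals $\mu(s)\,\strat_{1}^{rm}(s)(s')=m_{s,s'}/\vert\finplay\vert$ almost surely, hence almost surely $\mpay = \el(\finplay)/\vert\finplay\vert = \mpay(\play)\geq v$, and almost surely $\infPlay=T$, so the minimal recurrent priority is even. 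Thus $\strat_{1}^{rm}$ is almost-surely winning for the combined mean-payoff parity objective (satisfaction), and by Lemma \ref{lemma_satis_expect} it also meets $v$-expectation for $\mpay$; the finite prefix before reaching $T$ affects neither the limit mean-payoff nor the parity. I expect this stationary-distribution identity to be the main technical point of the positive direction.

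\textbf{Two-player case.} Here I would reuse the member $\game(1)$ of the family of Lemma \ref{lemma_expFamily}: a two-dimensional game with weights in $\{-1,0,1\}$ and trivial parity, with threshold $v=(0,0)$. In $\game(1)$, $\playerTwo$ first chooses $L$ or $R$ at $s_{1}$ (weights $(1,-1)$ resp.\ $(-1,1)$), and $\playerOne$ then chooses $L$ or $R$ at $t_{1}$ (same weights), before returning to $s_{1}$. With one bit of memory $\playerOne$ surely wins by always answering with the \emph{opposite} move, making the weight of every round $(0,0)$, so $\mpay=(0,0)\geq v$.

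Finally I would rule out randomized memoryless strategies even for expectation. Such a strategy is determined by the single probability $q$ of playing $t_{1,L}$, so $\playerOne$'s per-round expected contribution is $(2q-1,\,1-2q)$, independently of the round. If $\playerTwo$ fixes the pure strategy ``always $L$'', the expected per-round weight is $(2q,\,-2q)$, whose second component is negative for every $q>0$; if $\playerTwo$ plays ``always $R$'', it is $(2q-2,\,2-2q)$, whose first component is negative for every $q<1$. Since the rounds are i.i.d.\ of constant length, the expected mean-payoff has the sign of the per-round expectation, so for every $q\in[0,1]$ one of these two adversary strategies drives some dimension strictly below $0$. Hence no randomized memoryless strategy is winning for $(0,0)$-expectation, while a pure finite-memory one is surely winning. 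The only delicate point here is checking that restricting $\playerTwo$ to these two pure strategies already defeats the whole one-parameter family of candidate strategies, which the case split $q>0$ / $q<1$ settles.
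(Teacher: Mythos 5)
Your proposal is correct and follows essentially the same route as the paper: the one-player direction builds the same frequency-matching randomized memoryless strategy on the periodic part and verifies via the stationary distribution that edge frequencies (hence mean-payoff and the set of recurrent states, hence parity) are preserved, and the two-player direction is the paper's own counterexample up to relabeling (your $\game(1)$ is structurally the game of Fig.~\ref{fig:multiMP}), with the same case split on the single probability parameter. The only cosmetic differences are that you reach the recurrent part deterministically via an attractor-style rule where the paper uses a uniform distribution on the prefix edges, and that you cite $\game(1)$ rather than a freshly drawn game; neither changes the argument.
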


For the one-player case, we extract the frequencies of visit for edges of the graph from the regular outcome that arises from the finite-memory strategy of $\playerOne$. We build a randomized strategy with probability distributions on edges that yield the exact same frequencies in the long-run. Therefore, if the original pure finite-memory of $\playerOne$ is surely winning, the randomized one is almost-surely winning. For the two-player case, this approach cannot be used as frequencies are not well defined, since the strategy of $\playerTwo$ is unknown. Consider a game which needs perfect balance between frequencies of appearance of two sets of edges in a play to be winning (Fig. \ref{fig:multiMP}). To almost-surely achieve mean-payoff vector $(0, 0)$, $\playerOne$ must ensure that the long-term balance between edges $(s_{4}, s_{5})$ and $(s_{4}, s_{6})$ is the same as the one between edges $(s_{1}, s_{3})$ and $(s_{1}, s_{2})$. This is achievable with memory as it suffices to react immediately to compensate the choice of $\playerTwo$. However, given a randomized memoryless strategy of $\playerOne$, $\playerTwo$ always has a strategy to enforce that the long-term frequency is unbalanced, and thus the game cannot be won almost-surely by $\playerOne$ with such a strategy. Achieving expected mean-payoff $(0, 0)$ is also excluded.

\begin{figure}[htb]
\centering
  \scalebox{1}{\begin{tikzpicture}[->,>=stealth',shorten >=1pt,auto,node
    distance=2.5cm,bend angle=45,scale=0.6,font=\normalsize]
    \tikzstyle{p1}=[draw,circle,text centered,minimum size=8mm]
    \tikzstyle{p2}=[draw,rectangle,text centered,minimum size=7mm]
    \node[p2]  (1)  at (0, 0) {$s_{1}$};
    \node[p1]  (2) at (-2, -2) {$s_{2}$};
    \node[p1]  (3) at (2, -2)  {$s_{3}$};
    \node[p1]  (4) at (0, -4) {$s_{4}$};
    \node[p1]  (5) at (-5, -2)  {$s_{5}$};
    \node[p1]  (6) at (5, -2) {$s_{6}$};
    \coordinate[shift={(0mm,5mm)}] (init) at (1.north);
    \path
    (init) edge (1);
	\draw[->,>=latex] (1) to[out=225,in=45] node [left, xshift =-1mm, yshift =1mm] {$(1, -1)$} (2);
	\draw[->,>=latex] (1) to[out=315,in=135] node [right, xshift =1mm, yshift =1mm] {$(-1, 1)$} (3);
	\draw[->,>=latex] (2) to[out=315,in=135] node [left, xshift =-1mm, yshift =-1mm] {$(0, 0)$} (4);
	\draw[->,>=latex] (3) to[out=225,in=45] node [right, xshift =1mm, yshift =-1mm] {$(0, 0)$} (4);
	\draw[->,>=latex] (4) to[out=180,in=315] node [left, xshift =-5mm] {$(1, -1)$} (5);
	\draw[->,>=latex] (4) to[out=0,in=225] node [right, xshift =5mm] {$(-1, 1)$} (6);
	\draw[->,>=latex] (5) to[out=45,in=180] node [left, xshift =-5mm] {$(0, 0)$} (1);
	\draw[->,>=latex] (6) to[out=135,in=0] node [right, xshift =5mm] {$(0, 0)$} (1);
      \end{tikzpicture}}
      \caption{Memory is needed to enforce perfect long-term balance.}
      \label{fig:multiMP}
\end{figure}
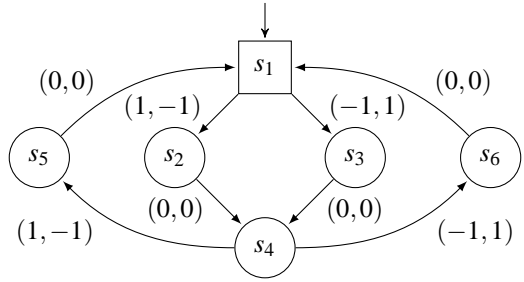

\begin{proof}
We begin with the one-player case. Let $\gamePar$ be a multi mean-payoff parity game. Let $\strat_{1}^{pf} \in \stratsPureFinite_{1}$ be the pure finite-memory strategy of the player. Since it is pure and finite, its outcome is a regular word $\play = \prefix_{1} \cdot (\prefix_{2})^{\omega}$, with $\prefix_{1} \in \states^{\ast}$, $\prefix_{2} \in \states^{+}$. Let $\objective = \objMP \cap \objPar$ be the multi mean-payoff parity objective for some threshold vector $v \in \rat^{\dimension}$. Suppose this strategy verifies $\alpha$-satisfaction for $\objective$ and $\beta$-expectation for the $\mpay$ function, for some $\alpha$, $\beta$. We claim that there exists a randomized memoryless strategy $\strat_{1}^{rm} \in \stratsRandomizedMemoryless_{1}$ that is also $\alpha$-satisfying for $\objective$ and that satisfies $\beta$-expectation for the $\mpay$ function; and we show how to build it.

We denote concatenation by the $\cdot$ symbol. Given a finite word $\prefix \in \states^{\ast}$, two states $s, s' \in \states$, we resp. denote by $\occ(s, \prefix)$ and $\occ((s, s'), \prefix)$ the number of occurences of the state $s$ and the transition $(s, s')$ in the word $\prefix$. We add the subscript $\circ$ when we count the first state of the word as the successor of the last one (i.e., the word is a cycle in the game graph). That is, $\occCirc(\ast, \prefix) = \occ(\ast, \prefix \cdot \first(\prefix))$.

Let us consider the mean-payoff of the outcome of strategy $\strat_{1}^{pf}$. Recall that for a play $\play \in \plays$, $\play = s^{1}, s^{2}, s^{3}\ldots{}$, we have $\mpay(\play) = \liminf_{n \rightarrow \infty} \frac{1}{n} \sum_{1 \leq i < n} \weight(s^{i}, s^{i+1})$.
Since the play induced by $\strat_{1}^{pf}$ is regular, the limit is well defined and we may express the mean-payoff in terms of frequencies, that is
\begin{equation*}
\mpay(\play) = \sum_{(s, s') \in \edges} \weight(s, s') \cdot \freq((s, s')),
\end{equation*}
where $\freq$ denotes the long-term frequency of a transition defined as
\begin{equation*}
\forall\, (s, s') \in \edges,\quad \freq((s, s')) = \dfrac{\occCirc((s, s'), \prefix_{2})}{\vert \prefix_{2}\vert}.
\end{equation*}

We define the randomized memoryless strategy $\strat_{1}^{rm}$ as follows: $\forall\, s, s' \in \states,\, (s, s') \in \edges,\, X = \left\lbrace (s, t) \,\vert\, t \in \states, (s, t) \in \left( \prefix_{1} \cdot \first(\prefix_{2})\right) \right\rbrace$,
\begin{align*}
\strat_{1}^{rm}(s)(s') = \begin{cases}\dfrac{1}{\vert X \vert} \text{ if } s \in \prefix_{1} \,\wedge\, s \not\in \prefix_{2},\\[1em]
\dfrac{\occCirc((s, s'), \prefix_{2})}{\occ(s, \prefix_{2})} \text{ if } s \in \prefix_{2},\\[1em]
0 \text{ otherwise}.
\end{cases}
\end{align*}
Intuitively, we fix a uniform distribution over transitions of the finite prefix $\prefix_{1}$ as we only need to ensure reaching the bottom strongly connected component (BSCC) defined by $\prefix_{2}$ with probability $1$, and the relative frequencies in $\prefix_{1}$ do not matter (because these weights and priorities are negligible in the long run). On the contrary, we use the exact frequencies for transitions of $\prefix_{2}$ as they prevail long-term wise. Note that $\strat_{1}^{rm}$ is a correctly defined randomized memoryless strategy.

Obviously, $\strat_{1}^{rm}$ yields a Markov chain over states of $(\prefix_{1} \cup \prefix_{2})$ such that states of $(\prefix_{1} \setminus \prefix_{2})$ are transient and states of $\prefix_{2}$ constitute a BSCC that is reached with probability one. Thus, the mean-payoff induced by $\strat_{1}^{rm}$ is totally dependent on this BSCC mean-payoff value. As a consequence, proving that transition frequencies in the BSCC are exactly the same as frequencies $\freq$ defined by $\strat_{1}^{pf}$ will imply the claim on mean-payoff. Moreover, parity will remain satisfied as the sets of infinitely often visited states will be the same for both the pure and the randomized strategy. Let $T = \lbrace t_{1}, t_{2}, \ldots{}, t_{m}\rbrace$ be the set of states that appear in $\prefix_{2}$. This BSCC is an ergodic Markov chain $\ergodic = (T, P)$ with the following matrix of transition probabilities:
\begin{equation*}
P = \begin{blockarray}{p{1cm}ccc}
  & t_{1} & \dots & t_{m}  \\
  \begin{block}{c(ccc)}
  t_{1} & \dfrac{\occCirc((t_{1}, t_{1}), \prefix_{2})}{\occ(t_{1}, \prefix_{2})} &  &  \\
  \vdots & & \ddots & &      \\
  t_{m} &   &  & \dfrac{\occCirc((t_{m}, t_{m}), \prefix_{2})}{\occ(t_{m}, \prefix_{2})} \\
  \end{block}
\end{blockarray}\quad.
\end{equation*} 
Classical analysis of ergodic Markov chains grants the existence of a unique probability vector $\nu$ such that $\nu P = \nu$, i.e.,
\begin{equation*}
\forall\, 1 \leq i \leq m,\; \nu_{i} = \sum_{1 \leq j \leq m} \dfrac{\occCirc\left( (t_{j}, t_{i}), \prefix_{2}\right) }{\occ\left( t_{j}, \prefix_{2}\right) } \cdot \nu_{j}.
\end{equation*}
This vector $\nu$ represents the occurence frequency of each state in an infinite run over the Markov chain. It is easy to see that the unique probability vector $\nu$ that satisfies $\nu P = \nu$ is
\begin{equation*}
\nu = \left( \dfrac{\occ(t_{1}, \prefix_{2})}{\vert \prefix_{2}\vert}, \quad\ldots{}\quad, \dfrac{\occ(t_{m}, \prefix_{2})}{\vert \prefix_{2}\vert}\right).
\end{equation*}
Moreover, given a transition of the Markov chain, its frequency is simply the product of the frequency of its starting state by the probability of the transition when the chain is in this state: for all $t, t' \in T$, we have $\freq^{\ergodic}((t, t')) = \nu(t) \cdot P(t, t')$. By definition of $\nu$ and $P$, that is 
\begin{equation*}
\freq^{\ergodic}((t, t')) = \dfrac{\occCirc((t, t'), \prefix_{2})}{\vert \prefix_{2}\vert} = \freq((t, t')),
\end{equation*}
thus proving that the randomized strategy $\strat_{1}^{rm}$ \textit{almost-surely} yields the same mean-payoff and parity as the pure finite-memory one~$\strat_{1}^{pf}$. The expected value threshold is also verified by Lemma~\ref{lemma_satis_expect}.

Now it remains to show that this does not carry over to two-player games. Indeed, we show that randomized memoryless strategies cannot replace pure finite-memory ones for the expectation semantics, even without parity. By Lemma \ref{lemma_satis_expect}, this implies that it cannot be verified for $1$-satisfaction semantics either. Consider the game depicted on Fig. \ref{fig:multiMP}. Player $\playerOne$ has a pure finite-memory strategy $\strat_{1}^{pf}$ that ensures $\mpay(\play) \geq (0, 0)$, against all strategy $\strat_{2}$ of $\playerTwo$. This strategy is simply to take the opposite choice of $\playerTwo$: $\strat_{1}^{pf}(\ast s_{2}s_{4}) = s_{6}$ and $\strat_{1}^{pf}(\ast s_{3}s_{4}) = s_{5}$. Now suppose $\playerOne$ uses a randomized memoryless strategy $\strat_{1}^{rm}$ such that $\strat_{1}^{rm}(s_{4})(s_{5}) = p$ and $\strat_{1}^{rm}(s_{4})(s_{6}) = 1-p$, for some $p \in \left[ 0, 1\right] $. We claim that whatever the value of $p$, there exists a counter-strategy $\strat_{2}$ for $\playerTwo$ such that $\expect_{s_{1}}^{\strat_{1}^{rm}, \strat_{2}}(\mpay) \not\geq (0, 0)$. Suppose $p \geq 1/2$ and let $\strat_{2}(s_{1}) = s_{2}$. Then, we have
\begin{equation*}
\expect_{s_{1}}^{\strat_{1}^{rm}, \strat_{2}}(\mpay) = \dfrac{(1, -1) + \left[ p \cdot (1, -1) + (1-p) \cdot (-1, 1)\right] }{4} = \dfrac{1}{2}(p, -p) \not\geq (0, 0).
\end{equation*}
Now suppose $p < 1/2$ and let $\strat_{2}(s_{1}) = s_{3}$. Then, we have
\begin{equation*}
\expect_{s_{1}}^{\strat_{1}^{rm}, \strat_{2}}(\mpay) = \dfrac{(-1, 1) + \left[ p \cdot (1, -1) + (1-p) \cdot (-1, 1)\right] }{4} = \dfrac{1}{2}(p - 1, 1 - p) \not\geq (0, 0).
\end{equation*}
This shows that memory is needed to achieve the $(0, 0)$-expectation objective and concludes our proof.\qed
\end{proof}

\subsection{\textbf{Randomization and single mean-payoff parity games}}

Randomized memoryless strategies can replace pure finite-memory ones for single mean-payoff parity games. The proof outline is as follows. We do it in two steps. First, we show that it is the case for the simpler case of \textit{MP Büchi games} (Lemma \ref{lemma_buchi}). Suppose $\playerOne$ has a pure finite-memory winning strategy for such a game. We use the existence of particular pure memoryless strategies on winning states: the classical attractor for B\"uchi states, and a strategy that ensures that cycles of the outcome have positive energy (whose existence follows from~\cite{chatterjee_ICALP10}). We build an almost-surely randomized memoryless winning strategy for $\playerOne$ by mixing those strategies in the probability distributions, with sufficient probability over the strategy that is good for energy. We illustrate this construction on the simple game $\gamePar$ depicted on Fig. \ref{fig:randBuchi}. Let $\strat_{1}^{pf} \in \stratsPureFinite_{1}$ be a strategy of $\playerOne$ such that $\playerOne$ plays $(s_{1}, s_{1})$ for $8$ times, then plays $(s_{1}, s_{2})$ once, and so on. This strategy ensures surely winning for the objective $\objective = \textsf{MeanPayoff}_{\gamePar}(3/5) \cap \textsf{Buchi}_{\gamePar}(\{\state_{2}\})$. Obviously, $\playerOne$ has a pure memoryless strategy that ensures winning for the B\"uchi objective: playing $(s_{1}, s_{2})$. On the other hand, he also has a pure memoryless strategy that ensures cycles of positive energy: playing $(s_{1}, s_{1})$. Let $\strat_{1}^{rm} \in \stratsRandomizedMemoryless_{1}$ be the strategy defined as follows: play $(s_{1}, s_{2})$ with probability $\gamma$ and $(s_{1}, s_{1})$ with the remaining probability. This strategy is almost-surely winning for $\objective$ for sufficiently small values of $\gamma$ (e.g., $\gamma = 1/9$). Second, we extend this result to \textit{MP parity games} using an induction on the number of priorities and the size of games (Lemma \ref{lemma_randPar}). We consider \textit{subgames} that reduce to the MP B\"uchi and MP coB\"uchi cases. For MP coB\"uchi games, pure memoryless strategies are known to suffice \cite{chatterjee_LICS05}.

\vspace{-3mm}
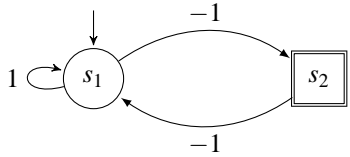
\begin{figure}[htb]
\centering
  \scalebox{1}{\begin{tikzpicture}[->,>=stealth',shorten >=1pt,auto,node
    distance=2.5cm,bend angle=45,scale=0.75,font=\normalsize]
    \tikzstyle{p1}=[draw,circle,text centered,minimum size=8mm]
    \tikzstyle{p2}=[draw,rectangle,text centered,minimum size=7mm,double]
    \node[p1]  (0)  at (0, 0) {$s_{1}$};
    \node[p2]  (1) at (4, 0) {$s_{2}$};
    \coordinate[shift={(0mm,5mm)}] (init) at (0.north);
    \path
    (0) edge [loop left] node [left] {$1$} (0)
    (init) edge (0);
	\draw[->,>=latex] (0) to[out=35,in=145] node [above] {$-1$} (1);
	\draw[->,>=latex] (1) to[out=215,in=325] node [below] {$-1$} (0);
      \end{tikzpicture}}
      \vspace{-1mm}
      \caption{Mixing strategies that are resp. \textit{good for B\"uchi} and \textit{good for energy}.}
\label{fig:randBuchi}
\end{figure}
\vspace{-6mm}

\paragraph{{\bf Büchi case.}} A particular, simpler case of the parity objective is the Büchi objective. It corresponds to parity with priorities $\lbrace 0, 1\rbrace$. We denote a Büchi game by $\game = (\states_{1}, \states_{2}, \initState, \edges, \weight, F)$, with $F$ the set of Büchi states such that a play is winning if it visits infinitely often states of the set $F$. We first state results on these Büchi objectives, as they are conceptually simpler to understand. Proof arguments for parity are more involved and make use of results on Büchi objectives. We sometimes denote the Büchi objective for the set $F$ by $\square\diamondsuit F$ (where $\square$ stands for \textit{globally} and $\diamondsuit$ for \textit{finally}), using the classical {\sf LTL} formulation \cite{pnueli_FOCS1977}.

We first introduce the useful notion of $\varepsilon$-optimality. Given a game $\gamePar$ with a one-dimension\footnote{The multi-dimensional setting gives rise to incomparable outcomes and the need to consider \textit{Pareto-optimality}.} mean-payoff objective, we define its value as
\begin{equation*}
\underline{\mpvalue} = \sup_{\strat_{1} \in \strats_{1}} \inf_{\strat_{2} \in \strats_{2}} \lbrace v \,\vert\, \outcomePar(\strat_{1}, \strat_{2}) \subseteq \objMP\rbrace.
\end{equation*}
A strategy is said optimal for the mean-payoff objective if it achieves this value. Such a strategy may not need to exist in general, even in one-player games \cite{chatterjee_LICS05,bouyer_ATVA11,chatterjee_MEMICS11} (Fig. \ref{fig:mpp}, $\playerOne$ has to delay its visits of $s_{1}$ for longer and longer intervals in order to tend towards value 1). However, it is known that for all $\varepsilon > 0$, $\varepsilon$-optimal strategies (i.e., that achieve value $(\underline{\mpvalue} - \varepsilon)$) always exist in one-dimension mean-payoff games, as a consequence of Martin's theorem on Borel determinacy \cite{martin_AM75}.

\vspace{-5mm}
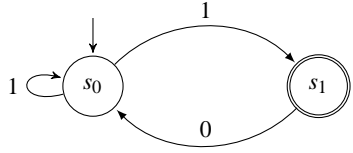
\begin{figure}[htb]
\centering
  \scalebox{1}{\begin{tikzpicture}[->,>=stealth',shorten >=1pt,auto,node
    distance=2.5cm,bend angle=45]
    \tikzstyle{p1}=[draw,circle,text centered,minimum size=8mm]
    \tikzstyle{p2}=[draw,circle,text centered,minimum size=8mm,double]
    \node[p1]  (0)  at (0, 0) {$s_{0}$};
    \node[p2]  (1) at (3, 0) {$s_{1}$};
    \coordinate[shift={(0mm,5mm)}] (init) at (0.north);
    \path
    (init) edge (0)
    (0) edge [loop left] node [left] {$1$} (0);
	\draw[->,>=latex] (0) to[out=45,in=135] node [above] {$1$} (1);
	\draw[->,>=latex] (1) to[out=225,in=315] node [above] {$0$} (0);
      \end{tikzpicture}}
      \vspace{-1mm}
      \caption{Mean-payoff Büchi requires infinite memory for optimality.}
\label{fig:mpp}
\end{figure}
\vspace{-3mm}

Here, we show finite-memory strategies can be traded off for randomized memoryless ones for mean-payoff Büchi games. Precisely, we prove that $\varepsilon$-optimality for mean-payoff Büchi games can as well be achieved by randomized memoryless strategies. We first need to state two useful lemmas granting the existence of pure memoryless strategies that are resp. \textit{good-for-energy} or \textit{good-for-Büchi}, in all states that are winning for the mean-payoff Büchi objective. These strategies will help us build the needed $\varepsilon$-optimal strategies.

\begin{lemma}[{Extension of \cite[Lemma 4]{chatterjee_ICALP10}}]
\label{lemma_gfe}
Let $\game = (\states_{1}, \states_{2}, \initState, \edges, \weight, F)$, with $F$ the set of Büchi states. Let $\winEP \subseteq \states$ be the set of winning states for the mean-payoff Büchi objective with threshold $0$. For all $s \in \winEP$, $\playerOne$ has a uniform (i.e., independent of the starting state) memoryless good-for-energy strategy $\strat_{1}^{\textit{gfe}}$ whose outcome never leaves the set $\winEP$, such that any cycle $c$ of this outcome has energy $\el(c) \geq 0$.
\end{lemma}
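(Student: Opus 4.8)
The plan is to reduce the statement to the classical theory of one-dimensional mean-payoff games, following the lines of \cite[Lemma 4]{chatterjee_ICALP10} but replacing its energy-parity argument by a mean-payoff argument. The key observation is that \emph{any} memoryless strategy which secures a mean-payoff of at least $0$ is automatically good-for-energy: if it allowed a cycle of strictly negative total weight, $\playerTwo$ could repeat that cycle forever and push the mean-payoff below $0$. So it suffices to exhibit a uniform memoryless strategy of $\playerOne$ that stays in $\winEP$ and guarantees $\mpay(\play) \geq 0$ from every state of $\winEP$.

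First I would argue that $\winEP$ induces a well-defined subgame. Since the mean-payoff Büchi objective is prefix-independent, any winning strategy of $\playerOne$ from a state $s \in \winEP$ must keep every consistent play inside $\winEP$: were the play to reach a state outside $\winEP$, $\playerOne$ could no longer guarantee the objective from there on, contradicting that the strategy is winning. In particular every $s \in \winEP \cap \statesOne$ has an outgoing edge into $\winEP$, and every $s \in \winEP \cap \statesTwo$ has all its outgoing edges inside $\winEP$ (otherwise $\playerTwo$ would escape to a losing state). Hence the restriction of $\game$ to $\winEP$, which I denote $\game'$, is a genuine game where both players always have a move. Moreover, a strategy winning the mean-payoff Büchi objective a fortiori secures $\mpay(\play) \geq 0$ for every consistent play $\play$, and as just noted it never leaves $\winEP$; thus it is a valid strategy of $\game'$, and $\playerOne$ wins the pure mean-payoff objective with threshold $0$ from every state of $\game'$.

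Next I would invoke memoryless determinacy of mean-payoff games \cite{EM79,ZP95}: in $\game'$ player $\playerOne$ has a single \emph{uniform} memoryless optimal strategy, and since the value is at least $0$ at every state, this strategy $\strat_{1}^{\textit{gfe}} \in \stratsPureMemoryless_{1}$ secures $\mpay(\play) \geq 0$ from all states of $\winEP$, with outcomes confined to $\winEP$. Finally, to check the good-for-energy property, fix $\strat_{1}^{\textit{gfe}}$ and consider the subgraph $H$ of $\game'$ retaining, at each $\playerOne$-state, only the edge chosen by $\strat_{1}^{\textit{gfe}}$ and, at each $\playerTwo$-state, all edges; every cycle of an outcome of $\strat_{1}^{\textit{gfe}}$ is a cycle of $H$. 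If some reachable cycle $c$ of $H$ had $\el(c) < 0$, then $\playerTwo$ could reach $c$ and loop it forever, yielding $\mpay(\play) = \el(c)/|c| < 0$ and contradicting the guarantee of $\strat_{1}^{\textit{gfe}}$. Hence $\el(c) \geq 0$ for every such cycle, which is precisely the claim.

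The main obstacle is the first step: making rigorous that $\winEP$ is closed under the moves of both players so that the subgame $\game'$ is well-defined and inherits mean-payoff value $\geq 0$ everywhere. This is exactly where prefix-independence of the mean-payoff Büchi objective is indispensable. Once $\game'$ is in place, the argument collapses to the well-known memoryless determinacy of mean-payoff games together with the elementary negative-cycle argument, both of which are standard.
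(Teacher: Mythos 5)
Your proof is correct. It is worth noting that the paper does not actually prove this lemma: it is stated as an extension of \cite[Lemma~4]{chatterjee_ICALP10} and justified only by that citation, so your argument is a genuinely different (and self-contained) route. The two key steps you supply --- that $\winEP$ is a trap for $\playerTwo$ from which $\playerOne$ can avoid escaping (via prefix-independence of the mean-payoff B\"uchi objective, so that $\game\downarrow\winEP$ is a well-defined subgame in which $\playerOne$ surely wins $\mpay\ge 0$ everywhere), and that the uniform memoryless optimal strategy of the one-dimensional mean-payoff subgame can admit no reachable negative cycle in the induced graph (else $\playerTwo$ steers to it and loops) --- are both sound, and the uniformity you need is exactly what Ehrenfeucht--Mycielski memoryless determinacy provides. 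The one substantive difference with the cited source is that \cite[Lemma~4]{chatterjee_ICALP10} establishes a \emph{stronger} good-for-energy property tailored to energy \emph{parity} games (cycles with $\el(c)=0$ must in addition have even minimal priority), which cannot be extracted from plain mean-payoff determinacy and requires a more delicate argument about memoryless strategies for a lexicographic energy-then-parity objective. Your elementary argument would not give that refinement, but the statement as formulated in this paper only demands $\el(c)\ge 0$ (the B\"uchi requirement being handled separately by the attractor strategy of the next lemma), so your proof establishes exactly what is claimed and what the later proof of the mean-payoff B\"uchi lemma actually uses.
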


\begin{lemma}[Classical attractor]
\label{lemma_gfb}
Let $\game = (\states_{1}, \states_{2}, \initState, \edges, \weight, F)$, with $F$ the set of Büchi states. Let $\winEP \subseteq \states$ be the set of winning states for the mean-payoff Büchi objective with threshold $0$. For all $s \in \winEP$, $\playerOne$ has a uniform (i.e., independent of the starting state) memoryless good-for-Büchi strategy $\strat_{1}^{\diamondsuit F}$, an attractor strategy for $F$, whose outcome never leaves the set $\winEP$, such that  it ensures reaching $F$ in at most $\vert \states\vert$ steps.
\end{lemma}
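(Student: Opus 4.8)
The plan is to recognize this as the classical player-1 attractor construction applied to the winning region, so that the only genuine work is (a) to check that the attractor strategy cannot leave $\winEP$, and (b) to argue that every mean-payoff Büchi winning state is in fact attracted to $F$. The mean-payoff component will enter only through the definition of $\winEP$; for the reachability guarantee itself we will use nothing but the fact that the Büchi condition forces $F$ to be seen infinitely often.

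First I would establish that $\winEP$ is a trap for $\playerTwo$. Since $\winEP$ is the winning region, for every $s \in \winEP \cap \statesTwo$ all outgoing edges must stay in $\winEP$ (otherwise $\playerTwo$ could escape to a losing state, contradicting $s \in \winEP$), while for every $s \in \winEP \cap \statesOne$ at least one outgoing edge stays in $\winEP$. Hence it is legitimate to work in the subgame induced on $\winEP$, and any strategy confined to this subgame automatically has its outcome never leave $\winEP$. Next I would build the player-1 attractor of $F$ inside this subgame by the usual fixpoint,
\[
\Attr^{0} = F \cap \winEP, \qquad \Attr^{i+1} = \Attr^{i} \cup \{ s \in \statesOne \mid \exists\, (s,s') \in \edges,\ s' \in \Attr^{i} \} \cup \{ s \in \statesTwo \mid \forall\, (s,s') \in \edges,\ s' \in \Attr^{i} \},
\]
and $\Attr_{1}(F) = \bigcup_{i} \Attr^{i}$. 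Assigning to each state its rank (the least $i$ with $s \in \Attr^{i}$), the memoryless strategy $\strat_{1}^{\diamondsuit F}$ that in every player-1 state of positive rank picks a successor of strictly smaller rank makes the rank strictly decrease at each step; it therefore reaches $F$ within at most $\vert \winEP \vert \leq \vert \states \vert$ steps and never leaves $\Attr_{1}(F) \subseteq \winEP$. By construction this strategy is uniform and memoryless.

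The key step is then to show $\winEP \subseteq \Attr_{1}(F)$. Suppose toward a contradiction that some winning state lies in $\winEP \setminus \Attr_{1}(F)$. The complement of an attractor is a trap for the attracting player, so on $\winEP \setminus \Attr_{1}(F)$ player $\playerTwo$ has a strategy keeping the play forever inside this set, i.e., avoiding $F$ entirely. This contradicts the Büchi requirement of the mean-payoff Büchi objective, which demands that $F$ be visited infinitely often along every play consistent with a winning strategy of $\playerOne$. Hence no such state exists, every winning state is attracted to $F$, and $\strat_{1}^{\diamondsuit F}$ is the desired uniform memoryless good-for-Büchi strategy.

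I expect the main (though still routine) obstacle to be the careful bookkeeping in the first and third steps: isolating the trap property of $\winEP$ so that the subgame on it is well defined, and cleanly reducing the full mean-payoff Büchi winning condition to the bare reachability guarantee — observing that the mean-payoff part is inert here, since a trap that avoids $F$ defeats the Büchi component regardless of the weights.
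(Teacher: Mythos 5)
Your proof is correct and is precisely the classical attractor argument that the paper invokes without proof (the lemma is labelled ``Classical attractor'' and no proof is given in the text): the trap property of the winning region, the rank-based memoryless attractor strategy, and the complement-is-a-trap contradiction with the B\"uchi component are exactly the intended justification. The only (trivial) bookkeeping you leave implicit is that the trap property of $\winEP$ rests on determinacy of the mean-payoff B\"uchi objective, and that at rank-$0$ states the strategy must still be defined to pick a successor inside $\winEP$ so that the outcome never leaves it.
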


The randomized memoryless strategy of $\playerOne$ will thus consist in mixing these two strategies, with a very low probability on the good-for-Büchi strategy. Indeed, the Büchi objective will be satisfied whatever this probability is, provided it is strictly positive. On the other hand, by giving more weight to the good-for-energy strategy, $\playerOne$ can obtain a mean-payoff that is arbitrary close to the optimum.

\begin{lemma}
\label{lemma_buchi}
In mean-payoff Büchi games, $\varepsilon$-optimality can be achieved surely by pure finite-memory strategies and almost-surely by randomized memoryless strategies.
\end{lemma}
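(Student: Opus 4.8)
The plan is to treat the two assertions in tandem, reducing arbitrary $\varepsilon$-optimality to the threshold-$0$ case and then building both strategies from the uniform memoryless ingredients supplied by Lemmas~\ref{lemma_gfe} and \ref{lemma_gfb}. Fix $\varepsilon>0$. Since we only target value $\underline{\mpvalue}-\varepsilon$, I would subtract the rational constant $\nu=\underline{\mpvalue}-\varepsilon/2$ from every weight of $\game$; in the shifted game the mean-payoff objective of threshold $0$ coincides with the original objective of threshold $\nu$, and $\initState$ becomes winning for mean-payoff Büchi of threshold $0$, i.e. $\initState\in\winEP$. All the analysis then takes place in this shifted game, where Lemma~\ref{lemma_gfe} yields a uniform memoryless good-for-energy strategy $\strat_{1}^{\textit{gfe}}$ (whose outcome against \emph{any} strategy of $\playerTwo$ stays in $\winEP$ and contains only cycles of non-negative energy) and Lemma~\ref{lemma_gfb} yields a uniform memoryless attractor strategy $\strat_{1}^{\diamondsuit F}$ reaching $F$ within $|\states|$ steps.

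For the sure, pure finite-memory statement I would interleave the two ingredients with a fixed block length $N$: play $\strat_{1}^{\textit{gfe}}$ for $N$ steps, then switch to $\strat_{1}^{\diamondsuit F}$ until $F$ is hit (at most $|\states|$ additional steps), then repeat. Because every cycle of the good-for-energy outcome is non-negative, the energy change over any good-for-energy block is bounded below by $-|\states|\cdot\largestW$ independently of $N$ (a prefix of that outcome, stripped of its complete non-negative cycles, keeps at most $|\states|$ edges), and each attractor phase likewise costs at least $-|\states|\cdot\largestW$. Hence the energy change per period is at least $-2|\states|\cdot\largestW$ while the period has length at least $N$, so choosing $N>4|\states|\cdot\largestW/\varepsilon$ forces the long-run average above $-\varepsilon/2$ in the shifted game, that is above $\underline{\mpvalue}-\varepsilon$ in the original, while $F$ is visited infinitely often. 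This interleaving is the standard mean-payoff--parity construction, which I would adapt rather than reprove.

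For the randomized statement, define $\strat_{1}^{rm}$ to play, at each $s\in\winEP\cap\statesOne$, the attractor move $\strat_{1}^{\diamondsuit F}(s)$ with a fixed probability $\gamma>0$ and the good-for-energy move $\strat_{1}^{\textit{gfe}}(s)$ with probability $1-\gamma$. The Büchi condition holds almost surely for \emph{every} $\gamma>0$: on any window of $|\states|$ consecutive steps the event that $\playerOne$ plays the attractor move throughout has conditional probability at least $\gamma^{|\states|}$ regardless of $\playerTwo$, and forces a visit to $F$; a conditional Borel--Cantelli argument then gives infinitely many visits with probability one. For the mean-payoff I would fix an optimal pure memoryless response $\strat_{2}$ of $\playerTwo$ in the induced MDP and compare almost every resulting play with the outcome of the pure strategy $\strat_{1}^{\textit{gfe}}$ against $\strat_{2}$, whose cycles are all non-negative so that its energy is bounded below; the actual play departs from this reference only at attractor steps, which by the law of large numbers occur with limit frequency $O(\gamma)$ and each perturb the energy by at most a fixed constant. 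Thus the almost-sure long-run average loss is $O(\gamma)$, and picking $\gamma$ small enough after $\varepsilon$ (just as $\gamma=1/9$ suffices on Fig.~\ref{fig:randBuchi}) keeps the mean-payoff at least $\underline{\mpvalue}-\varepsilon$ with probability one.

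The main obstacle is precisely this last estimate: controlling the effect of randomization against an adversary. The non-negativity guarantee of Lemma~\ref{lemma_gfe} applies only to cycles of the \emph{pure} good-for-energy outcome, whereas under $\strat_{1}^{rm}$ the adversary may try to steer the play around cycles that mix good-for-energy and attractor edges and are of negative energy. The crucial feature that rules this out is that $\strat_{1}^{\textit{gfe}}$ is uniform and memoryless, so after each attractor detour it resumes its non-negative-cycle guarantee from the current state; this lets me charge every negative contribution to the $O(\gamma)$-frequent attractor steps and bound it uniformly, which is exactly what makes a single constant $\gamma$, chosen after $\varepsilon$, sufficient.
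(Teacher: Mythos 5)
Your proposal is correct and follows essentially the same route as the paper: the same two uniform memoryless ingredients (the good-for-energy strategy of Lemma~\ref{lemma_gfe} and the attractor strategy of Lemma~\ref{lemma_gfb}), the same deterministic interleaving with a long good-for-energy block for the pure finite-memory case, and the same $\gamma$-mixing for the randomized memoryless case. The only minor deviations are your explicit weight shift to reduce $\varepsilon$-optimality to threshold $0$ (the paper just takes threshold $0$ w.l.o.g.) and your pathwise law-of-large-numbers bound on the frequency of attractor edges, where the paper instead fixes a pure memoryless response of $\playerTwo$ and computes the expected mean-payoff of each bottom SCC of the induced Markov chain by summing over the geometrically distributed lengths of good-for-energy runs.
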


\begin{proof}
Let $\game = (\states_{1}, \states_{2}, s_{init}, \edges, \weight, F)$, with $F$ the set of Büchi states. We consider the mean-payoff objective with threshold $0$ (w.l.o.g.). Let $\winEP \subseteq \states$ be the set of winning states for the mean-payoff Büchi objective. By Lemma \ref{lemma_gfe} and Lemma \ref{lemma_gfb}, for all $s \in \winEP$, $\playerOne$ has two uniform memoryless strategies $\strat_{1}^{\textit{gfe}}$ and $\strat_{1}^{\diamondsuit F}$, whose outcomes never leave the set $\winEP$, such that $\strat_{1}^{\textit{gfe}}$ ensures that any cycle $c$ of its outcome has energy $\el(c) \geq 0$, and $\strat_{1}^{\diamondsuit F}$, an attractor strategy for $F$, ensures reaching $F$ in at most $\vert \states\vert$ steps.

We first build $\varepsilon$-optimal \textit{pure finite-memory} strategies based on these two pure memoryless strategies. Let $\varepsilon > 0$. As usual, $\largestW$ denotes the largest absolute weight on any edge. Let us define $\strat_{1}^{pf}$ such that (a) it plays $\strat_{1}^{\textit{gfe}}$ for $\left\lceil\frac{2 \cdot \largestW \cdot \vert \states\vert}{\varepsilon}\right\rceil - \vert \states\vert$ steps, then (b) it plays $\strat_{1}^{\diamondsuit F}$ for $\vert \states\vert$ steps, then again (a). This ensures that $F$ is visited infinitely often as $\strat_{1}^{\diamondsuit F}$ is played infinitely many times for $\vert \states\vert$ steps in a row. Furthermore, the total cost of phases (a) + (b) is bounded by $-2\cdot \largestW \cdot \vert \states\vert$, and thus the mean-payoff of the outcome is at least $-\varepsilon$, against any strategy of the adversary.

Second, we show that based on the same pure memoryless strategies, it is possible to obtain almost-surely $\varepsilon$-optimal \textit{randomized memoryless} strategies, i.e.,
\begin{align*}
&\forall\, \varepsilon > 0,\; \exists\, \strat_{1}^{rm} \in \stratsRandomizedMemoryless_{1},\; \forall\, \strat_{2} \in \strats_{2},\\
&\proba^{\strat_{1}^{rm}, \strat_{2}}_{\initState} \left(\play \vDash \square\diamondsuit F\right) = 1 \;\wedge\; \proba^{\strat_{1}^{rm}, \strat_{2}}_{\initState} \left( \mpay(\play) \geq - \varepsilon\right) = 1.
\end{align*}
Note that pure memoryless strategies suffice for $\playerTwo$ as he essentially has to win against the Büchi \textit{or} the mean-payoff criterion \cite{bouyer_ATVA11}. Therefore, given $\varepsilon > 0$, we need to build some strategy $\strat_{1}^{rm} \in \stratsRandomizedMemoryless_{1}$ such that
\begin{equation*}
\forall\, \strat_{2}^{pm} \in \stratsPureMemoryless_{2},\;\proba^{\strat_{1}^{rm}, \strat_{2}^{pm}}_{\initState} \left(\play \vDash \square\diamondsuit F\right) = 1 \;\wedge\; \proba^{\strat_{1}^{rm}, \strat_{2}^{pm}}_{\initState} \left( \mpay(\play) \geq - \varepsilon\right) = 1.
\end{equation*}
We build such a strategy as follows:
\begin{equation*}
\forall s \in \states,\, \strat_{1}^{rm}(s) = \begin{cases}\strat_{1}^{\textit{gfe}}(s) \text{ with probability } 1 - \pr,\\ \strat_{1}^{\diamondsuit F}(s) \text{ with probability } \pr,\end{cases}
\end{equation*}
for some \textit{well-chosen} $\pr \in \left] 0, 1\right[ $.

It is straightforward to see that the Büchi objective is almost-surely satisfied for all values of $\pr > 0$ as at all times, the probability of playing according to $\strat_{1}^{\diamondsuit F}$ for $\vert \states \vert$ steps in a row, and thus ensuring a visit of $F$, is $\pr^{\vert \states\vert}$, which is strictly positive.

It remains to study if it is always possible to choose such a constant $\pr$ such that objective $\textsf{MeanPayoff}_{\gamePar}(-\varepsilon)$ is almost-surely satisfied. Consider such a strategy $\strat_{1}^{rm} \in \stratsRandomizedMemoryless_{1}$ and some fixed strategy $\strat_{2}^{pm} \in \stratsPureMemoryless_{2}$ of $\playerTwo$: the game reduces to the finite Markov chain $\chain = \left( \states, \delta, \weight\right) $, where $\delta \colon \edges \rightarrow \left[ 0, 1\right]$ is the transition probability function resulting from fixing those strategies. Suppose $\strat_{2}^{pm}$ is winning for $\playerTwo$. Thus, $\proba^{\chain}_{\initState}  \left( \mpay(\play) < - \varepsilon\right) > 0$. The mean-payoff depends on limit behavior: the probability measure of plays that do not enter in a bottom strongly connected component (BSCC) is zero~\cite{baier_MIT08}, whereas in a BSCC, the expected mean-payoff is the same in all states and it is obtained almost-surely (as follows from definition of BSCCs and prefix-independence of the mean-payoff). This implies that there exists some BSCC $\ec$ in $\chain$ such that $\proba^{\chain}_{\initState} \left(\diamondsuit \ec\right) > 0$ and $\expect^{\ec} \left(\mpay\right) < -\varepsilon$.

We claim that it is possible to choose $\pr$ such that all BSCCs, in all Markov chains induced by pure memoryless strategies of $\playerTwo$, have expectation greater than or equal to $\varepsilon$, thus proving that strategy $\strat_{1}^{rm}$ is almost-surely $\varepsilon$-optimal with regard to the mean-payoff value function. Intuitively, the smaller this constant $\pr$ is chosen, the nearer will the expected mean-payoff induced by $\strat_{1}^{rm}$ be to the one induced by $\strat_{1}^{\textit{gfe}}$, that is at least zero. Since the number of pure memoryless strategies of~$\playerTwo$ is finite, and so is the number of BSCCs induced by $\strat_{1}^{rm}$ (regardless of the exact value of $\pr \in \left] 0, 1\right[$, we obtain the same BSCCs in terms of states and edges), one can compute a suitable $\pr$ for each of them, and take the mininum to ensure that the property will be satisfied in all possible cases.

Therefore, let us fix some strategy $\strat_{2}^{pm}$ of $\playerTwo$, and some BSCC $\ec$  of the induced Markov chain when played against strategy $\strat_{1}^{rm}$ of $\playerOne$. It remains to show that (\textit{claim}) there exists $\pr \in \left] 0, 1\right]$ such that $\expect^{\ec(\pr)}(\mpay) \geq -\varepsilon$ to conclude this proof. Observe that we write $\ec(\pr)$ as transition probabilities inside $\ec$ depend on $\pr$.
By contradiction, suppose the claim is false. Precisely, we assume that (\textit{contradiction hypothesis}) for all $\pr \in \left] 0, 1\right]$, we have that $\expect^{\ec(\pr)}(\mpay) < -\varepsilon$.

Besides, observe that for $\pr = 0$, strategy $\strat_{1}^{rm}$ is exactly equal to $\strat_{1}^{\textit{gfe}}$. As we know that $\strat_{1}^{\textit{gfe}}$ ensures a worst-case mean-payoff at least equal to zero, we trivially deduce that $\expect^{\ec(0)} \geq 0$. This implies that $\sup_{\pr \in [0, 1]} \expect^{\ec(\pr)} \geq \expect^{\ec(0)} \geq 0$.
Notice that in this case, interval $[0, 1]$ is closed.

By results in the literature, it is known that this supremum is continuous. See for example Solan~\cite{solan} on the continuity of the optimal expected value function in the general context of competitive Markov decision processes (equivalent to $2\frac{1}{2}$-player games). Therefore, we have that
$\sup_{\pr \in ]0, 1]} \expect^{\ec(\pr)} = \sup_{\pr \in [0, 1]} \expect^{\ec(\pr)} \geq \expect^{\ec(0)} \geq 0$.
On the other hand, by (\textit{contradiction hypothesis}), we also have that
$\sup_{\pr \in ]0, 1]} \expect^{\ec(\pr)} \leq -\varepsilon$.
Since $\varepsilon$ is strictly positive, there is a clear contradiction, which concludes our proof.\qed
\end{proof}

\paragraph{{\bf Parity Case.}} Given those results for mean-payoff Büchi games, we now consider the more general case of mean-payoff parity games. We start by introducing the useful notion of \textit{subgames}.

\smallskip\noindent\textit{Subgame.} Let $\gameParFull$ be a game and $A \subseteq \states$ be a subset of states in $\gamePar$. If $\edges$ is such that for all $s \in A$, there exists $s' \in A$ with $(s,s') \in \edges$, then we define the {\em subgame} $\gamePar \downarrow A$ as $(\states_{1} \cap A, \states_{2} \cap A, \edges \cap (A \times A), \weight', p')$ where $\weight'$, $p'$ are the functions $\weight$, $p$ restricted to the subdomain $A$. Note that for subgames, we do not consider an initial state.

Let $\gameParFull$ and $U \subseteq \states$. We define $\Attr_1(U)$ as the set that is obtained as the limit of the following increasing sequence: $U_0=U$, and $U_i=U_{i-1} \cup \{ s \in \states_{1} \mid \exists\, s' \in U_{i-1},\, (s, s') \in \edges \} \cup \{ s \in \states_{2} \mid \forall\, s',\, (s, s') \in \edges,\, s' \in U_{i-1} \}$, for $i \geq 1$. As this sequence of sets is increasing, there exists $i \leq \vert\states\vert$ such that $U_j=U_i$ for all $j \geq i$. $\Attr_1(U)$ contains all the states in $\game$ from which $\playerOne$ can force a visit to $U$, and it is well known that $\playerOne$ has a pure memoryless strategy to force such a visit from those states. Also, it is clear that $\playerOne$ does not have a strategy to leave the states in $\states \setminus \Attr_1(U)$. Attractors can be defined symmetrically for $\playerTwo$ and are noted $\Attr_2(\cdot)$. As direct consequence, we have the following proposition.

\begin{proposition}
\label{attract-subgame}
Let $\gameParFull$ be a game, let $U \subseteq \states$ and $\Attr_1(U)$ be such that $B = \states \setminus \Attr_1(U)$ is non-empty, then $\gamePar \downarrow B$ is a subgame.
\end{proposition}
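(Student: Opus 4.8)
The plan is to check the single defining condition of a subgame directly: for $\gamePar \downarrow B$ to qualify as a subgame, every state $s \in B$ must retain at least one outgoing edge whose target also lies in $B = \states \setminus \Attr_1(U)$. I would establish this by exploiting the fact that $\Attr_1(U)$ is the \emph{fixpoint} of the monotone sequence $(U_i)_i$ defining it, together with the totality assumption built into every game structure (every state of $\gamePar$ has at least one successor), and then case-splitting on the owner of $s$.

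First, suppose $s \in \statesOne \cap B$. By totality of $\edges$, $s$ has some successor $s'$. I claim $s' \in B$. Indeed, if some successor of $s$ were in $\Attr_1(U)$, then by the $\playerOne$-clause of the attractor definition (a $\playerOne$-state is absorbed as soon as it has \emph{one} successor already in the set) $s$ would itself have been added to $\Attr_1(U)$ at the next iteration, contradicting $s \in B$. Hence \emph{all} successors of $s$ lie in $B$, and in particular $s$ has a successor in $B$.

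Second, suppose $s \in \statesTwo \cap B$. Here I would argue contrapositively on the $\playerTwo$-clause: a $\playerTwo$-state is absorbed only when \emph{all} of its successors are already in the set. Since $s \notin \Attr_1(U)$ and $\Attr_1(U)$ is a fixpoint, it cannot be that every successor of $s$ lies in $\Attr_1(U)$; therefore at least one successor $s'$ satisfies $s' \notin \Attr_1(U)$, i.e. $s' \in B$. Thus $s$ again has a successor in $B$.

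Combining the two cases shows that every state of $B$ has a successor in $B$, which is exactly the condition needed for $\gamePar \downarrow B$ to be a well-defined subgame. I do not anticipate a genuine obstacle: the argument is a routine fixpoint/complementation computation. The only point requiring care is keeping the asymmetric quantifiers of the two attractor clauses straight — \emph{existential} over successors for $\playerOne$-states and \emph{universal} for $\playerTwo$-states — since it is precisely this asymmetry, read on the complement, that guarantees the surviving inside-$B$ edge in each case.
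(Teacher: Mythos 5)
Your proof is correct and matches the paper's intended argument: the paper treats the proposition as a direct consequence of $\Attr_1(U)$ being the fixpoint of the attractor iteration (so its complement is a trap), which is exactly the case analysis on the existential $\playerOne$-clause and universal $\playerTwo$-clause that you spell out. No gaps.
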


The following lemma states that optimal pure memoryless strategies exist for $\playerOne$ in games with mean-payoff coB\"uchi objectives (i.e., parity with priorities $\lbrace 1, 2\rbrace$). For mean-payoff Büchi objectives, we showed in Lemma \ref{lemma_buchi} that, for all $\varepsilon > 0$, $\varepsilon$-optimal randomized memoryless strategies exist.

\begin{lemma}[{\cite[Theorem 5]{chatterjee_LICS05}}]
\label{lemma_cobuchi}
Let $\gameParFull$ be a game with priorities $\lbrace 1, 2\rbrace$, and $\winningNodes^p_{\geq 0}$ be the set of nodes in $\gamePar$ from which $\playerOne$ wins the mean-payoff coB\"uchi objective for threshold $0$ (w.l.o.g.). Then from all states in $\winningNodes^p_{\geq 0}$, $\playerOne$ has a pure memoryless winning strategy for the coB\"uchi mean-payoff objective for threshold $0$.
\end{lemma}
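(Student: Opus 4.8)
The statement is quoted from \cite{chatterjee_LICS05}, so I will sketch how its proof proceeds using the machinery already available in this section, in particular subgames and attractors (Proposition \ref{attract-subgame}) together with the pure-memoryless determinacy of plain mean-payoff games \cite{EM79}. The plan is to reduce the mean-payoff coB\"uchi objective to a \emph{plain} mean-payoff objective played on a suitable priority-$2$ trap, solve that subgame memorylessly, and then reach the trap by a memoryless attractor strategy. Recall that with priorities $\lbrace 1,2\rbrace$ the coB\"uchi condition requires that priority $1$ be visited only finitely often, i.e.\ that the play be eventually confined to $p^{-1}(2)$.

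First I would isolate the region in which $\playerOne$ can enforce confinement in the strongest possible way. Let $R = \states \setminus \Attr_2(p^{-1}(1))$ be the set of states from which $\playerTwo$ cannot force a visit to a priority-$1$ state; equivalently, $R$ is the winning region of the safety game for $\playerOne$ with safe set $p^{-1}(2)$, so $\playerOne$ has a pure memoryless strategy keeping every play from $R$ inside $p^{-1}(2)$ forever, and in particular $R \subseteq p^{-1}(2)$. By the ($\playerTwo$-)symmetric version of Proposition \ref{attract-subgame}, $\gamePar \downarrow R$ is a subgame, and since all its states have priority $2$, the coB\"uchi condition is vacuously satisfied there and the objective collapses to the plain mean-payoff objective with threshold $0$. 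By \cite{EM79}, mean-payoff games are pure-memoryless determined, so there is a set $G \subseteq R$ (the mean-payoff winning region of $\gamePar \downarrow R$) and a pure memoryless strategy $\strat_{1}^{\mpay}$ that keeps every play starting in $G$ inside $G$ while guaranteeing $\mpay(\play) \geq 0$; moreover $G$ is a trap for $\playerTwo$ inside $R$.

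I would then claim $\winningNodes^p_{\geq 0} = \Attr_1(G)$, and that the pure memoryless strategy obtained by playing the memoryless attractor strategy towards $G$ on $\Attr_1(G)\setminus G$ and switching to $\strat_{1}^{\mpay}$ on $G$ is winning. The inclusion $\Attr_1(G) \subseteq \winningNodes^p_{\geq 0}$ is the easy direction: from any state of $\Attr_1(G)$, $\playerOne$ forces $G$ in at most $\vert\states\vert$ steps, then commits to $\strat_{1}^{\mpay}$; the prefix is finite, so it affects neither the mean-payoff limit nor the (tail) coB\"uchi condition, and once in $G$ the play stays in priority-$2$ forever with nonnegative mean-payoff. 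For the converse I would take any (a priori history-dependent) winning strategy from a state $s$: every consistent play satisfies coB\"uchi, hence is eventually confined to $p^{-1}(2)$, and its tail has mean-payoff $\geq 0$; the confinement state therefore lies in $R$ and is winning for the plain mean-payoff game on $\gamePar \downarrow R$, i.e.\ it lies in $G$. As $G$ is surely reached along every play consistent with the winning strategy, $\playerOne$ in fact forces $G$, which in a finite arena means $s \in \Attr_1(G)$.

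The main obstacle is exactly this converse direction: one must argue that a winning strategy, which may use unbounded memory, necessarily drives the play into the \emph{memoryless} mean-payoff winning region $G$ of the priority-$2$ trap. The crux is that the mean-payoff value is tail-determined and pure-memoryless-determined on $\gamePar \downarrow R$, so the confinement state inherited from any winning play is already winning for the simpler mean-payoff objective there; the passage from ``surely eventually reaches $G$'' to ``$s \in \Attr_1(G)$'' then rests on the finiteness of the arena, which collapses sure reachability to the bounded attractor. Once this is in place the winning strategy is literally the disjoint union of two pure memoryless strategies on $\Attr_1(G)\setminus G$ and on $G$, and is therefore itself pure memoryless, establishing the lemma.
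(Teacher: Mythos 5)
First, note that the paper does not prove this lemma at all: it is imported verbatim as \cite[Theorem 5]{chatterjee_LICS05}, so there is no in-paper argument to compare yours against. Judged on its own terms, your proof has a genuine gap in the converse inclusion $\winningNodes^p_{\geq 0} \subseteq \Attr_1(G)$, and in fact that inclusion is false. The broken step is ``every consistent play is eventually confined to $p^{-1}(2)$, \emph{the confinement state therefore lies in $R$}.'' Confinement of a particular play in $p^{-1}(2)$ does not certify membership in the safety region $R = \states \setminus \Attr_2(p^{-1}(1))$: membership in $R$ requires that $\playerOne$ can \emph{force} avoidance of priority $1$ against every behaviour of $\playerTwo$, whereas along a winning play $\playerTwo$ may simply refrain from exercising an available deviation to priority $1$ because that deviation would be bad for him for other reasons. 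Concretely, take $\states = \states_2 = \{a,u,d\}$ with $p(a)=p(d)=2$, $p(u)=1$, all weights $0$, and edges $(a,a), (a,u), (u,d), (d,d)$. Every play from $a$ is either $a^{\omega}$ or $a^{k}ud^{\omega}$, so $a$ is winning for the mean-payoff coB\"uchi objective; yet $\Attr_2(\{u\}) = \{a,u\}$, so $R = G = \{d\}$ and $\Attr_1(G) = \{u,d\} \not\ni a$. Your construction therefore produces a memoryless strategy only on a possibly strict subset of the winning region, which does not establish the lemma.

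The easy direction and the stitching of the attractor strategy with the memoryless mean-payoff strategy on the trap $G$ are fine; what is missing is an outer recursion. After computing $\Attr_1(G)$ one must pass to the subgame $\gamePar \downarrow \left(\states \setminus \Attr_1(G)\right)$ (a subgame by Proposition \ref{attract-subgame}), argue that all its winning states are winning there without leaving it, apply an induction hypothesis on the number of states to get a memoryless strategy on that part, and observe that if $\playerTwo$ ever exits the subgame he falls into $\Attr_1(G)$, which is a $\playerTwo$-trap under the already-built memoryless strategy, so the union of the two memoryless strategies wins. This induction on subgames obtained by removing attractors is exactly the scheme the paper itself uses in the proof of Lemma \ref{lemma_randPar}; your one-shot characterization $\winningNodes^p_{\geq 0} = \Attr_1(G)$ is only its base case.
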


We now establish that $\varepsilon$-optimal randomized memoryless strategies also exist for mean-payoff \textit{parity} games, and thus, can replace pure finite-memory ones.

\begin{lemma}
\label{lemma_randPar}
Let $\gameParFull$ and $\winningNodes^p_{\geq 0}$ be the set of nodes in $\gamePar$ from which $\playerOne$ wins the mean-payoff parity objective for threshold $0$. Then for all $\varepsilon >0$, there exists $\strat_1^{rm} \in \stratsRandomizedMemoryless_{1}$, such that for all $s \in \winningNodes^p_{\geq 0}$ and for all $\strat_2 \in \strats_2$, we have that:
\begin{equation*}
\proba^{\strat_{1}^{rm}, \strat_{2}}_{s} \left( \mpay(\play) \geq - \varepsilon\right) = 1 \;\wedge\; \proba^{\strat_{1}^{rm}, \strat_{2}}_{s} \left( \minpar(\play) \text{ mod } 2 = 0 \right) = 1.
\end{equation*}
\end{lemma}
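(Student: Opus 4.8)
The plan is to prove the statement by a double induction: on the number of distinct priorities occurring in \gamePar and, for a fixed number of priorities, on $\vert\states\vert$. The two base cases are already available. When the priorities lie in $\{0,1\}$ the objective is mean-payoff B\"uchi, and Lemma \ref{lemma_buchi} supplies, for every $\varepsilon > 0$, a \emph{uniform} randomized memoryless strategy that almost surely realizes both the quantitative criterion and the B\"uchi condition on the winning region. When the priorities lie in $\{1,2\}$ the objective is mean-payoff coB\"uchi, and Lemma \ref{lemma_cobuchi} gives a pure memoryless optimal strategy, which is in particular a uniform randomized memoryless one. Uniformity (independence of the start state) in both base cases is exactly what will make the gluing step below legitimate.

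For the inductive step I would isolate the least priority $p$ appearing in \gamePar and decompose $\winningNodes^p_{\geq 0}$ with an attractor construction. If $p$ is even, let $F$ be the set of priority-$p$ states; since $p$ is globally minimal, forcing infinitely many visits to $F$ already forces an even \minpar, so the sub-instance in which \playerOne pursues both the mean-payoff threshold and the B\"uchi condition $\square\diamondsuit F$ is a genuine mean-payoff B\"uchi game, handled by Lemma \ref{lemma_buchi}. On the complement $B = \states \setminus \Attr_1(F)$, the structure $\gamePar \downarrow B$ is a subgame by Proposition \ref{attract-subgame} with strictly fewer priorities, to which the induction hypothesis applies. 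If $p$ is odd the situation is dual: \playerTwo's attractor to $F$ carves out the region where \playerTwo can force \minpar odd, and on its complement one recurses on a subgame with fewer priorities, the coB\"uchi base case bounding the recursion. In every case $\winningNodes^p_{\geq 0}$ splits into finitely many disjoint pieces, on each of which the induction (or a base case) yields a uniform randomized memoryless strategy; I then glue them into a single $\strat_1^{rm} \in \stratsRandomizedMemoryless_{1}$ that plays, at each state, the distribution prescribed for the piece containing that state.

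The crux is verifying that this glued memoryless strategy still enforces both requirements almost surely from every $s \in \winningNodes^p_{\geq 0}$. For the parity part I would argue that, because the mean-payoff is a tail (limit) property and the pieces are entered through attractors, the play is almost surely absorbed into a single piece on which the induction hypothesis guarantees the required parity outcome with probability one; the finitely many excursions through other regions alter neither the set of infinitely visited states nor the long-run average. For the quantitative part I would track the energy across attractor transitions exactly as in Lemma \ref{lemma_buchi}: the bounded number of attractor steps between successive stays in a piece contributes only a bounded energy deviation, hence a vanishing contribution to the limit average, so by picking the mixing probability \pr small enough on every piece simultaneously (taking the minimum over the finitely many pieces and the finitely many pure memoryless responses of \playerTwo) the mean-payoff in every bottom strongly connected component is pinned to the piecewise value. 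This yields, for every $s \in \winningNodes^p_{\geq 0}$ and every $\strat_2 \in \strats_2$, the two probability-one events of the statement, namely $\proba^{\strat_{1}^{rm}, \strat_{2}}_{s}\left(\mpay(\play) < -\varepsilon\right) = 1$ and $\proba^{\strat_{1}^{rm}, \strat_{2}}_{s}\left(\minpar(\play) \text{ mod } 2 = 0\right) = 1$. The main obstacle I anticipate is precisely this simultaneous choice of mixing probabilities across all pieces, coupled with proving that the attractor-induced excursions are almost surely finite so that the limiting behavior is governed solely by the piece of eventual absorption.
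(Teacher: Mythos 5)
Your overall skeleton is the same as the paper's: a well-founded induction on the pair (number of priorities, number of states), with the two-priority base cases discharged by Lemma \ref{lemma_buchi} (mean-payoff B\"uchi) and Lemma \ref{lemma_cobuchi} (mean-payoff coB\"uchi), an attractor decomposition via Proposition \ref{attract-subgame}, and a gluing of uniform randomized memoryless strategies. But there is a genuine gap in your odd case. The set $\Attr_2(U_1)$ is \emph{not} ``the region where $\playerTwo$ can force $\minpar$ odd'': it is only the region where $\playerTwo$ can force a \emph{single} visit to a priority-$1$ state, and it intersects $\winningNodes^p_{\geq 0}$ in general. So after recursing on $B = \winningNodes^p_{\geq 0} \setminus \Attr_2(U_1)$ (fewer priorities), the winning states \emph{inside} $\Attr_2(U_1)$ remain uncovered by your pieces. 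The paper handles them in two further steps: from $A_1 = \Attr_1(B)$, $\playerOne$ has a pure memoryless strategy to reach $B$ and then plays the subgame strategy; and the leftover $C = \Attr_2(U_1) \setminus \Attr_1(B)$, if non-empty, is again a subgame with the \emph{same} minimal priority but strictly fewer states, to which the state-count component of the lexicographic induction applies. You announce the double induction on priorities and states, but your sketch never actually uses the fewer-states recursion --- and this $C$-case is precisely where it is indispensable.

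There is a second soundness problem in your even case. Lemma \ref{lemma_buchi} yields strategies only on the winning region of the mean-payoff \emph{B\"uchi} objective, whereas a state of $\Attr_1(F) \cap \winningNodes^p_{\geq 0}$ need not belong to it: such a state may win the mean-payoff parity objective only by eventually settling in a region of some higher even priority with good payoff, while every way of forcing repeated visits to $F$ ruins the mean-payoff. So declaring the attractor piece ``a genuine mean-payoff B\"uchi game'' is unsound. The paper avoids this by a case split: if $\Attr_1(U_0)$ covers all of $\winningNodes^p_{\geq 0}$, then the \emph{whole} region is mean-payoff B\"uchi winning and Lemma \ref{lemma_buchi} applies globally; otherwise it recurses only on the trap $B = \winningNodes^p_{\geq 0} \setminus \Attr_1(U_0)$, which contains no priority-$0$ states and hence has fewer priorities. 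Relatedly, your absorption claim (``the play is almost surely absorbed into a single piece'') is false in general: $\playerTwo$ may shuttle between pieces infinitely often, e.g.\ re-entering the attractor region after each forced visit to $U_0$. The correct observation is not absorption but that infinite oscillation necessarily passes through the attractor target infinitely often --- which in the even case directly yields the parity condition, leaving only the mean-payoff to be controlled by the choice of mixing probabilities, taken small enough uniformly over the finitely many pieces and the finitely many pure memoryless responses of $\playerTwo$ (this last quantification you do state correctly).
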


\begin{proof}
The proof is by induction on the lexicographic order $\nodeLess$ on games, defined as follows: $\gamePar^1 \nodeLess \gamePar^2$ if $\gamePar^1$ has less priorities than $\gamePar^2$ or $\gamePar^1$ has the same priorities than in $\gamePar^2$ but less states. Clearly, this lexicographic order is well-founded. 

The base cases are twofold: one for the number of states, and one for priorities. First, if the game is such that $\vert\states\vert = 1$, then obviously, if $\playerOne$ can win, he can do so with a pure memoryless strategy, which respects the claim. Second, for two priorities. W.l.o.g., we can assume that all priorities are either in $\{0,1\}$ or in $\{1,2\}$. Those cases resp. correspond to mean-payoff B\"uchi and mean-payoff coB\"uchi games. The result for mean-payoff B\"uchi games has been established in Lemma \ref{lemma_buchi}, while the result for mean-payoff coB\"uchi games is a direct consequence of Lemma \ref{lemma_cobuchi}, as pure memoryless strategies are a special case of randomized memoryless strategies.

Let us now consider the inductive case. Suppose we have a mean-payoff parity game $\gamePar$ with $\priorities$ priorities and $\vert\states\vert$ states. W.l.o.g., we can make the assumption that the lowest priority in $\gamePar$ is either $0$ or $1$, otherwise we subtract an even number to all priorities so that we are in that case. Let $U_0=\{ s \in \winningNodes^p_{\geq 0} \mid p(s)=0 \}$ and $U_1=\{ s \in \winningNodes^p_{\geq 0} \mid p(s)=1 \}$.

\vspace{4mm}\noindent We consider the two possible following situations corresponding to $U_0$ empty or not.
\begin{enumerate}
	\item \textit{$U_0$ empty.} In that case $U_1$ is not empty. Let us consider $A_2=\Attr_2(U_1)$ the attractor of $\playerTwo$ for $U_1$. It must be the case that $\winningNodes^p_{\geq 0} \setminus A_2$ is non-empty, otherwise this would contradict the fact that $\playerOne$ is winning the parity objective from states in $\winningNodes^p_{\geq 0}$. Indeed, if it was not the case, then $\playerTwo$ would be able to force an infinite number of visits to $U_1$ from all states in $\winningNodes^p_{\geq 0}$, and the parity would be odd as $U_{0}$ is empty, a contradiction with the definition of $\winningNodes^p_{\geq 0}$. 
\textit{(i)} Let $B= \winningNodes^p_{\geq 0} \setminus A_2$. First note that, as $B$ is non-empty, by Proposition \ref{attract-subgame}, $\gamePar \downarrow B$ is a subgame. Also, note that from all states in $B$, it must be the case that $\playerOne$ has a winning strategy that does not require visits of the states outside $B$, i.e., states in $A_2$, for otherwise this would lead to a contradiction with the fact that $\playerOne$ is winning the parity objective in $\winningNodes^p_{\geq 0}$. So all states in the subgame $\gamePar \downarrow B$ are winning for $\playerOne$. The game $\gamePar \downarrow B$ does not contain states with priority $0$, and so we can apply our induction hypothesis to conclude that $\playerOne$ has a memoryless randomized strategy from all states in $B$, as $(\gamePar \downarrow B) \nodeLess \gamePar$ since it has one less priority. \textit{(ii)} Now, let us concentrate on states in $A_2$. Let $A_1=\Attr_1(B)$. From states in $A_1$, $\playerOne$ has a pure memoryless strategy to reach states in $B$, and so from there $\playerOne$ can play as in $\gamePar \downarrow B$, and we are done. Let $C=A_2 \setminus A_1$. If $C$ is empty, we are done. Otherwise, by Proposition \ref{attract-subgame}, $\gamePar \downarrow C$ is a subgame ($\playerTwo$ can force to stay within $C$). We conclude that all states in this game must be winning for $\playerOne$. This game has the same minimal priority than in the original game (i.e., priority $1$) but it has at least one state less, and so we can apply our induction hypothesis to conclude that $\playerOne$ has a memoryless randomized strategy from all states in $C$. Therefore, by \textit{(i)} and \textit{(ii)}, $\playerOne$ has a memoryless randomized strategy from all states in $\winningNodes^p_{\geq 0}$, which proves the claim in that case. 

\item \textit{$U_0$ is not empty.} Let us consider $A_1=\Attr_1(U_0)$. \textit{(iii)} First, consider the case where $A_1=\winningNodes^p_{\geq 0}$. In this case, it means that $\playerOne$ can force a visit to states in $U_0$ from any states in $\winningNodes^p_{\geq 0}$. So, we conclude that $\playerOne$ wins in $\gamePar$ the mean-payoff B\"uchi game with threshold $0$, and by Lemma \ref{lemma_buchi}, we conclude that $\playerOne$ has a memoryless randomized strategy from all states in $\gamePar$ for almost surely winning the parity game with mean-payoff threshold $0$ so we are done. \textit{(iv)} Second, consider the case where $B=\winningNodes^p_{\geq 0} \setminus A_1$ is non-empty. Then by Proposition~\ref{attract-subgame}, $\gamePar \downarrow B$ is a subgame. So $\playerTwo$ can force to stay within $B$ in the original game and so we conclude that all states in the game $\gamePar \downarrow B$ are winning for $\playerOne$. As $\gamePar \downarrow B$ does not contain states of priority $0$, and thus has at least one less priority, we can apply the induction hypothesis to conclude that $\playerOne$ has a memoryless randomized strategy from all states in $B$. Therefore, by \textit{(iii)} and \textit{(iv)}, $\playerOne$ has a memoryless randomized strategy from all states in $\winningNodes^p_{\geq 0}$, which also proves the case.
\end{enumerate}
As we have proved the claim in both possible cases, this concludes the proof.\qed
\end{proof}

\subsection{\textbf{Summary for randomization}}

We sum up results for these different classes of games in Theorem \ref{thm_rdm} (cf. Table \ref{tab:randomized}).

\begin{theorem}[Trading finite memory for randomness]
\label{thm_rdm}
The following assertions hold: (1) Randomized strategies are exactly as powerful as pure strategies for energy objectives. Randomized memoryless strategies are not as powerful as pure finite-memory strategies for almost-sure winning in one-player and two-player energy, multi energy, energy parity and multi energy parity games. (2) Randomized memoryless strategies are not as powerful as pure finite-memory strategies for almost-sure winning in two-player multi mean-payoff games. (3) In one-player multi mean-payoff parity games, and two-player single mean-payoff parity games, if there exists a pure finite-memory sure winning
strategy, then there exists a randomized memoryless almost-sure winning strategy.
\end{theorem}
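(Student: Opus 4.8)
The plan is to treat Theorem~\ref{thm_rdm} as a consolidation of the lemmas established throughout this section: each of the three assertions reduces to an appropriate combination of the preceding results, so the real content lies in matching each claim to the right lemma and in bridging two small gaps, namely \emph{memoryless versus finite memory} on the negative side and \emph{$\varepsilon$-optimality versus exact winning} on the positive side. I would organize the argument assertion by assertion.

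For assertion~(1), the inclusion ``pure $\subseteq$ randomized'' is immediate since every pure strategy is a degenerate randomized one, and the converse is exactly Lemma~\ref{lemma_enRand}, which turns any finite-memory randomized almost-sure winning strategy for an energy, multi energy, energy parity or multi energy parity objective into a pure winning strategy with the same memory; together these give the claimed exact equivalence of the two classes. For the negative half of~(1), I would apply Lemma~\ref{lemma_enRand} in the special case $\vert M\vert=1$ to observe that a randomized \emph{memoryless} strategy wins iff the corresponding pure \emph{memoryless} one does, so it suffices to exhibit games in these classes where $\playerOne$ wins with pure finite memory but no pure memoryless strategy does. The family of Lemma~\ref{lemma_expFamily} (multi energy, weights in $\lbrace -1,0,1\rbrace$) is such a witness since it even forces exponential memory, and extending it with the trivial priority function $\parity\equiv 0$ covers the parity variants as well.

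Assertion~(2) is precisely the two-player part of Lemma~\ref{lemma_multiMP}: the game of Fig.~\ref{fig:multiMP} admits a pure finite-memory strategy of $\playerOne$ securing mean-payoff $(0,0)$, yet for every randomized memoryless strategy of $\playerOne$, $\playerTwo$ has a memoryless counter-strategy forcing the expectation strictly below $(0,0)$ on one dimension; via Lemma~\ref{lemma_satis_expect} this also rules out $1$-satisfaction. I would simply invoke that lemma.

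For assertion~(3) I would split into its two cases. The one-player multi mean-payoff parity case is the first part of Lemma~\ref{lemma_multiMP}, which constructs from the regular outcome of a pure finite-memory strategy a randomized memoryless strategy realizing the identical long-run edge frequencies, hence the same mean-payoff and the same set of infinitely visited states (preserving parity), almost surely; thus sure winning by pure finite memory yields $1$-satisfaction by a randomized memoryless strategy. The two-player single mean-payoff parity case follows from Lemma~\ref{lemma_randPar}, after normalizing the threshold to $0$ by shifting the weights and noting $\initState\in\winningNodes^p_{\geq 0}$. The hard part, and the only genuine subtlety in assembling the theorem, is reconciling the $\varepsilon$-optimality delivered by Lemma~\ref{lemma_randPar} with the exact almost-sure winning demanded here: a pure finite-memory strategy that surely wins for threshold $v$ certifies $v\le\underline{\mpvalue}$, and the very existence of such a finite-memory witness rules out the unattainable-supremum pathology of Fig.~\ref{fig:mpp}. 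I would therefore fix $\varepsilon>0$ small enough (any $\varepsilon\le\underline{\mpvalue}-v$ when the inequality is strict) so that the randomized memoryless strategy of Lemma~\ref{lemma_randPar} almost surely guarantees mean-payoff at least $v$ together with the parity condition, yielding an almost-sure winning strategy. Collecting the three assertions then completes the proof.
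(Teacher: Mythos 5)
Your decomposition is the same as the paper's: the printed proof also disposes of the theorem by citing Lemma~\ref{lemma_enRand} for assertion~(1), Lemma~\ref{lemma_multiMP} for assertion~(2) and for the one-player half of assertion~(3), and Lemma~\ref{lemma_randPar} for the two-player single mean-payoff parity half of assertion~(3). In fact you are more explicit than the paper in the two places where its one-line proof is silently incomplete. For the negative half of~(1), the paper cites only Lemma~\ref{lemma_enRand}, which by itself yields no separation; your observation that Lemma~\ref{lemma_enRand} specialised to $\vert M\vert=1$ collapses randomized memoryless almost-sure winning to pure memoryless sure winning, so that the family of Lemma~\ref{lemma_expFamily} (with the trivial priority function for the parity variants) provides the required witness, is exactly the glue the paper leaves implicit. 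Likewise, you correctly isolate the only genuine subtlety in assertion~(3): Lemma~\ref{lemma_randPar} delivers $\varepsilon$-optimality (almost-sure mean-payoff at least $-\varepsilon$ after normalisation) rather than exact almost-sure winning, a gap the paper papers over by calling the claim a ``direct consequence.''

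The one soft spot is your treatment of that last gap. Choosing $\varepsilon\le\underline{\mpvalue}-v$ settles the case where the surely-attained threshold $v$ is strictly below the value, but the existence of a pure finite-memory sure winning strategy for $v$ does \emph{not} by itself rule out $v=\underline{\mpvalue}$ (the value can be attained exactly by a finite-memory, even memoryless, strategy; what Fig.~\ref{fig:mpp} shows is only that it need not be). So your parenthetical ``when the inequality is strict'' leaves the boundary case open, and one would need a separate argument that a threshold surely attained by finite memory is never the unattained supremum of what randomized memoryless strategies achieve almost surely. To be fair, this is a defect you inherit from, rather than introduce into, the paper's own proof, which does not address it at all; modulo that shared caveat, your argument is correct and somewhat more complete than the printed one.
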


\begin{proof}
(1) For energy games, results follow from Lemma \ref{lemma_enRand}. (2) For two-player multi mean-payoff games, they follow from Lemma \ref{lemma_multiMP}. (3) For one-player multi mean-payoff games, they follow from Lemma \ref{lemma_multiMP}. For two-player single mean-payoff parity, they are direct consequence of Lemma \ref{lemma_randPar}.\qed
\end{proof}

We close this section by observing that there are even more powerful classes of strategies. Their study, as well as their practical interest, remains open.

\begin{lemma}
\label{lemma_rfm}
Randomized finite-memory strategies are strictly more powerful than both randomized memoryless and pure finite-memory strategies for multi-mean payoff games with expectation semantics, even in the one-player case.
\end{lemma}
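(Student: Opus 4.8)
The plan is to exhibit a single one-player two-dimensional mean-payoff game together with an expectation threshold $\beta$ that a randomized finite-memory strategy can guarantee but that neither a randomized memoryless nor a pure finite-memory strategy can. The conceptual reason is that the set of expected mean-payoff vectors achievable by randomized memoryless strategies is in general \emph{non-convex} (the long-run frequency of an edge is a \emph{ratio} fixed by the stationary distribution, hence a nonlinear function of the chosen probabilities), whereas randomized finite-memory strategies can realize arbitrary convex combinations of finitely many regular behaviors by an initial random choice of ``mode''. Exploiting this non-convexity immediately yields a separating witness.

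Concretely, I would take the one-player game $\game$ with $\statesTwo=\emptyset$, states $\statesOne=\{s_1,s_2\}$, initial state $s_1$, dimension $\dimension=2$, and edges: a self-loop on $s_1$ of weight $(1,0)$, a self-loop on $s_2$ of weight $(0,1)$, and two zero-weight edges $(s_1,s_2)$ and $(s_2,s_1)$; set the threshold $\beta=(\tfrac12,\tfrac12)$. First, $\beta$ is achieved by a randomized finite-memory strategy: using one bit of memory, $\playerOne$ flips a fair coin once and then commits forever, with probability $\tfrac12$ looping on $s_1$ (payoff $(1,0)$) and with probability $\tfrac12$ moving to $s_2$ and looping there (payoff $(0,1)$), for expectation $\tfrac12(1,0)+\tfrac12(0,1)=(\tfrac12,\tfrac12)\ge\beta$. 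Second, no pure finite-memory strategy attains $\expect(\mpay)\ge\beta$: by the one-player part of Lemma~\ref{lemma_multiMP}, every expectation achieved by a pure finite-memory strategy is achieved by some randomized memoryless one, so it suffices to rule out the latter.

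The core of the argument is therefore to show that no randomized memoryless strategy attains $\expect(\mpay)\ge(\tfrac12,\tfrac12)$. Such a strategy is described by two parameters $a,b\in[0,1]$, the probabilities of looping on $s_1$ and on $s_2$. When $a,b<1$ the induced Markov chain is irreducible on $\{s_1,s_2\}$; its stationary distribution is $\nu(s_1)=\tfrac{1-b}{2-a-b}$ and $\nu(s_2)=\tfrac{1-a}{2-a-b}$, and summing weight times edge-frequency gives expectation $x=\tfrac{(1-b)a}{2-a-b}$ on the first dimension and $y=\tfrac{(1-a)b}{2-a-b}$ on the second. A short manipulation shows that $x+y\ge1$ is equivalent to $(1-a)(1-b)\le0$, which is impossible for $a,b<1$; hence $x+y<1$ and the two coordinates cannot both reach $\tfrac12$. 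The two remaining (absorbing) cases $a=1$ and $b=1$ yield expectation exactly $(1,0)$ and $(0,1)$, each failing on one coordinate. Consequently $\beta$ is achievable by no randomized memoryless strategy, and hence, via Lemma~\ref{lemma_multiMP}, by no pure finite-memory strategy either, which separates randomized finite-memory strategies from both classes.

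I expect the main obstacle to be bookkeeping rather than depth: making the passage from a randomized memoryless strategy to its expected mean-payoff fully rigorous. This requires the standard fact that in the finite Markov chain induced by the strategy the mean-payoff converges almost surely to a constant on each bottom strongly connected component, so that $\expect(\mpay)$ is the reachability-weighted average of those constants, together with careful treatment of the degenerate absorbing chains arising when $a=1$ or $b=1$ (and of the fixed initial state $s_1$). The algebraic heart --- the factorization $(1-a)(1-b)\le0$ --- is elementary, so once the case analysis over BSCCs is in place the strict separation follows immediately.
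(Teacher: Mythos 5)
Your proof is correct, and it rests on the same conceptual separation as the paper's: a single initial coin flip lets a randomized finite-memory strategy realize a convex combination of two pure regular behaviors that neither memoryless randomization nor pure finite memory can mix. The difference is in the witness game and in how much work the impossibility half costs. The paper's game (Fig.~\ref{fig:rfm}) has only a \emph{one-directional} transfer edge $(s_0,s_1)$ with antagonistic loop weights $(1,-1)$ and $(-1,1)$; as a result, every pure finite-memory strategy and every randomized memoryless strategy achieves exactly $(1,-1)$ or exactly $(-1,1)$ (any positive probability on the transfer edge forces absorption in $s_1$ almost surely), so the memoryless and pure cases are each dispatched in one line and no auxiliary lemma is needed. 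Your game has transfer edges in both directions, so ruling out randomized memoryless strategies genuinely requires the stationary-distribution computation and the factorization $(1-a)(1-b)\le 0$, and you then route the pure finite-memory case through Lemma~\ref{lemma_multiMP} rather than arguing it directly (a direct argument is also easy here: a regular play $\prefix_1\cdot\prefix_2^{\omega}$ whose cycle uses $m>0$ transfer edges has coordinate sum $(n_1+n_2)/(n_1+n_2+m)<1$). Both arguments are sound; yours makes the non-convexity of the memoryless-achievable set explicit and quantitative, which is instructive, while the paper's choice of a one-way edge trivializes precisely the step you identify as the main bookkeeping burden.
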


The intuition is essentially that memory permits to achieve an exact payoff by sticking to a given side, while randomization permits to combine payoffs of pure strategies to achieve any linear combination in between.

\begin{figure}[htb]
\centering
\scalebox{1}{\begin{tikzpicture}[->,>=stealth',shorten >=1pt,auto,node
    distance=2.5cm,bend angle=45]
    \tikzstyle{p1}=[draw,circle,text centered,minimum size=8mm]
    \node[p1]  (0)  at (0, 0) {$s_{0}$};
    \node[p1]  (1) at (3, 0) {$s_{1}$};
    \coordinate[shift={(-5mm,0mm)}] (init) at (0.west);
    \path
    (init) edge (0)
    (0) edge [loop above] node [above] {$(1, -1)$} (0)
    (1) edge [loop above] node [above] {$(-1, 1)$} (1)
    (0) edge node [above] {$(0, 0)$} (1);
      \end{tikzpicture}}
      \caption{Randomized finite memory is strictly more powerful than randomized memorylessness and pure finite memory.}
\label{fig:rfm}
\end{figure}
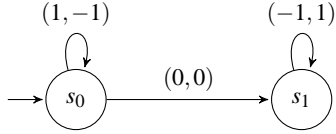 

\begin{proof}
Consider the game $\game$ depicted on Fig. \ref{fig:rfm}. Whatever the pure finite-memory strategy of $\playerOne$, the only achievable mean-payoff values are $(1,-1)$ (if $(s_{0}, s_{1})$ is never taken) and $(-1, 1)$ (if $(s_{0}, s_{1})$ is taken). This is also true for randomized memoryless strategies: either the probability of $(s_{0}, s_{1})$ is null and the mean-payoff has value $(1,-1)$, or this probability is strictly positive, and the mean-payoff has value $(-1,1)$ as the probability mass will eventually reach~$s_{1}$. On the contrary, value $(0, 0)$ is achievable by a randomized finite-memory strategy. Indeed, consider the strategy that tosses a coin in its first visit of $s_{0}$ to decide if it will always play $(s_{0}, s_{0})$ or if it will play $(s_{0}, s_{1})$ and then always $(s_{1}, s_{1})$. This strategy only needs one bit of memory and one bit to encode probabilities, and still, it is strictly more powerful than any amount of pure memory or any arbitrary high precision for probabilities without memory.\qed
\end{proof}

\section{Conclusion}
\label{conclusion}
In this work, we considered the finite-memory strategy synthesis problem for 
games with multiple quantitative (energy and mean-payoff) objectives along 
with a parity objective. 
We established tight (matching upper and lower) exponential bounds on the 
memory requirements for such strategies (Theorem~\ref{thm_optBounds}), 
significantly improving the previous triple exponential bound for multi energy games (without parity) that could be derived from results in literature for games on VASS. 
We presented an optimal symbolic and incremental strategy synthesis algorithm
(Theorem \ref{thm_symb}). As discussed in Section \ref{sec:alg}, the presented algorithm has been used as part of the synthesis tool {\sf Acacia+} for specifications combining {\sf LTL} properties and multi-dimensional quantitative objectives \cite{bohy_TACAS2013} and has proved efficient in practice. 
Finally, we also presented a precise characterization of the trade-off of 
memory for randomness in strategies (Theorem \ref{thm_rdm}).

\bibliographystyle{plain}
\bibliography{arXiv_bib}

\end{document}